\newtheorem{lemma}{Lemma}
\newtheorem{corollary}{Corollary}
\newtheorem{theorem}{Theorem}
\newtheorem{remark}{Remark}
\title{Evolutionary dynamics in repeated optional games}
\author{Fang Chen$^{1\dagger}$, Lei Zhou$^{2\dagger}$ and Long Wang$^{1*}$}
\date{
    $^1$ \footnotesize Center for Systems and Control, College of Engineering, Peking University, Beijing 100871, China\\%
    $^2$ School of Automation, Beijing Institute of Technology, Beijing 100081, China\\
    $^{\dagger}$ These authors contributed equally to this work\\
    $^*$ Corresponding author. E-mail: longwang@pku.edu.cn\\[2ex]
}
\begin{document}
\maketitle

\begin{abstract}
    Direct reciprocity facilitates the evolution of cooperation when individuals interact repeatedly. Most previous studies on direct reciprocity implicitly assume compulsory interactions. Yet, interactions are often voluntary in human societies. Here, we consider repeated optional games, where individuals can freely opt out of each interaction and rejoin later. We find that voluntary participation greatly promotes cooperation in repeated interactions, even in harsh situations where repeated compulsory games and one-shot optional games yield low cooperation rates. Moreover, we theoretically characterize all Nash equilibria that support cooperation among reactive strategies, and identify three novel classes of strategies that are error-robust, readily become equilibria, and dominate in the evolutionary dynamics. The success of these strategies hinges on the effect of opt-out: it not only avoids trapping in mutual defection but also poses additional threats to intentional defectors. Our work highlights that voluntary participation is a simple and effective mechanism to enhance cooperation in repeated interactions.
\end{abstract}

\section*{Introduction}
Humans routinely face social dilemmas where mutual cooperation is most beneficial for the group yet each group member profits more by defecting~\cite{Dawes1980}.
Classical metaphors to describe such social dilemmas include the prisoner's dilemma and the public goods game (PGG)~\cite{Groves1977, Ostrom1990, Perc2013}.
Without additional mechanisms, natural selection generally favors defection in such games, which contrasts with the reality that cooperation is ubiquitous~\cite{Nowak2006, Sigmund2010book}.
This raises a fundamental question about how cooperation evolves~\cite{Axelrod1981,axelrod1984book}.
Based on repeated interactions, one mechanism that is shown to support the evolution of cooperation is direct reciprocity~\cite{Trivers1971, Axelrod1981}, under which individuals cooperate conditionally on past interactions.
Mathematically, the logic of direct reciprocity can be conveniently described by the framework of repeated games.
Indeed, employing this framework, previous work has addressed important questions such as which strategies support cooperation and under what conditions, cooperation evolves~\cite{Friedman1973, Boyd1987,van2012direct, akccay2018collapse, Park2022, Wu2018aspiration,Donahue2020a,Reiter2018a,Chen2022geometry,tan2021payoff}.

A tacit assumption in most previous studies on direct reciprocity is that interactions are compulsory, namely, each individual should participate in every interaction. In reality, participation is often voluntary and individuals have the freedom to opt out~\cite{Earle1987, Orbell1993, Batali1995, Hauert2002}.
In this case, the underlying strategic interactions are better captured by repeated optional games, where individuals are allowed to abstain from any interaction (and also to resume participation) (see Fig.~\ref{fig:model}). When individuals choose to opt out, they become self-sufficient and obtain a payoff that is independent of others. This payoff is often set to be greater than the one received under the social trap of mutual defection and less than the social optimum with everyone cooperating~\cite{Hauert2002, Hauert2007, rand2011evolution}, encouraging individuals to opt out when mutual defection occurs and to re-establish cooperation if individuals abstain. In repeated optional games, opt-out can serve as an additional response against co-players' defection, which guarantees a safe income and avoids the risk of mutual retaliation. Based on these, opting out conditionally on past behaviors may become new leverage to force cooperation in repeated optional games.

Nonetheless, existing studies on repeated optional games fail to provide a comprehensive understanding of the role that opt-out plays in the evolution of cooperation due to (i) a presupposition of a small and incomplete set of available strategies \cite{Batali1995} and (ii) no focus on cooperation \cite{Yamamoto2019, Ahmed2020}.
So far, it is yet to be known which strategies facilitate the evolution of cooperation in repeated optional games if all possible strategies of a given complexity are considered and under what conditions, these strategies dominate. More importantly, it still remains unclear how individuals could strategically opt out to promote cooperation.
Although an interesting finding in one-shot (non-repeated) optional PGGs shows that unconditional opt-out (i.e., always opt out) can rescue cooperation if the incentive for cooperation is high~\cite{Hauert2002, Hauert2002a, Semmann2003}, unconditional strategies are easily invaded and cooperation in one-shot optional games is not stable.

Here, we systematically investigate the effect of voluntary participation on the evolution of cooperation in repeated games. For a comprehensive analysis, we conduct an exhaustive search for optimal strategies in the space of reactive strategies. Through evolutionary simulations, we show that voluntary participation leads to almost full cooperation even in situations where repeated compulsory games and one-shot optional games yield low propensities for cooperation. Resorting to equilibrium analysis, we mathematically characterize all Nash equilibria that support cooperation, and identify three novel classes of strategies that are robust to implementation errors, readily become equilibria, and dominate in the evolutionary dynamics. In the meanwhile, we find that these strategies and their behaviorally close variants account for the evolutionary advantage under voluntary participation and are thus key to the promotion of cooperation in repeated optional games.
For the success of these strategies, their effective leverage of opting out against defection is crucial: it offers a safe income that cannot be exploited, provides a way out of mutual defection, and poses additional threats to intentional defectors.
In addition, when considering the effect of opt-out payoff on cooperation, our results demonstrate that a small incentive for opt-out is enough to achieve almost full cooperation. Besides, all our findings are verified to be robust to changes in model parameters and to other model extensions (e.g., failures of opting out). Our work thus highlights that voluntary participation is a simple and effective mechanism to enhance cooperation in repeated interactions.

\begin{figure}[!ht]
    \centering
    \includegraphics[width=0.8\textwidth]{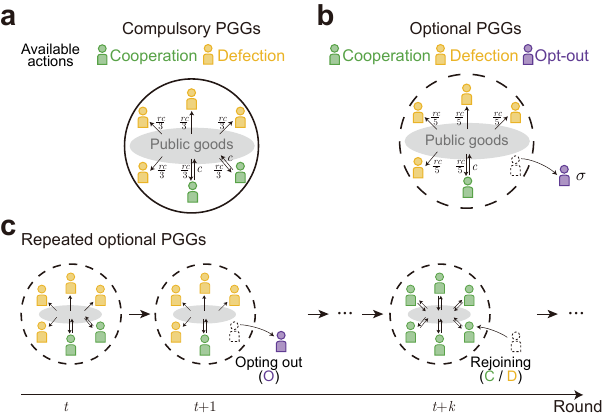}
    \caption{\textbf{In repeated optional public goods games (PGGs), individuals can additionally choose to opt out of the interaction and rejoin later.} \textbf{a}, In a compulsory PGG, each individual has to decide either to cooperate (marked as green) by contributing an amount, $c$, to the public goods or to defect (marked as yellow) by contributing nothing. The total contributions are then multiplied by a factor, $r$, and equally divided among all participants, irrespective of whether they cooperate or defect. \textbf{b}, In an optional PGG, individuals have the additional option to opt out (marked as purple) and gain a payoff $\sigma$ ($0<\sigma<(r-1)c$) that does not depend on others' actions. If there are $x$ individuals cooperating, $y$ defecting, and the number of participants $x+y>1$, an individual who cooperates gets $xrc/(x+y)-c$ and an individual who defects gets $xrc/(x+y)$. If only one individual participates in the game (i.e., $x+y=1$), the interaction is canceled and all individuals get $\sigma$. \textbf{c}, In repeated optional PGGs, individuals interact for many rounds of optional PGGs. In each round, individuals decide to cooperate, defect or opt out depending on the outcome of the previous round. Compared with repeated compulsory PGGs where individuals are required to participate in every interaction, repeated optional PGGs allow individuals to opt out of any interaction and to rejoin later.}
    \label{fig:model}
\end{figure}

\section*{Results}
\paragraph{Repeated optional games.}
In the following, we introduce the framework of repeated optional games. Here, we focus on repeated optional PGGs (see illustrations in Fig.~\ref{fig:model} and see repeated optional prisoner's dilemma games in Section 4 of the Supplementary Information). In such games, there are $n\ge 2$ individuals and they repeatedly play many rounds of optional PGGs. In every round, each individual can choose one of the three actions, to participate in the game and cooperate ($C$) by contributing an amount $c>0$ to the public goods, to participate in the game and defect ($D$) by contributing nothing, and to opt out ($O$) and obtain a fixed payoff $\sigma$. The total contributions in the public goods are then multiplied by a multiplication factor $r$ ($1<r<n$) and evenly distributed to all the participants, irrespective of whether they cooperate or defect. If there are $x$ individuals cooperating, $y$ defecting, and at least two individuals participating in the game (i.e., $2\le x+y\le n$), the payoff for an individual who cooperates, defects, and opts out is $P_C(x,y)=xrc/(x+y)-c$, $P_D(x,y)=xrc /(x+y)$, and $P_O(x,y)=\sigma$, respectively. If less than two individuals choose to participate in the game (i.e., $x+y<2$), the interaction is canceled and everyone obtains the payoff $\sigma$. Here, we assume that $0<\sigma<(n-1)c$, meaning that full cooperation is better off than full opt-out, and full opt-out is better off than full defection \cite{Batali1995, Hauert2002, Vanberg1992}.

We consider repeated optional PGGs that last for infinitely many rounds in the limit of no discounting (see discounted games in the Supplementary Information). In such games, individuals may take the whole game history into account to make a decision, and the resulting strategy can be arbitrarily complex. To make evolutionary analysis feasible, we focus on reactive strategies where current actions depend on the number of each action in the previous round~\cite{Pinheiro2014}. Let $(x,y)$ denote the game state in the previous round, where $x$ and $y$ are respectively the numbers of individuals who cooperate and defect. Let $\mathcal{G}=\{(x,y)|0\leq x,y\leq n \text{ and }0\leq x+y \leq n\}$ denote the set of all possible game states. A reactive strategy can be represented as
\begin{equation}
  \mathbf{p}=(p^C_{x,y},p^D_{x,y})_{(x,y)\in\mathcal{G}},
  \label{equ:reactiveStr}
\end{equation}
where $p^A_{x,y}\in[0,1]$ is the probability to implement action $A$ ($A\in\{C, D\}$) in the current round and $p^C_{x,y} + p^D_{x,y}\le 1$ for all $(x,y)\in\mathcal{G}$. The probability to opt out is thus $p^O_{x,y}=1-p^C_{x,y}-p^D_{x,y}$. A strategy is pure if all entries $p^A_{x,y}$ belong to the set $\{0,1\}$; otherwise, it is stochastic. We also consider the effect of trembling hands (i.e., implementation errors), where individuals may mistakenly implement another random (not intended) action with a small probability $\varepsilon/2$ where $\varepsilon\in[0,1]$. For instance, if a pure strategy $\mathbf{p}$ prescribes cooperation after full cooperation, individuals adopting this strategy may mistakenly defect or opt out of the game with probability $\varepsilon/2$, and correctly cooperate with the rest probability $1-\varepsilon$. Therefore, when errors are present ($\varepsilon>0$), the effective strategy for a pure strategy becomes stochastic and each entry $p^A_{x,y}\in\{\varepsilon/2, 1-\varepsilon\}$.

When individuals adopt strategies $\mathbf{p}_1,\mathbf{p}_2,\ldots, \mathbf{p}_n$ to play repeated optional PGGs, the game dynamics can be modeled as a Markov chain. The state space of the Markov chain is the set of all action profiles. When the effect of trembling hands is considered ($\varepsilon >0$), the Markov chain is ergodic and there exists a stationary distribution. In repeated optional PGGs, the payoff for each strategy is the average gain in this stationary distribution (see Supplementary Information for details).
\bigskip

\paragraph{Evolutionary dynamics.}
On a longer time scale, we assume that individuals change their strategies. Here, we consider the pairwise comparison process~\cite{Szabo1998,Traulsen2006,Zhou2018coevolution,Zhou2021Aspiration,Su2019,Wang2022jointliability}, where individuals imitate successful peers, in a well-mixed population of $N$ individuals. At each time step of such a process, a group of $n$ individuals is randomly selected from the population. They play repeated optional PGGs and each obtains the expected payoff $\pi$. After that, a random individual $l$ is selected to update its strategy. It either adopts a random strategy with probability $\mu$ (random exploration or mutation) or implements imitation with the rest probability $1-\mu$. If individual $l$ imitates, it randomly chooses a role model $k$ ($k\neq l$), and adopts its strategy with a probability that depends on the payoff difference between individual $k$ and $l$, i.e., $\pi^k-\pi^l$. The larger the payoff difference is, the more likely $k$ is imitated (see Supplementary Information for details). When the mutation is present ($\mu>0$), the resulting evolutionary dynamics are ergodic and it is possible to transit between any possible strategy configurations of the population. In this work, we mainly focus on the case of rare mutations ($\mu\rightarrow 0$)~\cite{Fudenberg2006}, where the evolutionary dynamics spend most of the time in homogenous populations with everyone adopting the same strategy.

\begin{figure}[!ht]
    \centering
    \includegraphics[width=0.8\textwidth]{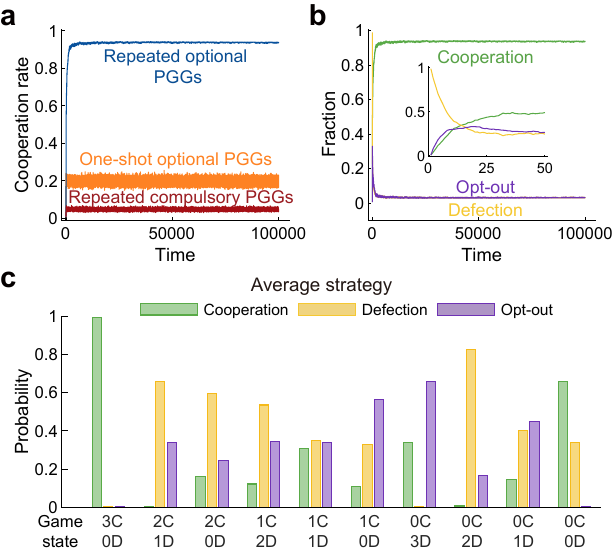}
    \caption{\textbf{Repeated optional PGGs greatly promote cooperation when repeated compulsory PGGs and one-shot optional PGGs yield low propensities for cooperation.} \textbf{a}, We analyze the evolutionary dynamics for repeated three-player optional PGGs ($n=3$). The x-axis represents the number of mutations introduced to the population (see Supplementary Information). The y-axis corresponds to the cooperation rate averaged over $1000$ independent simulations. Strategies of all individuals in the population are initialized as always defecting ($ALLD$). Different initial conditions do not alter the result (see Supplementary Fig. 1). \textbf{b}, Further analysis of repeated optional PGGs shows that the additional option, opt-out, acts as a catalyst for the emergence of cooperation: when the population is in a defective state, opt-out provides a way to avoid the deadlock of mutual defection; the cooperation rate increases as the rate of opt-out grows; eventually, the cooperation rate goes up to almost one. \textbf{c}, We calculate the average strategy that individuals use during the evolution. Each bar represents the probability of cooperation (green bar), defection (yellow bar) and opt-out (purple bar), given that there are $x$ individuals cooperating and $y$ defecting (i.e. game state $(xC,yD)$). This strategy indicates that four game states deserve attention and they are $(3C, 0D)$ (all cooperate), $(2C, 1D)$ (one defects and others cooperate), $(0C, 3D)$ (all defect), and $(0C, 0D)$ (all opt out).
    The parameter values are $N=100$, $c=1$, $r=1.4$, $\sigma=0.1$, $s=100$, and $\varepsilon=0.01$.}
    \label{fig:2}
\end{figure}

\begin{figure}[!ht]
    \centering
    \includegraphics[width=0.8\textwidth]{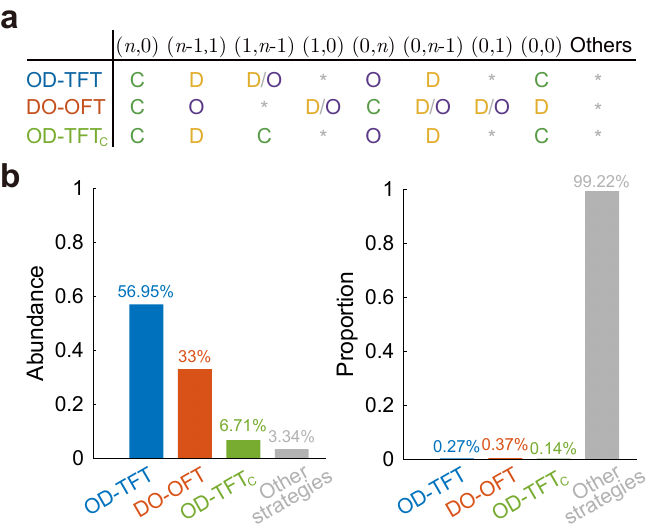}
    \caption{\textbf{Three classes of reactive strategies that readily become equilibria and dominate in the evolutionary dynamics.} \textbf{a}, We characterize all Nash equilibria that support cooperation in pure reactive strategies and identify three novel classes of strategies, $OD$-$TFT$, $DO$-$OFT$, and $OD$-$TFT_C$ (see a full list of equilibria in Supplementary Fig. 2). They are robust to implementation errors and have the lowest threshold for the multiplication factor to surpass to become equilibria. Here, the tuple $(x,y)$ in the first row of each column represents the game state in the previous round, where $x$ and $y$ are numbers of individuals who cooperate and defect, respectively. Symbols $C$, $D$ and $O$ are the actions that $OD$-$TFT$, $DO$-$OFT$ and $OD$-$TFT_C$ prescribe, given the game state $(x,y)$. Symbol $*$ means that $C$, $D$, and $O$ are all allowed. For instance, $OD$-$TFT$ prescribes cooperation after full cooperation (symbol $C$ in the column $(n,0)$) and prescribes opt-out after full defection (symbol $O$ in the column $(0,n)$).
    \textbf{b}, We calculate the abundance of $OD$-$TFT$, $DO$-$OFT$, $OD$-$TFT_C$, and all other strategies in the evolutionary simulations (measured by the average time that each class of strategies occupies the population) and also their proportions within all pure reactive strategies.
    Our results show that these three classes of strategies dominate in the evolutionary dynamics by occupying the population for more than $96\%$ of the time ($56.95\%$ for $OD$-$TFT$, $33\%$ for $DO$-$OFT$ and $6.71\%$ for $OD$-$TFT_C$), with only a proportion of $0.78\%$ ($0.27\%$ for $OD$-$TFT$, $0.37\%$ for $DO$-$OFT$ and $0.14\%$ for $OD$-$TFT_C$) in pure reactive strategies.
    Simulation results are obtained by averaging over $1000$ runs. In each run, $1\times10^5$ mutations are introduced to the population. Other parameter values are the same as those in Fig. \ref{fig:2}.}
\label{fig:strategy}
\end{figure}

\paragraph{Evolutionary advantage under voluntary participation.}
To explore the evolution of cooperation in repeated optional PGGs, we run simulations and analyze a ``melting pot'' of reactive strategies (in total, $3^{(n+1)(n+2)/2}$ strategies).
We find that voluntary participation greatly enhances the cooperation rates in repeated optional PGGs (see the blue line in Fig. \ref{fig:2}a). In contrast, under the same conditions, repeated compulsory PGGs (see the red line in Fig. \ref{fig:2}a) and one-shot optional PGGs (see the orange line in Fig. \ref{fig:2}a) only yield low propensities for cooperation.
This indicates that the combination of voluntary participation and conditional responses is conducive to cooperation.
Here, opt-out acts as a catalyst for the evolution of cooperation: it not only provides a natural way for individuals to escape from the social trap of mutual defection but also serves as a stepping stone to boost cooperation (see Fig. \ref{fig:2}b).

In addition, to understand how individuals react in each game state $(x,y)\!\in\!\mathcal{G}$, we calculate the average tendency for individuals to cooperate, defect, and opt out, i.e., the average strategy (see Fig.~\ref{fig:2}c). Our results show that the average strategy exhibits clear and interesting characteristics: (i) it supports (persistent) cooperation by prescribing cooperation after full cooperation and avoiding persistent defection (i.e., not to defect after full defection) and opt-out (i.e., not to opt out after full opt-out); (ii) it actively leverages opt-out to punish defection, such as when someone in a once fully cooperative group starts to defect (i.e., game state $(2,1)$) or when full defection occurs.

\begin{figure}[!ht]
    \centering
    \includegraphics[width=0.65\textwidth]{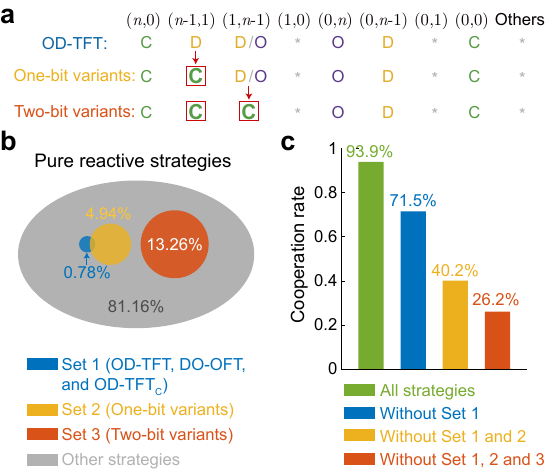}
    \caption{\textbf{$OD$-$TFT$, $DO$-$OFT$, and their behaviorally close variants are vital for the evolution of cooperation.} \textbf{a}, We call a strategy the one-bit variant of $OD$-$TFT$ (or $DO$-$OFT$) if it prescribes different actions from $OD$-$TFT$ (or $DO$-$OFT$) after only one game state $(x,y)$ (except game state $(n,0)$). A strategy is called a two-bit variant of $OD$-$TFT$ (or $DO$-$OFT$) if it prescribes different actions from $OD$-$TFT$ (or $DO$-$OFT$) for two game states (except game state $(n,0)$). Taking $OD$-$TFT$ as an example, one class of its one-bit variants is the strategies that prescribe $D$ after game state $(n-1,1)$ and take the same action as $OD$-$TFT$ after the game state $(1,n-1)$, $(0,n)$, $(0,n-1)$ and $(0,0)$. Based on this class of one-bit variants, a class of $OD$-$TFT$'s two-bit variants is the strategies that additionally prescribe $C$ after game state $(1,n-1)$ and take the same action after the game state $(0,n)$, $(0,n-1)$ and $(0,0)$. Note that $OD$-$TFT_C$ is one of the one-bit variants of $OD$-$TFT$.
    \textbf{b}, $OD$-$TFT$, $DO$-$OFT$ and their behaviorally close variants make up less than $20\%$ ($0.78\%$ for $OD$-$TFT$, $DO$-$OFT$ and $OD$-$TFT_C$, $4.94\%$ for their one-bit variants and $13.26\%$ for their two-bit variants) of all pure reactive strategies. For better illustrations, we denote the set of strategies included in $OD$-$TFT$, $DO$-$OFT$ and $OD$-$TFT_C$ as Set 1, the set included in the one-bit variants of $OD$-$TFT$ and $DO$-$OFT$ as Set 2, and that in the two-bit variants as Set 3. Since $OD$-$TFT_C$ is a one-bit variant of $OD$-$TFT$, Set 1 and Set 2 intersect at $OD$-$TFT_C$.
    \textbf{c}, To find which strategies are key to the evolution of cooperation in repeated optional PGGs, we conduct a series of ``knock-out experiments'', in which we intentionally and progressively delete strategies and see how this alters the cooperation rate. When knocking out Set 1, we find that the cooperation rate falls from $93.9\%$ to $71.5\%$, indicating that the repeated optional games can no longer yield a high level of cooperation without Set 1. When we further delete Set 1 and Set 2, the cooperation rate falls sharply to only $40.2\%$.
    Simulation settings and parameter values are the same as those in Fig. \ref{fig:2}.}
\label{fig:KnockOutExperiment}
\end{figure}
\bigskip

\paragraph{Equilibrium analysis for repeated optional PGGs.}
Based on the characteristics reflected by the average strategy, we turn to strategies that support cooperation and try to identify key strategies therein that promote the evolution of cooperation in repeated optional PGGs.
Due to the presence of implementation errors, such strategies are expected to be error-robust, which means mutual cooperation is not undermined by occasional errors if these strategies are used by all individuals in the group.
Besides, some kind of stability is also needed to ensure that these strategies are not easily invaded.
Here, we mainly consider the stability of strategies imposed by being a Nash equilibrium since evolutionary stability is generally not attainable in repeated games~\cite{Boyd1987,Bendor1995,Garcia2018} and seems less important with a large strategy space~\cite{Li2022}.
Indeed, previous studies about repeated compulsory PGGs find that a high level of cooperation is often reached if the all-or-none strategy ($AoN$) is an equilibrium; otherwise, defection is favored~\cite{Pinheiro2014, Hilbe2017}.
Specifically, $AoN$ becomes an equilibrium if $r \ge 2n/(n+1)$ \cite{Hilbe2017,hilbe2014cooperation}.

In repeated optional PGGs, to provide a complete characterization of Nash equilibria that support cooperation, we first note that if all individuals use the same pure reactive strategy, they always take the same action after any game state. Employing this, we are able to characterize all possible equilibria that support cooperation in reactive strategies (see a full list of equilibria in Supplementary Fig. 2).
By analytically deriving their associated conditions to become equilibria, we find that three novel classes of strategies, $OD$-$TFT$, $DO$-$OFT$, and $OD$-$TFT_C$ (see descriptions of these strategies in Fig.~\ref{fig:strategy}a), have the lowest threshold for the multiplication factor $r$ to surpass within all strategies that can robustly support cooperation. More importantly, this threshold to become an equilibrium is lower than that for $AoN$. In detail, there exists a region of parameters where these strategies are equilibria while $AoN$ is not. Such a region for $OD$-$TFT$, $DO$-$OFT$, and $OD$-$TFT_C$ to become equilibria is
\begin{equation}\label{conditionOptionalGames: nash}
  \frac{3n}{2n+1}\le r < \frac{2n}{n+1} \text{ and } \sigma \le \frac{2n+1}{n}rc - 3c.
\end{equation}
Note that condition (\ref{conditionOptionalGames: nash}) is not only related to $r$ but also the payoff for opt-out, i.e., $\sigma$, meaning that the superiority of $OD$-$TFT$, $DO$-$OFT$, and $OD$-$TFT_C$ over $AoN$ in the condition to become an equilibrium is realized by the additional option to opt out.

To intuitively understand this superiority, we offer the following explanations. The common characteristics of $OD$-$TFT$, $DO$-$OFT$, and $OD$-$TFT_C$ are that: (i) they keep cooperating when all individuals cooperate in the previous round, (ii) they correct errors and recover cooperation within at most three rounds, and (iii) they respond to defection by a series of ordered actions of opt-out and defection (or defection and opt-out). The third characteristic indicates, compared with $AoN$, $OD$-$TFT$, $DO$-$OFT$, and $OD$-$TFT_C$ can pose more threats to individuals who intend to defect since individuals adopting these strategies can respond to defection by not only defecting but also opting out. Such combined reactions further reduce the net gains (or even make them negative) for individuals who deviate by defecting. This eventually leads to that $OD$-$TFT$, $DO$-$OFT$, and $OD$-$TFT_C$ have a lower threshold for $r$ to surpass than $AoN$ does.

The above analysis demonstrates the superiority of $OD$-$TFT$, $DO$-$OFT$, and $OD$-$TFT_C$ in the static sense. Here, we further test their performance in the evolutionary dynamics. To this end, we calculate the average abundance of $OD$-$TFT$, $DO$-$OFT$, $OD$-$TFT_C$, and all other strategies. We find that these three classes of strategies dominate in the evolutionary dynamics, occupying the population for more than $96\%$ of the time ($56.95\%$ for $OD$-$TFT$, $33\%$ for $DO$-$OFT$, and $6.71\%$ for $OD$-$TFT_C$) with only a proportion less than $1\%$ ($0.27\%$ for $OD$-$TFT$, $0.37\%$ for $DO$-$OFT$, and $0.14\%$ for $OD$-$TFT_C$) (see Fig.~\ref{fig:strategy}b). This indicates that the prediction of the static equilibrium analysis is in well agreement with that of the evolutionary dynamics.
\bigskip

\paragraph{Key strategies for the evolutionary advantage.}
Generally, to tell which strategies are key to the evolution of cooperation, a good approach is to conduct the so-called ``knock-out experiments''~\cite{imhof2010stochastic}, in which we intentionally and progressively delete strategies and see how this alters the cooperation rate. If the cooperation rate drops substantially after deletion, it means that the strategies deleted are crucial to the evolution of cooperation.

We start our ``knock-out experiments'' by deleting $OD$-$TFT$, $DO$-$OFT$, and $OD$-$TFT_C$ that dominate in the evolutionary dynamics when they are present. Our results show that the roles that $OD$-$TFT$, $DO$-$OFT$, and $OD$-$TFT_C$ play in the evolution of cooperation is to maintain a high level of cooperation (greater than $90\%$). If they are deleted, the level of cooperation falls to $71.5\%$ (see Fig.~\ref{fig:KnockOutExperiment}c). Despite quite a fall, the cooperation rate is still much higher than that in the repeated compulsory PGGs. To further identify strategies that account for the significant enhancement of cooperation in repeated optional PGGs, we turn to strategies that are not equilibria but have very similar behavioral patterns to $OD$-$TFT$, $DO$-$OFT$, and $OD$-$TFT_C$. This leads us to strategies that are one-bit and two-bit different from $OD$-$TFT$ and $DO$-$OFT$ (see Fig.~\ref{fig:KnockOutExperiment}a for illustrations; note that $OD$-$TFT_C$ is actually one of the one-bit variants of $OD$-$TFT$). In the second and third experiments, we delete not only $OD$-$TFT$ and $DO$-$OFT$, but also these close variants (including $OD$-$TFT_C$). Our results indicate that deleting $OD$-$TFT$, $DO$-$OFT$, and their one-bit variants decreases the cooperation rate substantially (by about $54\%$, from about $94\%$ to $40\%$), and the cooperation rate further drops by about $14\%$ if the two-bit variants are deleted. This reveals that $OD$-$TFT$ and $DO$-$OFT$ and their behaviorally close variants are key to the evolution of cooperation in repeated optional PGGs.

\begin{figure}[!ht]
    \centering
    \includegraphics[width=\textwidth]{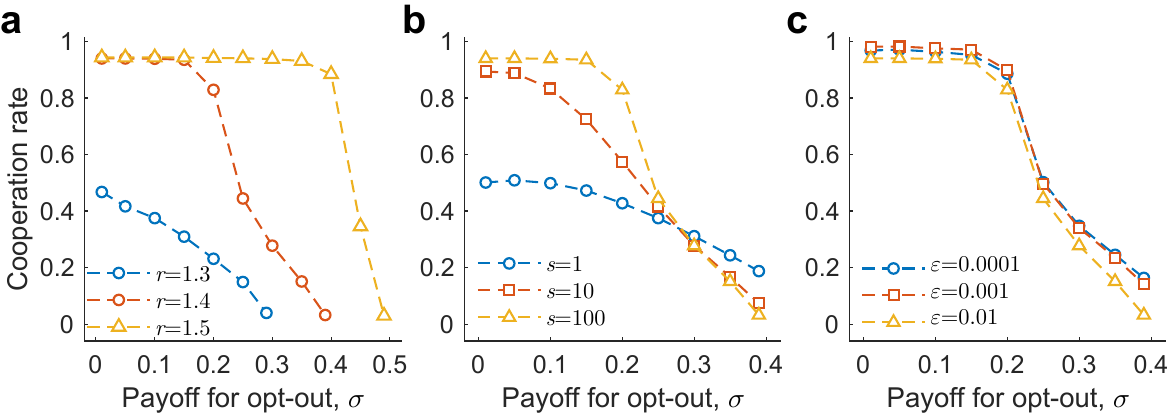}
    \caption{\textbf{A small incentive for opt-out is enough to achieve a high level of cooperation.} We plot the cooperation rate as a function of the payoff for opt-out, $\sigma$, under various multiplication factors $r$ (\textbf{a}), selection intensities $s$ (\textbf{b}), and error rates $\varepsilon$ (\textbf{c}). Our results show that to achieve a high level of cooperation, a small incentive for opt-out is enough when the multiplication factor surpasses a threshold ($r\ge 1.4$), the selection intensity is sufficiently strong ($s\ge 10$), and errors are infrequent ($\varepsilon \le 0.01$). In addition, if the payoff $\sigma$ for opt-out is too large, it inhibits the evolution of cooperation by encouraging individuals to opt out. Other parameter values except for $\sigma$ are the same as those in Fig. \ref{fig:2}.}
    \label{fig:sigma}
\end{figure}
\bigskip

\paragraph{The effect of the payoff for opt-out on cooperation.}
In our previous investigations, we find that allowing individuals to opt out significantly promotes the evolution of cooperation and we also identifies key strategies that account for such a promotive effect. A natural follow-up question would be to what extent, this promotive effect is affected by the incentive to opt out, namely, the payoff for opt-out, $\sigma$.
To answer this question, we consider the effect of $\sigma$ on the evolution of cooperation under various situations, including different multiplication factors, selection intensities, and error rates (see Fig. \ref{fig:sigma}).

Our results show that a small incentive for opt-out is enough to achieve a high level of cooperation if the multiplication factor is above a threshold ($r\ge 1.4$ in Fig. \ref{fig:sigma}a), the selection intensity is sufficiently strong ($s\ge 10$ in Fig. \ref{fig:sigma}b), and errors are infrequent ($\varepsilon \le 0.01$ in Fig. \ref{fig:sigma}c).
Moreover, we find that if the incentive for opt-out is too large, it is actually detrimental to the evolution of cooperation. The reason is that individuals now would prefer opting out and gaining a decent and safe income instead of taking the risky action of cooperation. Our results reveal that the highly cooperative population achieved in repeated optional PGGs depends largely on the permission to opt out itself while the selfish drive to gain a high profit by opting out is less relevant and even inhibits cooperation. This echoes our findings through equilibrium analysis: the smaller the payoff for opt-out is, the more severe the punishment is for individuals who deviate from mutual cooperation if others are using the equilibrium strategy such as $OD$-$TFT$, $DO$-$OFT$, or $OD$-$TFT_C$.

\begin{figure}[!ht]
    \centering
    \includegraphics[width=0.7\textwidth]{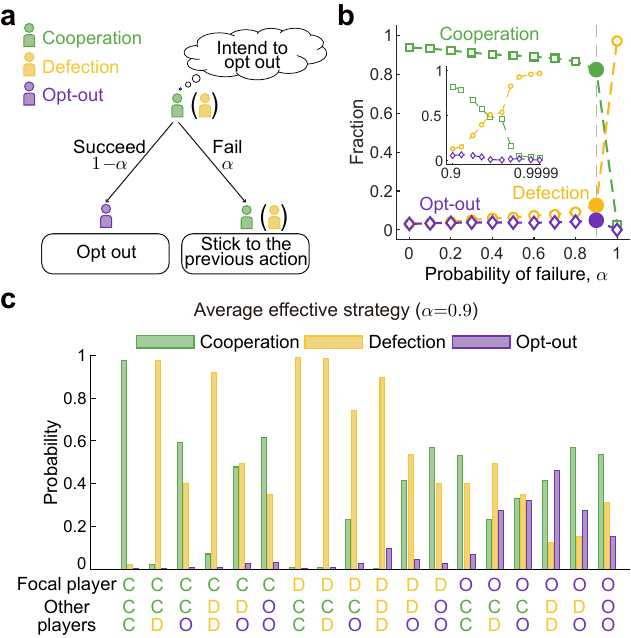}
    \caption{\textbf{Our findings are robust to the failure probability of opting out.} \textbf{a}, We consider repeated optional PGGs with possible failures of opting out. In such a game, an individual who intends to opt out in the next round will successfully do so with probability $1-\alpha$. With probability $\alpha$, it fails to opt out and sticks to its previous action. \textbf{b}, We simulate the evolutionary dynamics of such repeated optional PGGs for different failure probabilities $\alpha$. Our results show that a high level of cooperation persists for a wide range of $\alpha$ values (only substantially decreases when $\alpha>0.9$), indicating that the effect of voluntary participation on cooperation is robust to the failure probability of opting out. \textbf{c}, Further analysis on the effective strategy that dominates in the evolutionary dynamics under $\alpha=0.9$ confirms this result. Each bar represents the probability to cooperate, defect, and opt out, given that the previous action profile is $(A, BC)$ ($A$ is the action of the focal individual, $BC$ are the actions of the others). The strategy is the average effective strategy over 1000 simulations. In each simulation, $1\times10^5$ mutations are introduced to the population. We find that the dominant strategy is in some way similar to $OD$-$TFT$. Yet, compared to repeated optional PGGs which allow freely dropping out, the dominant strategy evolves an additional characteristic to restore cooperation quickly. Other parameters are the same as those in Fig. \ref{fig:2}.}
    \label{fig:risk}
\end{figure}
\bigskip

\paragraph{Repeated optional PGGs with failures of opting out.}
Beyond the basic model, we also investigate repeated optional PGGs with possible failures of opting out, where individuals who intend to drop out from the interaction may fail to do so and instead stick to its previous action (see Fig.~\ref{fig:risk}a for illustrations).
We denote the probability of such failure as $\alpha \in [0,1]$. When $\alpha = 0$, it means that individuals can freely opt out of the interaction, which reduces to the basic model. When $\alpha = 1$, individuals will always fail to opt out and the game becomes compulsory. When $\alpha \in (0,1)$, there is always a positive probability that individuals fail to opt out. This describes a situation where the action of opt-out takes a delayed effect or opting out of the interaction becomes restricted. Intuitively, it is expected that the cooperation rate will decrease as $\alpha$ increases in a steady way. Counterintuitively, our simulation results reveal that the cooperation rate drops gradually (and remains at a high level) when $\alpha<0.9$ but substantially as $\alpha>0.9$ (see Fig.~\ref{fig:risk}b). This means that the collapse of cooperation only becomes prominent when it is very much likely that individuals will fail to opt out.

To better illustrate the underlying mechanisms, we plot the average effective strategy under a high probability of opt-out failure ($\alpha=0.9$) in Fig.~\ref{fig:risk}c. We find that the average effective strategy (by considering the failure of opt-out) in some way behaves similarly to $OD$-$TFT$: it prescribes cooperation after full cooperation, it defects when one defects and everyone else cooperates, it tries to opt out if all defect and to restore cooperation if everyone drops out. In addition, due to the prevailing failures of opting out, the strategy possesses its own characteristics to restore cooperation quickly (see Supplementary Section 5.1 for details). In the case where all individuals defect in the previous round (see ($D$, $DD$) in Fig.~\ref{fig:risk}c), individuals attempt to opt out of the game. When someone opts out successfully after full defection, all individuals will take action $C$ at the same time or keep attempting to opt out (see ($O$, $DD$) and ($D$, $DO$) in Fig.~\ref{fig:risk}c). (Note that individuals take the same action after game states ($O$, $DD$) and ($D$, $DO$) when adopting the same pure reactive strategy.) When two of them opt out successfully, individuals directly return to the state of mutual cooperation by cooperating or indirectly restore cooperation by sticking to opt out and cooperating after game state ($O$, $OO$) (see ($O$, $DO$), ($D$, $OO$) and ($O$, $OO$) in Fig.~\ref{fig:risk}c). Such a quick path to restore cooperation opens up possibilities for the evolution of cooperation.

Besides, we thoroughly test the robustness of our findings in the space of stochastic and memory-one strategies, to initial strategy configurations of the population, and to various other model parameters, including multiplication factors, selection intensities, error rates, and discount factors (see Supplementary Figs. 1 and 3). The simulation results show that voluntary participation robustly promotes the evolution of cooperation under a wide range of model settings. In addition, we provide detailed investigations about the feasibility of cooperation in one-shot optional PGGs in finite populations (see Supplementary Section 5.3 for details) and find that the maximum possible level of cooperation is less than $40\%$ with the same set of parameters (except $r$ and $\sigma$) as that in Fig. 2, implying that one-shot optional PGGs can never yield a cooperation rate that is comparable to repeated optional PGGs (see Supplementary Fig. 4). At last, similar results are also found in repeated optional prisoner's dilemma games (see Supplementary Section 4 and Supplementary Fig. 5 for details), demonstrating that our finding also apply to prisoner's dilemma games.

\section*{Discussion}
In this work, we introduce a general framework of repeated optional PGGs where individuals can opt out of each interaction and resume participation later. This allows us to investigate how voluntary participation affects the evolution of cooperation in repeated interactions.
Under our framework, we show that repeated optional PGGs lead to almost full cooperation even in situations where repeated compulsory PGGs and one-shot optional PGGs yield low levels of cooperation.
This hinges on the synergistic interplay between optional participation and conditional responses. Once individuals who engage in repeated optional PGGs are forbidden to opt out, the cooperation rate plunges (see Supplementary Fig. 6), with later recovery if individuals are allowed to opt out again.
Moreover, we find that three novel classes of reactive strategies, $OD$-$TFT$, $DO$-$OFT$, and $OD$-$TFT_C$ become Nash equilibria with a lower threshold for the multiplication factor than $AoN$.
These strategies prescribe cooperation if all individuals do so in the previous round, they correct errors and restore cooperation within at most three rounds, and they respond to defection by a series of ordered actions of opt-out and defection (or defection and opt-out).
Together with their dominance in the evolutionary dynamics, $OD$-$TFT$, $DO$-$OFT$, and $OD$-$TFT_C$ are the main contributors to the high level of cooperation in repeated optional PGGs.
Besides, by conducting a series of ``knock-out experiments'', we reveal that $OD$-$TFT$, $DO$-$OFT$, and their behaviorally close variants ($OD$-$TFT_C$ included) are key to the evolution of cooperation, accounting for the evolutionary advantage under voluntary participation.
Our results thus indicate that voluntary participation is a simple and effective mechanism for the evolution of cooperation in repeated interactions.

Previous studies on one-shot, non-repeated, optional PGGs show that when the multiplication factor is less than the minimum group size (i.e., $r<2$), evolutionary dynamics eventually lead to full opt-out in infinite populations \cite{Hauert2002,Hauert2002a}. Only when the multiplication factor is no less than two, cooperation becomes feasible. The reason for this is that for a fixed multiplication factor $r$ $(\ge 2)$, the actual size of the PGG is reduced when individuals opt out; this further leads to a game that may no longer be a social dilemma, and cooperation becomes appealing~\cite{Hauert2002a}. We stress that the promotion of cooperation in repeated optional PGGs does not rely on such a mechanism: throughout the main text, $r<2$ is enforced to guarantee that every single PGG played is a social dilemma. This highlights that repeated interaction is important for facilitating cooperation in optional interactions.

In another strand of existing literature on voluntary participation, opting out of the interaction means terminating the game with the current partners and then starting another game with new ones~\cite{Izquierdo2010, Izquierdo2014, Kurokawa2019, Kurokawa2021, Kurokawa2022}. It is found that this promotes the evolution of cooperation. However, we note that the underlying mechanism for the evolution of cooperation in these studies is conditional dissociation, which is also different from ours. Under conditional dissociation, cooperators will terminate interactions with defectors and interact more often with cooperators. This process essentially leads to the assortment of cooperators, which thus facilitates the evolution of cooperation. However, in our work, conditional cooperation and the threat of nonparticipation, instead of assortment, are key to the evolution of cooperation.

In addition, the three novel classes of strategies dominating in the evolutionary dynamics, $OD$-$TFT$, $DO$-$OFT$, and $OD$-$TFT_C$, punish intentional defection like memory-two strategies do in repeated compulsory PGGs. Individuals who adopt memory-two strategies use action profiles of the previous two rounds to determine the next move. It is thus possible for them to retaliate against defection for two successive rounds, which reduces the incentives to deviate from mutual cooperation and is an essential characteristic for successful strategies (e.g., memory-two all-or-none strategy, $AoN_2$)~\cite{Hilbe2017}. Despite having a shorter memory, $OD$-$TFT$, $DO$-$OFT$, and $OD$-$TFT_C$ are also able to achieve this type of retaliation through a series of ordered actions of opt-out and defection (or defection and opt-out).
This similarity suggests that the advantage of a long memory may be in some sense realized by extending the set of available actions. This inspires us to explore whether the level of cooperation can be further improved by providing more available actions to opt out (see Supplementary Section 5.2 for details). Our results show that repeated PGGs with two additional actions of opting out indeed further increase the level of cooperation. Our findings imply that the ubiquity of cooperation in animals with low cognitive capabilities may result from a large set of available actions (e.g., hard-wired reactions).

In most modern human activities, voluntary participation is a basic right. Our work suggests that the significance of voluntary participation manifests itself not only as a mark of freedom but also as an effective means to promote human cooperation. It also indicates that previous models of compulsory games may underestimate the capability of humans to foster reciprocal cooperation, and opt-out may serve as new leverage to enforce cooperation. Due to the importance of voluntary participation, it would be interesting to explore how voluntary participation itself evolves and how the coevolution of opt-out and cooperation affects the evolutionary outcomes, which may help us better understand the evolution of human cooperation.



\pagebreak

\begin{center}
    \Large Supplementary Information for \\ \textbf{Evolutionary dynamics in repeated optional games} \\[.2cm]
    Fang Chen, Lei Zhou, Long Wang\\[.1cm]
\end{center}

\setcounter{equation}{0}
\setcounter{figure}{0}
\setcounter{table}{0}
\setcounter{page}{1}
\setcounter{section}{0}
\renewcommand{\theequation}{S\arabic{equation}}

\section{Overview}
In the following, we provide detailed derivations for the results in the main text. We first describe our model and methods in Section \ref{sec:modelandmethod}. Then, in Section \ref{sec:equilibriumanalysis}, we give an equilibrium analysis for repeated optional public goods games. We identify all pure reactive strategies that give rise to persistent cooperation, defection, and opt-out, respectively. Finally, in Section \ref{sec:further}, we investigate the evolution of cooperation in two extended models and analyze the feasibility of cooperation in one-shot optional PGGs. All proofs of our theorems are presented in the Appendix.

\section{Model and method}\label{sec:modelandmethod}
\subsection{Game setup}\label{sec:gamesetup}
We consider a repeated optional \textit{public goods game} (PGG) in a group with $n$ individuals. In each round, individuals can decide to participate in the game and cooperate ($C$), to participate in the game and defect ($D$), or to opt out of the game ($O$). Among all participants of the game, individuals who cooperate contribute an endowment, $c$, to the public good, and those who defect contribute nothing. The total contributions are then multiplied by a multiplication factor $1<r<n$, and uniformly divided among all participants. In the meanwhile, individuals who opt out of the game obtain a fixed payoff, $\sigma$, regardless of the number of individuals who cooperate, defect and opt out. For instance, if $x$ individuals cooperate, $y$ individuals defect, and there are at least two individuals participating in the game (i.e., $1<x+y\le n$), the individuals who cooperate get a payoff $rxc/(x+y)-c$, those who defect $rxc/(x+y)$, and those who opt out $\sigma$. Note that if $x+y=1$, namely, only one individual participates in the game, the game will be canceled and this individual gets a payoff of $\sigma$. In repeated optional PGGs, we assume $0<\sigma<(r-1)c$ such that full cooperation is better off than opting out, and opting out is better off than full defection~\cite{Hauert2002,Hauert2002a}.

In our work, we consider repeated optional PGGs played for infinitely many rounds and the future payoffs are discounted by a factor $\delta$ with $0<\delta<1$. This model setting can also be interpreted as an indefinitely repeated game in which the next round occurs with a probability $\delta$. In the main text and Section \ref{sec:equilibriumanalysis}, we focus on the limiting case of no discounting, $\delta\rightarrow 1$.

\subsection{Reactive strategies}
Strategies for repeated optional PGGs can be arbitrarily complex: in general, they can take all individuals' historical actions as input, and return probabilities to cooperate, defect, and opt out as output. To make the evolutionary and theoretical analysis feasible, we restrict ourselves to reactive strategies, with which an individual determines its action based on the number of individuals who cooperate and defect in the previous round~\cite{Pinheiro2014}. Here, we use $x$ ($y$) to denote the number of individuals who cooperate (defect) in the previous round. The game state in each round can be represented as a tuple $(x,y)$ where $0 \leq x+y \leq n$ is the number of participants and $n-x-y$ is the number of individuals who opt out. There are in total $\sum_{x=0}^{n}\sum_{y=0}^{n-x}1=(n+1)(n+2)/2$ game states. Formally, we represent a reactive strategy as a $2[(n+1)(n+2)+1]$-dimensional vector
\begin{equation}
    \mathbf{p}=(p_0^C,p_0^D;p_{x,y}^C,p_{x,y}^D),
    \label{equ:strategy}
\end{equation}
where $p_0^C$ ($p_0^D$) corresponds to the probability to cooperate (defect) in the first round, and $p_{x,y}^C$ ($p_{x,y}^D$) is the probability that an individual cooperates (defects), given that the game state in the previous round is $(x,y)$. Correspondingly, the probability $p_{x,y}^O$ that the individual opts out of the game is $1-p_{x,y}^C-p_{x,y}^D$. We say that a strategy $\mathbf{p}$ is \textit{pure} (\textit{deterministic}) if all entries of $\mathbf{p}$ are either zero or one; otherwise, it is \textit{stochastic}.

By setting restrictions on the strategies that individuals can choose, our model can recover the game dynamics in traditional repeated compulsory PGGs and one-shot optional PGGs as special cases. For repeated compulsory PGGs, we achieve this by forbidding the additional option of opting out, namely, by setting $p_{x,y}^C+p_{x,y}^D\equiv1$ (or $p_{x,y}^O \equiv 0$). Of course, the classical all-or-none strategy ($AoN$)~\cite{Pinheiro2014,Hilbe2017}, with which an individual cooperates if all individuals take the same action and defects otherwise, can be readily represented by setting $p_{n,0}^C=p_{0,n}^C=p_{0,0}^C=1$ and $p_{x,y}^D=1$ for all $(x,y)\notin\{(n,0),(0,n),(0,0)\}$. To recover the game dynamics in one-shot optional PGGs, we restrict the set of available strategies, leading to a set with only three strategies that contains always cooperating ($ALLC$, $p_{x,y}^C=1$ for any $x$ and $y$), always defecting ($ALLD$, $p_{x,y}^D=1$ for any $x$ and $y$) and always opting out ($ALLO$, $p_{x,y}^O=1$ for any $x$ and $y$).

In addition, we also assume that the individuals have a ``trembling hand'' or are subject to implementation errors such that they cannot execute their actions perfectly. For instance, when an individual decides to cooperate, there is a probability $\varepsilon>0$ that it implements a wrong action. Unbiasedly, the individual instead performs the other two actions with equal probability, i.e., to defect or to opt out with probability $\varepsilon/2$. Formally, for an individual with strategy $\mathbf{p}$, such a ``trembling hand'' results in an effective strategy $(1-\varepsilon)\mathbf{p}+\frac{\varepsilon}{2}(1-\mathbf{p})$. Note that if individuals use pure strategies, implementation errors make their effective strategies stochastic. In this case, long-term payoffs in an infinitely repeated game become independent of the move in the very first round. Therefore, we only need to consider simpler strategies with $\mathbf{p}=(p_{x,y}^C,p_{x,y}^D)$.

\subsection{Long-term payoffs}\label{sec:tranditionalmethod}
We denote the action profile in one round of the repeated games as $(A_1,...,A_n)$, where $A_i$ is the action of individual $i$ in that round. There are in total $3^{n}$ action profiles since each entry of the profile can be $C$, $D$, or $O$. If all individuals adopt reactive strategies, we can use a Markov chain to model the dynamics of play in a repeated optional PGG. The states of the Markov chain are all possible action profiles. Suppose that in a repeated optional PGG, individual $i$'s strategy is $\mathbf{p}^{(i)}$. Then the transition probability for the action profile to change from $(A_1,...,A_n)$ to $(A'_1,...,A'_n)$ in the next round is
\begin{equation}
    m_{(A_1,...,A_n)\rightarrow(A'_1,...,A'_n)}=\prod_{i=1}^n(p_{x,y}^{A'_i})^{(i)},
    \label{equ:transition_matrix}
\end{equation}
where $x$ and $y$ are, respectively, the number of $C$s and $D$s in $(A_1,...,A_n)$, and $(p_{x,y}^{A'_i})^{(i)}$ is the associated probability that individual $i$ implements action $A'_i\in\{C,D,O\}$. Since individuals make their decisions independently, the transition probability $m_{(A_1,...,A_n)\rightarrow(A'_1,...,A'_n)}$ is a product of the probabilities that each individual implements $A'_i$. Similarly, the probability that the action profile $(A_1,...,A_n)$ occurs in the very first round is
\begin{equation}
    v_{A_1,...,A_n}(0)=\prod_{i=1}^n (p_0^{A_i})^{(i)}.
    \label{equ:initial_distribution}
\end{equation}

To calculate the long-term payoffs, let us collect all probabilities in Eq.~\eqref{equ:transition_matrix} and build a $3^n \times 3^n$ transition matrix $\mathbf{M}$. We also use a row-vector $\mathbf{v}(t)$ to denote the probabilities that each action profile occurs in the first round (see Eq.~\eqref{equ:initial_distribution}) and denote $\mathbf{v}(t)=\mathbf{v}(0)\mathbf{M}^t$ as the distribution over all states in round $t$.

When future payoffs are not discounted, namely, $\delta = 1$, the long-term payoff is calculated as the average payoff per round. To this end, we need to compute the average distribution
\begin{equation}
    \bar{\mathbf{v}}=\lim\limits_{T\rightarrow \infty }\frac{1}{T}\sum_{t=0}^{T}\mathbf{v}(t) = \lim\limits_{T\rightarrow \infty }\frac{1}{T}\sum_{t=0}^{T}\mathbf{v}(t)\mathbf{M}^t.
\end{equation}
Under the assumption of ``trembling hands'', the average distribution $\bar{\mathbf{v}}$ defined in Eq.~(\ref{equ:average_distribution_noDiscounting}) always exists and is unique. And it equals the stationary distribution $\mathbf{v}$ of the Markov chain
\begin{equation}
    \mathbf{v}=\mathbf{vM}.
    \label{equ:average_distribution_noDiscounting}
\end{equation}

When future payoffs are discounted with $0<\delta<1$, the long-term payoff is calculated as the discounted payoff. The distribution becomes
\begin{equation}
    \tilde{\mathbf{v}}=\lim\limits_{T\rightarrow \infty }(1-\delta)\sum_{t=0}^{T}\delta^t \mathbf{v}(t)=(1-\delta)\mathbf{v}(0) (\mathbf{I}-\delta \mathbf{M})^{-1},
    \label{equ:average_distribution}
\end{equation}
where $\mathbf{I}$ is the identity matrix with a suitable dimension. The entries $\bar{v}_{A_1,...,A_n}$ ($\tilde{v}_{A_1,...,A_n}$) of the above vector can be regarded as the probability that one finds itself in state $(A_1,...,A_n)$ over the course of the play.

Let $\mathcal{V}_{m}$ denote the state where $m$ individuals participate in the game, i.e.
\begin{equation*}
    \mathcal{V}_m=\{(A_1,...,A_n)~|~\text{the number of $O$s is } n-m \}.
\end{equation*}
Furthermore, we define the subset of $\mathcal{V}_m$ in which individual $i$ takes action $A$ and $n_c$ individuals cooperate as
\begin{equation*}
    \mathcal{V}_{A,m,n_c}^i=\{(A_1,...,A_n)|A_i=A \text{ and, there are } n_c \text{ $C$s and } n-m \text{ $O$s}\}.
\end{equation*}
Then, the long-term payoff of individual $i$ can be calculated as
\begin{equation}
    \begin{aligned}
    \pi^{(i)}=&\sum_{m=2}^n\sum_{n_c=0}^m\left(\sum_{(A_1,...,A_n)\in\mathcal{V}_{C,m,n_c}^i}(\frac{n_crc}{m}-c)\hat{v}_{A_1,...,A_n}+\sum_{(A_1,...,A_n)\in\mathcal{V}_{D,m,n_c}^i}\frac{n_crc}{m}\hat{v}_{A_1,...,A_n}\right)\\
    &+\left(\sum_{(A_1,...,A_n)\in\mathcal{V}_0}\hat{v}_{A_1,...,A_n}+\sum_{(A_1,...,A_n)\in\mathcal{V}_1 1}\hat{v}_{A_1,...,A_n}\right)\sigma,
    \end{aligned}
    \label{equ:long_term_payoff}
\end{equation}
where $\hat{v}_{A_1,...,A_n}=\bar{v}_{A_1,...,A_n}$ if $\delta = 1$ and $\hat{v}_{A_1,...,A_n}=\tilde{v}_{A_1,...,A_n}$ if $0 < \delta < 1$. In the limit of no discounting $\delta\rightarrow 1$, Eq.~(\ref{equ:long_term_payoff}) with $0<\delta<1$ yields the same payoff as that with $\delta = 1$ for any individual $i$.

\subsection{State-clustering method}\label{sec:stateclustering}
To further reduce the time for numerically calculating the long-term payoffs mentioned above, we introduce the state-clustering method~\cite{Chen2022my}. Suppose there are two strategies, $\mathbf{p}$ and $\mathbf{q}$, in the group. Employing symmetry of the game, it is enough to calculate the expected payoffs for  $\mathbf{p}$ ($\mathbf{q}$) players if the number of $\mathbf{p}$ ($\mathbf{q}$) players who cooperate, defect and opt out are known. Based on this, the state-clustering method aggregates action profiles by the number of individuals who cooperate, defect and opt out among $\mathbf{p}$ players and those among $\mathbf{q}$ players. Let $x_1$, $y_1$ and $z_1$ denote the numbers of $\mathbf{p}$ individuals that cooperate, defect and opt out in $(A_1,...,A_n)$, respectively. Correspondingly, let $x_2$, $y_2$ and $z_2$ denote the numbers of $\mathbf{q}$ individuals who cooperate, defect and opt out. The state aggregated under the state-clustering method can be represented as $C_{x_1}^{\mathbf{p}}D_{y_1}^{\mathbf{p}}O_{z_1}^{\mathbf{p}}C_{x_2}^{\mathbf{q}}D_{y_2}^{\mathbf{q}}O_{z_2}^{\mathbf{q}}$. Suppose there are $k$ individuals adopting $\mathbf{p}$ and $n-k$ individuals adopting $\mathbf{q}$. There are in total $(k+1)(k+2)(n-k+1)(n-k+2)/4$ states. If the state in round $t$ is $C_{x_1}^{\mathbf{p}}D_{y_1}^{\mathbf{p}}O_{z_1}^{\mathbf{p}}C_{x_2}^{\mathbf{q}}D_{y_2}^{\mathbf{q}}O_{z_2}^{\mathbf{q}}$, the individual adopting $\mathbf{p}$ ($\mathbf{q}$) takes action $A$ in round $t+1$ with probability $p_{x_1+x_2,y_1+y_2}^A$ ($q_{x_1+x_2,y_1+y_2}^A$). Thus, the probability to move from $C_{x_1}^{\mathbf{p}}D_{y_1}^{\mathbf{p}}O_{z_1}^{\mathbf{p}}$ to $C_{x'_1}^{\mathbf{p}}D_{y'_1}^{\mathbf{p}}O_{z'_1}^{\mathbf{p}}$ is
\begin{equation}
    r_{C_{x_1}^{\mathbf{p}}D_{y_1}^{\mathbf{p}}O_{z_1}^{\mathbf{p}}\rightarrow C_{x'_1}^{\mathbf{p}}D_{y'_1}^{\mathbf{p}}O_{z'_1}^{\mathbf{p}}}=
    \tbinom{k}{x'_1}\left(p_{x_1+x_2,y_1+y_2}^C\right)^{x_1'}\tbinom{k-x_1'}{y_1'}
    \left(p_{x_1+x_2,y_1+y_2}^D\right)^{y_1'}
    \left(p_{x_1+x_2,y_1+y_2}^O\right)^{k-x_1'-y_1'}
    \label{equ:reduced_transitionp}
\end{equation}
and
the probability to move from $C_{x_2}^{\mathbf{q}}D_{y_2}^{\mathbf{q}}O_{z_2}^{\mathbf{q}}$ to $C_{x'_2}^{\mathbf{q}}D_{y'_2}^{\mathbf{q}}O_{z'_2}^{\mathbf{q}}$ is
\begin{equation}
    r_{C_{x_2}^{\mathbf{q}}D_{y_2}^{\mathbf{q}}O_{z_2}^{\mathbf{q}}\rightarrow C_{x'_2}^{\mathbf{q}}D_{y'_2}^{\mathbf{q}}O_{z'_2}^{\mathbf{q}}}=
    \tbinom{n-k}{x'_2}\left(q_{x_1+x_2,y_1+y_2}^C\right)^{x_2'}\tbinom{n-k-x_2'}{y_2'}
    \left(q_{x_1+x_2,y_1+y_2}^D\right)^{y_2'}
    \left(q_{x_1+x_2,y_1+y_2}^O\right)^{k-x_2'-y_2'}.
    \label{equ:reduced_transitionq}
\end{equation}
Combining Eq.~\eqref{equ:reduced_transitionp} and Eq.~\eqref{equ:reduced_transitionq}, the transition probability that the state move from $C_{x_1}^{\mathbf{p}}D_{y_1}^{\mathbf{p}}O_{z_1}^{\mathbf{p}}C_{x_2}^{\mathbf{q}}D_{y_2}^{\mathbf{q}}O_{z_2}^{\mathbf{q}}$ to $C_{x'_1}^{\mathbf{p}}D_{y'_1}^{\mathbf{p}}O_{z'_1}^{\mathbf{p}}C_{x'_2}^{\mathbf{q}}D_{y'_2}^{\mathbf{q}}O_{z'_2}^{\mathbf{q}}$
is
\begin{equation}
    r_{C_{x_1}^{\mathbf{p}}D_{y_1}^{\mathbf{p}}O_{z_1}^{\mathbf{p}}C_{x_2}^{\mathbf{q}}D_{y_2}^{\mathbf{q}}O_{z_2}^{\mathbf{q}} \rightarrow C_{x'_1}^{\mathbf{p}}D_{y'_1}^{\mathbf{p}}O_{z'_1}^{\mathbf{p}}C_{x'_2}^{\mathbf{q}}D_{y'_2}^{\mathbf{q}}O_{z'_2}^{\mathbf{q}}} = r_{C_{x_1}^{\mathbf{p}}D_{y_1}^{\mathbf{p}}O_{z_1}^{\mathbf{p}}\rightarrow C_{x'_1}^{\mathbf{p}}D_{y'_1}^{\mathbf{p}}O_{z'_1}^{\mathbf{p}}} \cdot r_{C_{x_2}^{\mathbf{q}}D_{y_2}^{\mathbf{q}}O_{z_2}^{\mathbf{q}}\rightarrow C_{x'_2}^{\mathbf{q}}D_{y'_2}^{\mathbf{q}}O_{z'_2}^{\mathbf{q}}}.
    \label{equ:reduced_transition}
\end{equation}

In the meanwhile, the initial probability to be in one of the $(k+1)(k+2)(n-k+1)(n-k+2)/4$ states in the very first round is given by
\begin{equation}
    \begin{aligned}
        u_{C_{x_1}^{\mathbf{p}}D_{y_1}^{\mathbf{p}}O_{z_1}^{\mathbf{p}}C_{x_2}^{\mathbf{q}}D_{y_2}^{\mathbf{q}}O_{z_2}^{\mathbf{q}}}(0)=&\tbinom{k}{x_1}\tbinom{k-x_1}{y_1}\left(p_0^C\right)^{x_1}\left(p_0^D\right)^{y_1}\left(p_0^O\right)^{z_1}\\
        &\tbinom{n-k}{x_2}\tbinom{n-k-x_2}{y_1}\left(q_0^C\right)^{x_2}\left(q_0^D\right)^{y_2}\left(q_0^O\right)^{z_2}.
    \end{aligned}
    \label{equ:reduced_initial}
\end{equation}
To calculate the long-term payoffs, we collect all probabilities in Eq.~\eqref{equ:reduced_transition} and build a transition matrix $\mathbf{R}$. Similarly, we collect all probabilities in Eq.~\eqref{equ:reduced_initial} and build a row-vector $\mathbf{u}(0)$. Moreover, let $\mathbf{u}(t)$ be the row-vector that represents the distribution over all possible states $C_{x_1}^{\mathbf{p}}D_{y_1}^{\mathbf{p}}O_{z_1}^{\mathbf{p}}C_{x_2}^{\mathbf{q}}D_{y_2}^{\mathbf{q}}O_{z_2}^{\mathbf{q}}$ in round $t$.

By replacing $\mathbf{M}$ and $\mathbf{v}$ in Eq.~(\ref{equ:average_distribution_noDiscounting}) and Eq.~(\ref{equ:average_distribution}) with $\mathbf{R}$ and $\mathbf{u}$, respectively, we obtain a distribution $\mathbf{u}$ for payoff calculations,
where its entry $u_{C_{x_1}^{\mathbf{p}}D_{y_1}^{\mathbf{p}}O_{z_1}^{\mathbf{p}}C_{x_2}^{\mathbf{q}}D_{y_2}^{\mathbf{q}}O_{z_2}^{\mathbf{q}}}$ represents the probability that one finds itself in state $C_{x_1}^{\mathbf{p}}D_{y_1}^{\mathbf{p}}O_{z_1}^{\mathbf{p}}C_{x_2}^{\mathbf{q}}D_{y_2}^{\mathbf{q}}O_{z_2}^{\mathbf{q}}$ over the course of the play. Thus, the long-term payoff of $\mathbf{p}$ players is
\begin{equation}
    \begin{aligned}
    \pi_{\mathbf{p}}=\frac{1}{k}&\sum_{x_1,y_1,z_1,x_2,y_2,z_2}u_{C_{x_1}^{\mathbf{p}}D_{y_1}^{\mathbf{p}}O_{z_1}^{\mathbf{p}}C_{x_2}^{\mathbf{q}}D_{y_2}^{\mathbf{q}}O_{z_2}^{\mathbf{q}}}\left[x_1H(n-z_1-z_2-2)[\frac{(x_1+x_2)rc}{n-z_1-z_2}-c]\right.\\
    &\left.+y_1H(n-z_1-z_2-2)\frac{(x_1+x_2)rc}{n-z_1-z_2}+(z_1 +H(2+z_1+z_2-n))\sigma\right],
    \end{aligned}
    \label{equ:longterm_payoff_p}
\end{equation}
where
\begin{equation}
    H(x)=\begin{cases}
        1, & x\geq0 \\
        0, & x<0
    \end{cases}
\end{equation}
is Heaviside step function. And the long-term payoff of $\mathbf{q}$ players is
\begin{equation}
    \begin{aligned}
        \pi_{\mathbf{q}}=\frac{1}{n-k}&\sum_{x_1,y_1,z_1,x_2,y_2,z_2}u_{C_{x_1}^{\mathbf{p}}D_{y_1}^{\mathbf{p}}O_{z_1}^{\mathbf{p}}C_{x_2}^{\mathbf{q}}D_{y_2}^{\mathbf{q}}O_{z_2}^{\mathbf{q}}}\left[x_2H(n-z_1-z_2-2)[\frac{(x_1+x_2)rc}{n-z_1-z_2}-c]\right.\\
        &\left.+y_2H(n-z_1-z_2-2)\frac{(x_1+x_2)rc}{n-z_1-z_2}+(z_2 +H(2+z_1+z_2-n))\sigma\right].
    \end{aligned}
    \label{equ:longterm_payoff_q}
\end{equation}

\subsection{Cooperation rate, defection rate and opt-out rate}\label{sec:calculaterate}
To calculate the cooperation rate, defection rate and opt-out rate of a reactive strategy $\mathbf{p}$, we set $k=n$ in the above derivations, i.e., all individuals adopting the same strategy $\mathbf{p}$. Then, the state becomes $C_{x_1}^{\mathbf{p}}D_{y_1}^{\mathbf{p}}O_{z_1}^{\mathbf{p}}$ and
the cooperation rate of strategy $\mathbf{p}$ is given by
\begin{equation}
    \gamma_C=\frac{1}{n}\sum_{x_1,y_1,z_1}x_1H(n-z_1-2)u_{C_{x_1}^{\mathbf{p}}D_{y_1}^{\mathbf{p}}O_{z_1}^{\mathbf{p}}},
    \label{equ:cooperationrate}
\end{equation}
the defection rate
\begin{equation}
    \gamma_D=\frac{1}{n}\sum_{x_1,y_1,z_1}y_1H(n-z_1-2)u_{C_{x_1}^{\mathbf{p}}D_{y_1}^{\mathbf{p}}O_{z_1}^{\mathbf{p}}},
    \label{equ:defectionrate}
\end{equation}
and the opt-out rate
\begin{equation}
    \gamma_O=\frac{1}{n}\sum_{x_1,y_1,z_1}(z_1+H(n-z_1-2))u_{C_{x_1}^{\mathbf{p}}D_{y_1}^{\mathbf{p}}O_{z_1}^{\mathbf{p}}}.
    \label{equ:optoutrate}
\end{equation}

By the state-clustering method, we only need to handle a $\frac{(n+1)(n+2)}{2}$-dimensional matrix rather than a $3^n$-dimensional matrix when calculating the cooperation rate, defection rate and opt-out rate. This reduces the computation time significantly.

\subsection{Evolutionary dynamics}\label{sec:evolutionarydyanmics}
On a longer time scale, we assume that individuals can change their strategies. Here, we focus on the pairwise comparison process where individuals either explore randomly or imitate others to adopt more profitable strategies.

Specifically, we consider a well-mixed population of $N$ individuals. Each individual is equipped with a reactive strategy. In each evolutionary step, $n$ individuals are randomly selected to engage in a repeated optional PGG. The payoff that each individual obtains in the repeated optional PGG is calculated according to Eq.~\eqref{equ:long_term_payoff} (or according to Eqs.~\eqref{equ:longterm_payoff_p} and \eqref{equ:longterm_payoff_q} if there are two different strategies in the population).
Given the strategy configuration of the population, we compute the expected payoff $\pi$ for each individual. Then, it comes to the stage of strategy updating. At each time step, an individual $i$ is randomly drawn from the population to update its strategy. This individual either explores by adopting a random reactive strategy (corresponding to mutations) with probability $\mu$ or implements imitation (corresponding to natural selection) with the rest probability $1-\mu$.
Denote the expected payoff of individual $i$ and the role model $j$ as $\pi_i$ and $\pi_j$, respectively.
If individual $i$ imitates, it randomly selects a role model $j$ ($j\neq i$) and adopts its strategy with the probability
\begin{equation}
    \phi_{ij}=\frac{1}{1+\exp[-s(\pi_j-\pi_i)]},
\end{equation}
where $s$ is called the selection intensity, quantifying the contribution of expected payoffs to strategy imitation. When $s=0$, it is the neutral drift and individual $i$ imitates $j$'s strategy with probability $1/2$, independent of their payoffs. Under strong selection, i.e., $s\rightarrow\infty$, imitation occurs only when the role model's payoff is no less than that of individual $i$; otherwise, individual $i$ keeps its strategy unchanged.

The above evolutionary process can be modeled by a Markov chain. Here, we assume that mutations are so rare that before another mutant occurs, the current mutant either takes over the population or is wiped out. In this limiting case ($\mu\rightarrow 0$), there are at most two different strategies in the population and the population stays most of its time in the homogeneous states where the population consists of only one strategy. Let us denote the mutant strategy as $M$ and the resident strategy as $R$. Note that from the perspective of strategies, the above pairwise comparison process for individuals' strategy adaptation is also a mutation-selection process for strategies, where more successful strategies are more likely to spread. In this sense, we calculate the probability that a mutant strategy successfully takes over the resident population, i.e., the fixation probability. Under the limit of rare mutations ($\mu\rightarrow 0$), the fixation probability for strategy $M$ to take over strategy $R$ is
\begin{equation}
    \rho_{R\rightarrow M}=\frac{1}{1+\sum_{i=1}^{N-1}\prod_{j=1}^i \exp\left[-s(\pi_M(j)-\pi_R(j))\right]},
    \label{equ:fixation_probability}
\end{equation}
where $\pi_M(j)$ and $\pi_R(j)$ are the expected payoff of the mutant and resident when there are $j$ mutants in the population.

For numerical investigations, we simulate the evolutionary dynamics of strategies by the method proposed in~\cite{imhof2010stochastic}. Specifically, we initialize the population as a homogeneous population where all individuals adopt a reactive strategy. Then, we introduce a randomly selected mutant to the population. The mutant either takes over the population with probability in Eq.~\eqref{equ:fixation_probability} or is wiped out with the rest probability. No matter what happens, the population becomes homogeneous again. At this time, we introduce another mutant randomly drawn from the set of reactive strategies and iterate the above process. Eventually, the evolutionary dynamics visit all possible strategies (i.e., homogeneous population states) if the simulation iterates sufficiently many times.
During the evolutionary process, the cooperation rate, defection rate and opt-out rate at each time step are calculated by applying Eq.~\eqref{equ:cooperationrate}, Eq.~\eqref{equ:defectionrate} and Eq.~\eqref{equ:optoutrate} to the very strategy that makes up the homogeneous population at that time. Moreover, the most abundant strategy is the one that occupies the population for the longest time during evolution.

\section{Equilibrium analysis for repeated optional PGGs}\label{sec:equilibriumanalysis}
In this section, we employ equilibrium analysis to characterize all Nash equilibria (more accurately, subgame perfect equilibria, SPE) in the space of reactive strategies. To find all Nash equilibria, we first prove that if all individuals adopt the same reactive strategy, the game dynamics of a repeated optional PGG can have only eight possible endings, including persistent cooperation, persistent defection, and so on. We then identify all reactive strategies that support persistent cooperation and form Nash equilibria. Among these strategies, we find three novel classes of strategies, $OD$-$TFT$, $DO$-$OFT$ and $OD$-$TFT_C$, which are robust to implementation errors and become a Nash equilibrium with the lowest threshold for the multiplication factor to surpass. Furthermore, we prove that there are no Nash equilibria giving rise to persistent defection. This implies that repeated optional PGGs do not favor defection, which sharply contrasts with repeated compulsory PGGs. Finally, we also characterize all Nash equilibria that support persistent opt-out.

\subsection{Possible endings of game dynamics}
In the following, we first analyze all possible endings of the game dynamics that a repeated optional PGG has if all individuals adopt the same pure reactive strategy. We find that there are at most eight possible endings.
\begin{theorem}[Possible endings of game dynamics in a group]
    In a repeated optional public goods game, if all group members adopt the same pure reactive strategy and always perfectly implement their actions, the corresponding game dynamics have at most eight possible endings:
    \begin{itemize}
        \item [\textbf{1},] persistent (full) cooperation;
        \item [\textbf{2},] persistent (full) defection;
        \item [\textbf{3},] persistent (full) opt-out;
        \item [\textbf{4},] alternating between full cooperation and full defection;
        \item [\textbf{5},] alternating between full cooperation and full opt-out;
        \item [\textbf{6},] alternating between full defection and full opt-out;
        \item [\textbf{7},] cycling from full cooperation to full defection, to full opt-out, and back to full cooperation;
        \item [\textbf{8},] cycling from full cooperation to full opt-out, to full defection, and back to full cooperation.
    \end{itemize}
    \label{lemma:dynamicsofgroup}
\end{theorem}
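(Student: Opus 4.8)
The plan is to exploit the fact that a pure reactive strategy, used by everyone, makes the whole group's play perfectly \emph{synchronized}, so that the genuinely $3^n$-state dynamics collapses onto a deterministic map on only three homogeneous states. I would treat this as the error-free (intended) dynamics, since it is this deterministic play whose long-run periodic behaviour constitutes an ``ending''; with trembling hands the chain is ergodic and has no such endings.

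First I would prove a synchronization lemma. Every individual conditions its move only on the common previous-round game state $(x,y)$ and applies the same deterministic rule $\mathbf{p}$, so all $n$ individuals compute the same prescribed action in every round. Because the opening move $p_0$ is also pure and identical, the realized action profile in round $1$ is already homogeneous, and by induction it stays homogeneous thereafter. Hence the game state lies in $\{(n,0),(0,n),(0,0)\}$ in every round, which I will abbreviate as full cooperation $F_C$, full defection $F_D$, and full opt-out $F_O$. For $n\ge 2$ this also disposes of the degenerate single-participant case $x+y=1$, since a homogeneous profile has $x+y\in\{0,n\}$.

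Second I would encode the induced dynamics as a deterministic self-map $f$ on the three-point set $\{F_C,F_D,F_O\}$, where $f(F_A)=F_B$ and $B\in\{C,D,O\}$ is the action $\mathbf{p}$ prescribes at state $F_A$. The ``ending'' is then exactly the eventual cycle that the orbit of $f$ enters, which is well defined because any map on a finite set yields an eventually periodic orbit. Enumerating the cycles of a map on three points finishes the argument: the recurrent part is a cycle of length $1$, $2$, or $3$. The three length-$1$ cycles are the fixed points $F_C,F_D,F_O$ (endings 1--3); the three length-$2$ cycles are the unordered pairs $\{F_C,F_D\},\{F_C,F_O\},\{F_D,F_O\}$ (endings 4--6); and the two length-$3$ cycles are the two distinct cyclic orderings $F_C\to F_D\to F_O\to F_C$ and $F_C\to F_O\to F_D\to F_C$ (endings 7--8). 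Since $3+3+2=8$, this exhausts the possibilities.

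The substantive step is the synchronization claim; once play is shown to remain homogeneous, the remainder is the elementary combinatorics of orbits on three points. The only delicate points are the treatment of the opening round, handled by the purity of the initial move (or, equivalently, by noting the first round cannot affect the eventual cycle), and the verification that the two orientations of the three-cycle are genuinely different endings, which holds because they induce different transition sequences among $F_C,F_D,F_O$.
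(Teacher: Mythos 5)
Your proposal is correct and takes essentially the same route as the paper's own proof: synchronization of all players onto the three homogeneous states $(n,0)$, $(0,n)$, $(0,0)$, determinism of the induced transition (each state has exactly one successor under a common pure reactive strategy), and enumeration of the resulting recurrent cycles. The paper's version is terser---it simply notes these two facts and refers to a figure listing the eight configurations---whereas you make the synchronization induction, the reduction to a self-map on three points, and the cycle count $3+3+2=8$ explicit, which is a welcome tightening rather than a different argument.
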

\begin{proof}
    When all group members adopt the same pure strategy and always implement their actions perfectly (i.e., without implementation errors), they take the same action in any given round. That is, in such a group, the only possible game states are $(n,0)$, $(0,n)$, and $(0,0)$. In the meanwhile, for any game state, there is one and only one game state that it can transit into, given that all individuals use the same pure strategy. Enumerating all possible transitions between these game states, we obtain eight possible group configurations, as shown in Supplementary Fig.~\ref{Sfig:8dynamics}.
\end{proof}

\subsection{Finite state automaton}
To search for Nash equilibria that support a specific ending of the game dynamics, we introduce a useful representation of pure strategies, the finite state automaton~\cite{Nowak1995,Hilbe2018b}. The finite state automaton of a strategy $\mathbf{p}$ consists of three parts: (1) the set of state, $\mathcal{S}_{\mathbf{p}}$; (2) the set of possible game state in the previous round, $\mathcal{H}=\{(x,y)\}$; and (3) the transition function $Q_{\mathbf{p}}:\mathcal{S}_{\mathbf{p}}\times\mathcal{H}\rightarrow\mathcal{S}_\mathbf{p}$. Each state, $s\in\mathcal{S}$, is associated with an action, $A_s$ ($A_s\in\{C,D,O\}$): when an individual is in the state $s$, it takes action $A_s$ in the current round. The transition function specifies the state that the individual is in after the corresponding game state, $h\in\mathcal{H}$. We can calculate the value of $Q_{\mathbf{p}}(s,h)$ according to $\mathbf{p}$ and the action associated with each state. The cardinality of the set $\mathbf{S}_{\mathbf{p}}$ measures the complexity of $\mathbf{p}$. A finite state automaton with more states means that the corresponding strategy is more complex. We find that the complexity of a pure strategy is at most three.

\begin{lemma}[Complexity of reactive strategies]
    In a repeated optional public goods game, the finite state automaton of a pure reactive strategy has at most three states, State $0$, State $1$ and State $2$, which are associated with cooperation ($C$), defection ($D$) and opt-out ($O$), respectively.
    \label{lemma:finite_complexity}
\end{lemma}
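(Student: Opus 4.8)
The plan is to construct, for an arbitrary pure reactive strategy $\mathbf{p}$, an explicit finite state automaton on the three candidate states $\{0,1,2\}$ associated with $C$, $D$, $O$, and to verify that this automaton reproduces the play of $\mathbf{p}$. The starting point is the observation that, because $\mathbf{p}$ is \emph{pure}, for every game state $(x,y)\in\mathcal{H}$ exactly one of $p^C_{x,y}, p^D_{x,y}, p^O_{x,y}$ equals $1$; hence $\mathbf{p}$ induces a well-defined map $A_{\mathbf{p}}:\mathcal{H}\to\{C,D,O\}$ sending each previous-round game state to the unique action the individual plays next. This map is the whole content of a reactive strategy, and it is what the automaton must encode.

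With this in hand, I would define the automaton by setting $\mathcal{S}_{\mathbf{p}}=\{0,1,2\}$ with associated actions $A_0=C$, $A_1=D$, $A_2=O$, and defining the transition function
\begin{equation*}
  Q_{\mathbf{p}}\bigl(s,(x,y)\bigr) = \text{the state whose associated action equals } A_{\mathbf{p}}(x,y)
\end{equation*}
for every $s\in\mathcal{S}_{\mathbf{p}}$ and every $(x,y)\in\mathcal{H}$. The feature to emphasize is that the right-hand side is \emph{independent of the incoming state} $s$: for a reactive strategy the next action is a function of the observed game state alone. I would then verify by induction on the rounds that an individual running this automaton plays exactly what $\mathbf{p}$ prescribes: if round $t$ produced game state $(x,y)$, the automaton moves into the state with action $A_{\mathbf{p}}(x,y)$ and therefore plays $A_{\mathbf{p}}(x,y)$ in round $t+1$, matching $\mathbf{p}$ by definition of $A_{\mathbf{p}}$. (Since implementation errors make long-term payoffs independent of the very first move, the choice of initial state is immaterial and need not be tracked.)

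The final step is the counting bound, and this is where I would locate the conceptual crux. The construction uses a distinct state only for each of the three actions, so the automaton has at most three states, with strictly fewer whenever $\mathbf{p}$ never prescribes some action (the corresponding state is then unreachable and may be deleted). The only point requiring genuine care is arguing that one can never be \emph{forced} to use more than three states, and this follows precisely from reactivity: because $Q_{\mathbf{p}}$ does not depend on the incoming state and depends only on $(x,y)$, no internal memory beyond ``which action am I about to take'' is ever needed, so all past histories leading to the same next action collapse into a single state. With three available actions, the complexity is capped at three, which is the claimed bound.
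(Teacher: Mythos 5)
Your proposal is correct and follows essentially the same route as the paper's proof: both construct an explicit three-state automaton whose states are keyed to the actions $C$, $D$, $O$, define the transition out of any state to be the state whose action the strategy prescribes after the observed game state $(x,y)$ (so transitions depend only on $(x,y)$, not on the incoming state), and then verify that this automaton reproduces the play of $\mathbf{p}$. The only cosmetic difference is that the paper additionally excludes infeasible state/game-state combinations via a set $\mathcal{S}_{\text{illegal}}$, while you instead make explicit the state-independence of the transition function and an inductive verification; neither difference is substantive.
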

\begin{proof}
    To prove Lemma \ref{lemma:finite_complexity}, we show that a pure reactive strategy $\mathbf{p}=(p_h)$ can be represented as the finite state automaton with the set of states, $\mathcal{S}_{\mathbf{p}}=\{0,1,2\}$, the set of possible game states, $\mathcal{H}=\{(x,y)|0\leq x+y \leq n\}$, and the transition function $Q(s,h)$, where
\begin{equation*}
    Q(s,h)=
    \begin{cases}
        0, & h\in\mathcal{H}_C \text{ and } s\in\mathcal{S}_{h} \\
        1, & h\in\mathcal{H}_D \text{ and } s\in\mathcal{S}_h \\
        2, & h\in\mathcal{H}_O \text{ and } s\in\mathcal{S}_h
    \end{cases}.
\end{equation*}
Here, the individual in State $0$, State $1$ and State $2$ cooperates, defects and opts out of the game in the current round, respectively. $\mathcal{H}_C=\{(x,y)|p_{x,y}^C=1\}$, $\mathcal{H}_D=\{(x,y)|p_{x,y}^D=1\}$, and $\mathcal{H}_O=\{(x,y)|p_{x,y}^O=1\}$ represent the set of game states after which an individual with $\mathbf{p}$ cooperates, defects and opts out in the next round, respectively. Because strategy $\mathbf{p}$ is pure, $\mathcal{H}_C\cup\mathcal{H}_D\cup\mathcal{H}_O=\mathcal{H}$. The set $\mathcal{S}_h$ is the set of all feasible states. Let $\mathcal{S}_{\text{illegal}}=\{s|s=0\text{ and }h=(0,y)\}\cup\{s|s=1\text{ and }h=(x,0)\}\cup\{s|s=2,h=(x,y),\text{ and } x+y=n\}$ denote the illegal combinations between states of the automaton $s$ and game states $h$. The set of all feasible states can be written as $\mathcal{S}_h=\mathcal{S}_{\mathbf{p}}\backslash\mathcal{S}_{\text{illegal}}$.

We now show that the sequence of actions that an individual acts according to the above finite automaton is the same as that according to strategy $\mathbf{p}$. By the definitions of $\mathcal{H}_C$, $\mathcal{H}_D$ and $\mathcal{H}_O$, an individual with $\mathbf{p}$ cooperates after $h\in\mathcal{H}_C$, defects after $h\in\mathcal{H}_D$, and opts out of the game after $h\in\mathcal{H}_O$. For an individual that acts according to the above finite state automaton, it moves to State $0$ after $\mathcal{H}_C$, to State $1$ after $\mathcal{H}_D$ and to State $2$ after $\mathcal{H}_O$. Since State $0$ is associated with cooperation, State $1$ with defection and State $2$ with opt-out, the individual acts the same as it does according to $\mathbf{p}$. Thus, $\mathbf{p}$ can be represented as the above finite state automaton, which has three states.
\end{proof}

The finite state automaton of a pure reactive strategy is finite since an individual determines its action based on the game state in the previous round. The finite state automaton contains at most three states since each individual has three options in each round, and each state corresponds to one action.

With the finite state automaton, we can describe the sequence of an individual's actions. Moreover, Lemma \ref{lemma:finite_complexity} is useful in checking whether a pure reactive strategy is an equilibrium. In the following sections, we make use of Lemma \ref{lemma:finite_complexity} to find all pure reactive strategies that give rise to specific endings of the game dynamics and form an equilibrium.

\subsection{Equilibrium in repeated optional PGGs}
In this section, we identify all pure reactive strategies that support persistent cooperation, persistent defection and persistent opt-out.
\subsubsection{Subgame perfect equilibria that support persistent cooperation}\label{sec:necooperation}
We start by identifying all pure reactive strategies that support persistent (full) cooperation (ending \emph{\textbf{1}}) and form an equilibrium. By analyzing the properties of these strategies and the conditions that they become equilibria, we may explain why repeated optional PGGs support almost full cooperation when repeated compulsory PGGs hardly promote cooperation. By resorting to Lemma \ref{lemma:finite_complexity}, we have the following theorem.
\begin{theorem}[Pure reactive strategies that support persistent cooperation and form an equilibrium]
   Consider a repeated optional public goods game with $1<r<n$ and $0<\sigma<(r-1)c$. Let $\mathbf{p}$ be a pure reactive strategy that cooperates in the first round and sticks to cooperation after all other individuals do so (i.e. $p_{n,0}^C=1$). Then, $\mathbf{p}$ is a subgame perfect equilibrium if and only if it is one of the strategies shown in Supplementary Fig.~\ref{Sfig:feasiblerangecooperation}\textbf{a} with the corresponding condition held.
    \label{thm:cooperation}
\end{theorem}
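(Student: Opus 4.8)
The plan is to characterize the subgame perfect equilibria (SPE) among pure reactive strategies that sustain persistent cooperation by means of the one-shot deviation principle, which reduces the infinite-horizon optimization faced by a potential deviator to a finite collection of single-round comparisons at the game states that can actually arise when one individual deviates against $n-1$ residents all playing $\mathbf{p}$. Because the undiscounted limit can make long-run averages insensitive to finitely many rounds, I would first carry out the analysis for a discount factor $\delta<1$, where the one-shot deviation principle applies cleanly, and only afterwards take $\delta\to1$ so that the equilibrium conditions emerge as limits of the discounted inequalities.

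First I would fix the on-path behaviour. Since $\mathbf{p}$ cooperates in the first round and $p_{n,0}^C=1$, the homogeneous dynamics remain at the game state $(n,0)$ forever, so by Theorem~\ref{lemma:dynamicsofgroup} the ending is persistent cooperation and each resident earns $(r-1)c$ per round; the content of the theorem is then precisely which response patterns deter every unilateral deviation from this path. The decisive structural observation that makes the enumeration finite is that the $n-1$ residents always act in lockstep: they all observe the same $(x,y)$ and, using a pure strategy, take the same action. Consequently the only reachable ``previous-round'' states are those in which the residents are unanimous, namely the images of $(n,0)$, $(0,n)$ and $(0,0)$ under a single focal deviation; and by Lemma~\ref{lemma:finite_complexity} the induced continuation is governed by the three automaton states, while Theorem~\ref{lemma:dynamicsofgroup} forces the punishment trajectory to be one of the eight structured endings. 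I would therefore split the analysis according to the residents' response to a single defection, $p_{n-1,1}\in\{C,D,O\}$, and to a single opt-out, $p_{n-1,0}$. Responding to unilateral defection by $C$ makes defection costless and is ruled out immediately, leaving the two retaliatory families—respond by $D$ (the $OD$-$TFT$ pattern) or by $O$ (the $DO$-$OFT$ pattern)—whose continuations both cycle back to $(n,0)$ within three rounds.

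Carrying out the comparison for the defection deviation is then direct: the focal player gains the temptation payoff $rc(n-1)/n$ in the deviation round but collects only $0$ and $\sigma$ while the group is dragged through the states $(0,n)$ and $(0,0)$ before cooperation resumes, and requiring this three-round stream not to exceed the constant stream $3(r-1)c$ rearranges to exactly condition~(\ref{conditionOptionalGames: nash}), $r\ge (\sigma+3c)n/((2n+1)c)$; the same bound arises from the $DO$-$OFT$ cycle since only the order of the $\sigma$ and $0$ terms changes. The remaining entries of Supplementary Table~\ref{tab:cooperation} I would pin down by repeating the one-shot-deviation check at the off-path states $(n-1,1)$, $(n-1,0)$, $(0,n)$ and $(0,0)$, at each insisting that the prescribed action beat the two alternatives; states that are never visited under any admissible play leave their prescriptions unconstrained, which is why the table fixes only a handful of entries. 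Collecting every inequality yields precisely the tabulated list, and reversing the argument—verifying that each tabulated strategy, under its stated condition, admits no profitable one-shot deviation at any reachable subgame—supplies the converse and hence the ``if and only if''.

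I expect the principal obstacle to be the off-path bookkeeping rather than any single inequality. One must (i) justify passing the one-shot deviation criterion through the $\delta\to1$ limit, checking that the finitely supported transient differences, rather than the (equal) long-run averages, are the correct object to sign; and (ii) exhaustively verify subgame optimality at every reachable punishment state, since the set of reachable states itself depends on $\mathbf{p}$, making the reachability analysis mildly self-referential and requiring the eight-ending classification of Theorem~\ref{lemma:dynamicsofgroup} to close the loop. Ensuring the tabulated conditions are simultaneously necessary and sufficient, tight in both directions, is where the bulk of the casework lies.
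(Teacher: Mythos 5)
Your proposal follows essentially the same route as the paper's own proof: both rely on the one-shot deviation principle applied to discounted payoffs with the limit $\delta\to1$ taken at the end, both use the lockstep/three-state-automaton reduction (the paper's Lemma on strategy complexity) so that deviations need only be checked when all residents are in State $0$, $1$, or $2$ together with their single-deviation images, and both obtain the table by collecting the resulting inequalities for necessity and verifying each entry for sufficiency, with your key computation $r\ge(\sigma+3c)n/\bigl((2n+1)c\bigr)$ for $OD$-$TFT$ and $DO$-$OFT$ matching the paper's exactly. The one loose claim—that every $D$- or $O$-retaliating equilibrium cycles back to $(n,0)$ within three rounds—is true only of the $OD$-$TFT$/$DO$-$OFT$ rows (other table rows return in two rounds, like $AoN$, or never return at all off-path), but this is harmless since it is subsumed by the row-by-row deviation bookkeeping you explicitly defer to, which is precisely what the paper carries out in its supplementary payoff tables.
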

The proofs of Theorem \ref{thm:cooperation} and all the subsequent theorems are presented in the appendix of this Supplementary Information. 

Theorem \ref{thm:cooperation} considers the case where all individuals perform the action perfectly. If individuals are subject to implementation errors, the strategies from the first to fifth row and the eleventh row in Supplementary Fig.~\ref{Sfig:feasiblerangecooperation}\textbf{a} no longer support persistent cooperation. We take the strategy in the first row as an example to explain why it is sensitive to implementation errors in supporting persistent cooperation. Suppose all individuals adopt one and the same strategy of the first row in Supplementary Fig.~\ref{Sfig:feasiblerangecooperation}\textbf{a} and cooperate in the previous round. If each of them executes its strategy perfectly, they stick to cooperation in all rounds. Once someone takes action $D$ mistakenly, they move to game state $(n-1,1)$. Without loss of generality, we denote the round that individuals are in game state $(n-1,1)$ as round $t$. Subsequently, individuals move to game state $(0,n)$ in round $t+1$, to game state $(0,0)$ in round $t+2$ and so on. Individuals switch between game states $(0,n)$ and $(0,0)$ thereafter unless another implementation error occurs. Thus, the strategies in the first row of Supplementary Fig.~\ref{Sfig:feasiblerangecooperation}\textbf{a} do not support persistent cooperation. Similarly, we have that the strategies from the second to fifth row and the eleventh row are also prone to implementation errors. In the meanwhile, we have that all other strategies listed in Supplementary Fig.~\ref{Sfig:feasiblerangecooperation}\textbf{a} can robustly support persistent cooperation. For example, for a group of individuals adopting one of the strategies from the sixth row in Supplementary Fig.~\ref{Sfig:feasiblerangecooperation}\textbf{a}, individuals defect if they move to game state $(n-1,1)$ accidentally; they subsequently opt out and restore cooperation after game state $(0,0)$. If individuals move to game state $(x,y)$ else by mistake, they move to either game state $(n,0)$, $(0,n)$ or $(0,0)$ in the next round and restore cooperation within three rounds, thus supporting persistent cooperation.

The strategy $AoN$, which gives rise to persistent cooperation in repeated compulsory PGGs~\cite{Pinheiro2014,Hilbe2017}, is a special case of the strategies in the ninth row in Supplementary Fig.~\ref{Sfig:feasiblerangecooperation}\textbf{a}. It becomes an equilibrium if
\begin{equation}
    r\geq\frac{2n}{n+1}.
\end{equation}
We plot the feasible region of the multiplication factor $r$ and the payoff for opt-out $\sigma$ for each strategy in Supplementary Fig.~\ref{Sfig:feasiblerangecooperation}\textbf{a} to form an equilibrium, as shown in Supplementary Fig.~\ref{Sfig:feasiblerangecooperation}\textbf{b}. Surprisingly, we find three classes of strategies, namely $OD$-$TFT$, $DO$-$OFT$ and $OD$-$TFT_C$ (the sixth to eighth rows in Supplementary Fig.~\ref{Sfig:feasiblerangecooperation}\textbf{a}), which become equilibria with a lower threshold for the multiplication factor $r$ than $AoN$. The region where $OD$-$TFT$, $DO$-$OFT$ and $OD$-$TFT_C$ are equilibrium while $AoN$ is not is
\begin{equation}
    \frac{3n}{2n+1} \leq r < \frac{2n}{n+1} \text{ and } \sigma \leq \frac{2n+1}{n}rc - 3c.
\end{equation}
Evolutionary analysis shows that the dynamic process visits $OD$-$TFT$, $DO$-$OFT$ and $OD$-$TFT_C$ more than $96\%$ of the time, though the probability that a randomly selected pure reactive strategy belongs to $OD$-$TFT$, $DO$-$OFT$ and $OD$-$TFT_C$ is only $0.78\%$. Combined with  ``knock-out experiments'' as shown in Fig. 4 in the main text, we find that $OD$-$TFT$, $DO$-$OFT$ and $OD$-$TFT_C$ contribute to the high level of cooperation in repeated optional PGGs. The advantage of $OD$-$TFT$, $DO$-$OFT$ and $OD$-$TFT_C$ is shown in the following corollary.
\begin{corollary}
    In a repeated optional public goods game with $1<r<n$ and $0<\sigma<(r-1)c$, $OD$-$TFT$, $DO$-$OFT$ and $OD$-$TFT_C$ become subgame perfect equilibria while $AoN$ does not if
    \begin{equation}
        \frac{3n}{2n+1} \leq r < \frac{2n}{n+1} \text{ and } \sigma \leq \frac{2n+1}{n}rc - 3c.
    \end{equation}
\end{corollary}

Indeed, $OD$-$TFT$, $DO$-$OFT$ and $OD$-$TFT_C$ have similar characteristics to memory-two strategies in repeated compulsory PGGs. An individual with a memory-two strategy conditions its actions on game states in the previous two rounds. Thus, the individual can retaliate against defection for two rounds, e.g., by adopting $AoN_2$ strategy~\cite{Hilbe2017}. Retaliation for two rounds further reduces the incentive to deviate from full cooperation, and thus makes individuals more hesitant to deviate. The memory-one strategies in repeated optional PGGs, $OD$-$TFT$, $DO$-$OFT$ and $OD$-$TFT_C$, enforce similar retaliation by ordered actions of defection and opt-out. This similarity suggests that the advantage of a long memory strategy may be realized by extending the set of available actions. In section \ref{sec:two_additional_options}, we further explore the evolution of cooperation in repeated PGGs with two kinds of actions for opting out and find that the level of cooperation increases further.

\subsubsection{Subgame perfect equilibria that support persistent defection}
In this section, we aim to search for all pure reactive strategies that give rise to persistent defection (ending \emph{\textbf{2}}) and form equilibria. The following theorem claims that there is no equilibrium supporting persistent defection.
\begin{theorem}[Existence of pure reactive strategies that favor persistent defection and form equilibria] In a repeated optional game with $1<r<n$ and $0<\sigma<(r-1)c$, no pure reactive strategy gives rise to persistent defection and form an equilibrium.
\label{thm:defection}
\end{theorem}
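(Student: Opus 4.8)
The plan is to argue by contradiction: assume some pure reactive strategy $\mathbf{p}$ gives rise to persistent defection (ending \textbf{2} of Theorem~\ref{lemma:dynamicsofgroup}) and is an equilibrium, and then exhibit an explicit profitable deviation. The whole force of the argument comes from the observation that mutual defection is the \emph{worst} participatory outcome: when all $n$ players defect the game state is $(0,n)$, and the per-round payoff of each defector is $P_D(0,n)=rc\cdot 0/n=0$. Hence, in a homogeneous population of $\mathbf{p}$-players whose dynamics settle into persistent defection, the long-run average payoff of every individual is exactly $0$; the finite transient before reaching $(0,n)$ is irrelevant in the no-discounting limit $\delta\to 1$.

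First I would make precise what ``gives rise to persistent defection'' forces on $\mathbf{p}$: the state $(0,n)$ must be absorbing, i.e.\ $p^D_{0,n}=1$, so that the continuation payoff in the subgame following $(0,n)$ equals $0$ per round. Next I would introduce the candidate deviation, namely the strategy $ALLO$ that opts out in every game state. The key step is to evaluate the deviator's payoff against $n-1$ residents still using $\mathbf{p}$: because the opt-out payoff is the fixed constant $\sigma$, received irrespective of how many of the remaining players cooperate, defect, or opt out (and irrespective of whether the residual interaction among them is cancelled), the deviator collects exactly $\sigma$ in every single round. Consequently its long-run average payoff equals $\sigma$, independently of the possibly complicated dynamics that its opt-out triggers among the residents.

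Comparing the two payoffs closes the argument: since $0<\sigma$ by hypothesis, the $ALLO$ deviator strictly outperforms a resident, $\sigma>0$. This deviation is available already in the subgame following $(0,n)$, so it simultaneously refutes the Nash property and the subgame perfection of $\mathbf{p}$, contradicting the assumption that $\mathbf{p}$ is an equilibrium. Therefore no pure reactive strategy supporting persistent defection can be an equilibrium.

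I do not expect a genuine obstacle here; the proof is short precisely because opt-out decouples the deviator's return from the residents' responses, so no finite-automaton bookkeeping (Lemma~\ref{lemma:finite_complexity}) is needed. The only points demanding care are (i) confirming that persistent defection pins the resident payoff to $0$ rather than to some positive value, which uses $x=0$ in $P_D(x,y)$, and (ii) noting that $\sigma>0$, the left inequality in $0<\sigma<(r-1)c$, is exactly what makes opting out strictly better than being trapped in mutual defection. It is worth contrasting this with repeated compulsory PGGs, where opting out is unavailable and persistent defection can be an equilibrium: the availability of the guaranteed payoff $\sigma$ is the single feature that destabilizes mutual defection here.
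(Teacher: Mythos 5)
Your proof is correct and rests on the same central observation as the paper's: persistent defection pins every resident's long-run payoff to $0$, while the opt-out action guarantees $\sigma>0$. The implementation differs, however. The paper invokes the one-shot deviation principle for subgame perfection: in the subgame where all players are in the defection state, opting out for a single round (then returning to $\mathbf{p}$) yields at least $(1-\delta)\sigma>0$ against a continuation value of $0$; implicitly this uses the fact that after the deviation all individuals again play identically under the common pure strategy, so the continuation payoff cannot be negative. You instead refute the Nash property directly with a \emph{permanent} deviation to $ALLO$, whose payoff is exactly $\sigma$ in every round regardless of how the residents react. This buys complete independence from the residents' responses---no bound on any continuation value is needed---and since every subgame perfect equilibrium is in particular a Nash equilibrium, refuting Nash suffices. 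What the paper's route buys in exchange is uniformity: the one-shot deviation principle is the same tool used for Theorems~\ref{thm:cooperation} and~\ref{thm:optout}, and it works subgame-by-subgame for any discount factor $\delta$, whereas your whole-game payoff comparison requires the no-discounting limit (or, as you correctly note, restricting attention to the subgame following $(0,n)$, where the argument goes through for any $\delta$).
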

The above theorem suggests that a population will not fall into the deadlock of mutual defection in repeated optional PGGs. The theorem can be interpreted intuitively as follows. If all individuals adopt a strategy $\mathbf{p}$ that supports persistent defection, they get the long-term payoff of zero. Compared with repeated compulsory PGGs, repeated optional PGGs provide a better option than defection, namely opt-out, which provides individuals a higher payoff of $\sigma>0$. Thus, individuals have an incentive to deviate.

\subsubsection{Subgame perfect equilibria that support persistent opt-out}
In this section, we identify all pure reactive strategies that support persistent opt-out and form equilibria. The following theorem identifies all equilibria that give rise to persistent opt-out.
\begin{theorem}[Pure reactive strategies that support persistent opt-out and form equilibria]
    Consider a repeated optional public goods game with $1<r<n$ and $0<\sigma<(r-1)c$. Let $\mathbf{p}$ be a pure reactive strategy that opts out in the first round and adheres to opt-out once all other individuals also opt out (i.e. $p_{0,0}^O=1$). Then, $\mathbf{p}$ is a subgame perfect equilibrium if and only if it belongs to the set of strategies listed in Supplementary Fig.~\ref{Sfig:feasiblerangeoptout}\textbf{a} with the corresponding condition held.
    \label{thm:optout}
\end{theorem}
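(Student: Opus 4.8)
The plan is to follow the same architecture as the proof of Theorem~\ref{thm:cooperation}, adapting it to the absorbing state $(0,0)$ in place of $(n,0)$. By Theorem~\ref{lemma:dynamicsofgroup}, when all $n$ individuals share the pure reactive strategy $\mathbf{p}$, the only on-path game state compatible with persistent opt-out (ending~\textbf{3}) is $(0,0)$, and the hypothesis $p_{0,0}^O=1$ guarantees that the automaton of Lemma~\ref{lemma:finite_complexity} rests in State~$2$; hence the equilibrium payoff of every player is $\sigma$ in the limit $\delta\to1$. To decide whether $\mathbf{p}$ is a subgame perfect equilibrium, I would invoke the one-shot deviation principle in the discounted game and then pass to $\delta\to1$: $\mathbf{p}$ is an SPE if and only if, at every game state reachable after a unilateral deviation, the action prescribed by $\mathbf{p}$ to a single deviator is a best response to the continuation induced by the remaining $n-1$ residents.

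The key structural reduction is that the $n-1$ residents all run the same pure automaton and start synchronized in State~$2$, so they act as a single bloc: at every round they play the same action, the relevant state is the pair (bloc action, deviator action), and the reachable game states form a finite graph that I can enumerate explicitly. The first observation is that a deviation at $(0,0)$ to $C$ or $D$ leaves only the deviator participating, so $x+y=1<2$, the round is canceled, and everyone still collects $\sigma$; deviating is therefore never immediately profitable and can matter only through the successor state it creates, namely $(1,0)$ or $(0,1)$. The binding constraints thus come from the residents' prescriptions $p_{1,0}$ and $p_{0,1}$ and their downstream states: the danger is that the deviator lures the bloc back into participation and then free-rides---for instance, if the bloc cooperates while the deviator defects, the deviator earns $rc(n-1)/n$, which exceeds $\sigma$ under the standing assumption $\sigma<(r-1)c$. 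For each admissible pattern of prescriptions at these off-path states I would compute the value of the deviator's Markov decision problem on the induced finite graph, require that it not exceed $\sigma$, and read off the resulting inequality in $r,\sigma,c,n$; the patterns that admit such an inequality are exactly the rows listed in Supplementary Table~\ref{tab:optout}.

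The hardest part will be the exhaustive yet careful case analysis of the off-path trajectories, for three reasons. First, I must confirm that the residents genuinely remain a synchronized bloc along every trajectory---this uses purity and the fact that they all enter State~$2$ together, but it has to be verified for each successor state rather than assumed. Second, the most profitable deviation is typically multi-round (cooperate or defect once to draw the bloc into a participating state, then defect to free-ride), so single-round comparisons are insufficient; I must instead evaluate the long-run average payoff of each candidate deviation cycle on the finite state graph and isolate the binding one. Third, translating the requirement ``no deviation cycle beats $\sigma$'' into the clean threshold conditions tabulated in Table~\ref{tab:optout}, and establishing the converse that under those conditions no deviation is profitable, is where the bulk of the algebra lies; by comparison, the remaining pieces---existence of the $(0,0)$ absorbing state and the immediate-payoff computation above---are routine.
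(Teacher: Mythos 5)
Your scaffolding (Theorem~\ref{lemma:dynamicsofgroup}, Lemma~\ref{lemma:finite_complexity}, one-shot deviations in the discounted game followed by $\delta\to1$) matches the paper's, but your equilibrium test is materially weaker than subgame perfection, and this breaks both directions of the ``if and only if.'' You verify optimality only at game states \emph{reachable from the all-opt-out path} --- $(0,0)$, then $(1,0)$ or $(0,1)$, then whatever the residents' prescriptions generate downstream. Subgame perfection quantifies over \emph{all} histories, including ones that no unilateral deviation can ever produce; the paper accordingly checks one-shot deviations in all three automaton states, in particular at histories after which every resident is about to cooperate (State 0) or to defect (State 1). Those off-path checks are exactly what generate most of Supplementary Table~\ref{tab:optout}: the payoff conditions $\sigma\geq\frac{(n-1)rc}{2n}$ (row 3) and $\sigma\geq\frac{(n+1)rc}{n}-2c$, $\sigma\geq c-\frac{rc}{n}$ (rows 4--5), as well as the structural necessities $p_{n,0}^D=0$, $p_{0,n}^D=0$, $p_{n-1,1}^C=0$ when $p_{n,0}^C=1$, and $p_{0,n-1}^D=1$ when $p_{0,n}^O=1$, all come from State 0/State 1 deviations. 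Concretely, take a row 4 strategy with $O$ chosen at both $(1,0)$ and $(0,1)$, and parameters $n=3$, $r=1.5$, $c=1$, $\sigma=0.1$. In your Markov decision problem the deviator can never lure the residents into participating (from $(0,0)$ every trajectory stays in $\{(0,0),(1,0),(0,1)\}$ with the bloc opting out), so every deviation policy earns exactly $\sigma$ and your criterion certifies the strategy. Yet it is not an SPE: in the subgame after a history ending in $(1,n-1)$ all players cooperate, and a one-shot defection there is strictly profitable because $\sigma=0.1<c-\frac{rc}{n}=0.5$ --- which is precisely why the paper's condition for rows 4--5 excludes it. What your method characterizes is (roughly) Nash equilibrium of the repeated game, not subgame perfection, so the set of strategies and conditions you would obtain differs from Table~\ref{tab:optout}.

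A secondary consequence of the same conflation: because you restrict attention to reachable states, you are forced into multi-round deviation cycles and a genuine MDP value computation. The paper never needs this. Once one checks single-round deviations at \emph{every} history --- and observes that the continuation after a deviation depends on the history only through the common prescription there, so only States 0, 1 and 2 need to be distinguished --- the whole verification collapses to the finitely many one-round payoff comparisons tabulated in Supplementary Table~\ref{tab:proofthmoptout}, plus the case analysis producing the table rows. Your synchronization (``bloc'') observation is correct and is what the paper uses implicitly (a returning deviator re-synchronizes, since reactive prescriptions depend only on the public game state $(x,y)$); it just has to be deployed in every subgame, not only downstream of $(0,0)$.
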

\begin{remark}
    The above theorem considers the strategies that opt out in the first round. For the strategies in the second row of Supplementary Fig.~\ref{Sfig:feasiblerangeoptout}\textbf{a}, even if it defects in the first round, the ending of game dynamics is still persistent opt-out. The strategies in the third, fourth, and fifth rows of Supplementary Fig.~\ref{Sfig:feasiblerangeoptout}\textbf{a} still support persistent opt-out no matter whether they cooperate or defect in the very first round.
\end{remark}
\begin{remark}
    Except for the strategy in the first and second rows of Supplementary Fig.~\ref{Sfig:feasiblerangeoptout}\textbf{a}, other strategies are robust to errors.
\end{remark}

\section{Repeated optional prisoner's dilemmas}
In the main text, we have shown that voluntary participation promotes the evolution of cooperation in repeated optional PGGs. To better compare our results with that in classical pairwise games, in the following, we apply our analysis in repeated optional prisoner's dilemmas.

In each round of repeated optional prisoner's dilemmas, if both individuals cooperate, they each get a payoff $R$. If both individuals defect, each individual receives a payoff $P$. If one individual cooperates and the other defects, the individual who cooperates gets the lowest payoff $S$ and the individual who defects obtains the highest payoff $T$. If there is an individual opting out, the game is canceled and each individual gets a payoff $\sigma$. It is assumed that mutual cooperation yields a higher payoff than mutual defection and that opt-out leads to a payoff less than mutual cooperation but higher than mutual defection (i.e. $T>R>\sigma>P>S$). In what follows, we consider the payoffs $T=b'$, $R=b'-c'$, $P=0$ and $S=-c'$, which is so-called optional ``donation games'' (DGs). Thus, the payoff matrix in an optional DG can be represented as
\begin{equation}
    \bordermatrix{& C & D & O \cr
    C & b'-c' & -c' & \sigma \cr
    D & b' & 0 & \sigma \cr
    O & \sigma & \sigma & \sigma}.
\end{equation}
In an optional PGG with two individuals, the payoff matrix is written as
\begin{equation}
    \bordermatrix{& C & D & O \cr
    C & (r-1)c & rc/2-c & \sigma \cr
    D & rc/2 & 0 & \sigma \cr
    O & \sigma & \sigma & \sigma}.
\end{equation}
Thus, an optional DG is equivalent to an optional PGG if and only if
\begin{equation}
    \begin{aligned}
        b'&=rc/2 \\
        c'&=c-rc/2
    \end{aligned}.
    \label{equ:relationshipbr}
\end{equation}

We simulate the evolutionary dynamics of repeated optional DGs, the corresponding repeated compulsory DGs and one-shot optional DGs. Similar to evolutionary results in repeated optional PGGs, repeated optional DGs boost the evolution of cooperation markedly even if the corresponding repeated compulsory DGs and one-shot optional DGs fail to (Supplementary Fig.~\ref{Sfig:donationGame}\textbf{a}).
$OD$-$TFT$ and $DO$-$OFT$ play an important role in the evolution of cooperation in repeated optional DGs (Supplementary Fig.~\ref{Sfig:donationGame}\textbf{c,d}). Evolutionary dynamics visit $OD$-$TFT$ and $DO$-$OFT$ about $98.24\%$ of time although the probability that a randomly selected strategy belongs to $OD$-$TFT$ and $DO$-$OFT$ is less than $1\%$. In repeated optional DGs, $OD$-$TFT$ and $DO$-$OFT$ are represented as shown in Supplementary Fig.~\ref{Sfig:donationGame}\textbf{b}. Substituting Eq.~\eqref{equ:relationshipbr} into Supplementary Fig.~\ref{Sfig:feasiblerangecooperation}\textbf{a}, it yields that $OD$-$TFT$ and $DO$-$OFT$ form an equilibrium if and only if
\begin{equation}
    \frac{b'}{c'}\geq\frac{\sigma}{2c'}+\frac{3}{2}.
\end{equation}
Note that $WSLS$, which is critical to the evolution of cooperation in repeated compulsory DGs, becomes Nash equilibrium if $b'/c'\geq 2$. Therefore, if $\sigma<c'$, $OD$-$TFT$ and $DO$-$OFT$ have a lower threshold to surpass to become a Nash equilibrium. In this case, repeated optional DGs are more likely to promote the evolution of cooperation. The class of strategies, $OD$-$TFT_C$, does not exist in repeated optional DGs since $OD$-$TFT_C$ needs to prescribe two different actions after game state $(1,1)$ (needs to prescribe $D$ after game state $(n-1,1)=(1,1)$ and $C$ after game state $(1,n-1)=(1,1)$).

\section{Further extensions}\label{sec:further}
In what follows, we extend our basic model by considering two scenarios: (1) individuals opt out of the game with a risk of failure (Section \ref{sec:opting_out_failure}); and (2) individuals can choose two kinds of actions to implement opt-out (Section \ref{sec:two_additional_options}). In addition, we make a detailed analysis of the evolution of cooperation in one-shot optional PGGs in Section \ref{sec:connection_to_oneshot}.

\subsection{Repeated optional PGGs with failures of opting out}
\label{sec:opting_out_failure}
In the previous sections, we explore the evolution of cooperation in repeated optional PGGs, where individuals can freely opt out of the interaction. In this section, we extend our basic model to the scenario where individuals may fail to opt out of the game. This scenario captures a delayed effect of opting out or restrictions on the use of opt-out.

In a repeated optional PGG with failure of opting out, the individual who intends to opt out fails to do so with probability $\alpha\in[0,1]$. When $\alpha=0$, individuals can timely opt out of the interaction. The model reduces to the basic model discussed in previous sections. When $\alpha=1$, individuals are forbidden to opt out. The model reduces to repeated compulsory PGGs. When $\alpha\in(0,1)$, there is always a positive probability of failure for opting out. We assume that if an individual intends to opt out but fails, it keeps implementing its previous action.

In a repeated optional PGG with failure probability $\alpha$, the effective strategy can be calculated explicitly. For an individual with a reactive strategy $\mathbf{p}=(p_{x,y}^C,p_{x,y}^D)$, its effective strategy becomes
\begin{equation}
    \mathbf{p}^{\alpha}=
    \begin{cases}
    (1-\alpha)p_{x,y}^O & \text{ if } A_i\in\{C,D\} \\
    p_{x,y}^C+\alpha p_{x,y}^O & \text{ if } A_i=C \\
    p_{x,y}^D+\alpha p_{x,y}^O & \text{ if } A_i=D \\
    p_{x,y}^A & \text{ otherwise }
    \end{cases},
\end{equation}
where $A_i$ is the action of individual $i$ in the previous round. Note that the failure of opting out happens after an individual takes action $O$. Thus, the action of an individual is influenced firstly by the ``trembling hands'', then by the probability of failure. Suppose that all individuals adopt reactive strategies. By the effective strategy, we can calculate the long-term payoffs by the state-clustering method (described in Section \ref{sec:stateclustering}).

Since individuals may fail to opt out timely, it seems that the effect of opt-out on the evolution of cooperation will be reduced, especially when the probability of failure is high. Counterintuitively, our simulation shows that the evolution of cooperation is robust to a wide range of failure probability, $\alpha$. Only in the games with a failure probability greater than $0.99$, the level of cooperation is lower than $50\%$ (Fig. 5\textbf{b} in the main text).

To figure out the underlying mechanism that helps cooperation evolve, we calculate the average effective strategy that dominates in evolutionary dynamics under a high failure probability ($\alpha=0.9$), as shown in Fig. 5\textbf{c} in the main text. It turns out that the average effective strategy behaves similarly to $OD$-$TFT$: it cooperates when all individuals cooperate; it defects when one individual defects and other individuals cooperate; it intends to opt out when all individuals defect; and it cooperates when all individuals opt out. Compared with $OD$-$TFT$, the strategy evolves additional properties to ensure that individuals restore cooperation quickly (that is, move to the game state $(0,0)$).

To better illustrate the properties, we plot the average effective strategy of the top 50 strategies (Supplementary Fig.~\ref{Sfig:delayed}\textbf{a}) and the average effective strategy of strategies ranked 51 to 100 (Supplementary Fig.~\ref{Sfig:delayed}\textbf{b}) in the evolutionary dynamics of repeated three-player optional PGGs. In what follows, we explain how the dominant strategies sustain cooperation. We first consider the top 50 strategies. Let $\mathcal{P}_{50}$ denote the set of the top 50 strategies. Suppose that all individuals adopt the same strategy in set $\mathcal{P}_{50}$ and cooperate in the current round. If they stick to their strategies, all individuals keep cooperating. Once an individual defects, all individuals defect and the game state moves to $(D,DD)$. Here, $(A^*,AA')$ represents the game state that includes all individuals' actions, where $A^*$ is the action of the focal individual and $AA'$ are the actions of the other two individuals. After game state $(D, DD)$, all individuals intend to opt out of the game. When one individual $i^*$ succeeds in opting out, the game state moves to $(O, DD)$ from the perspective of individual $i^*$ and to $(D, DO)$ from the perspective of other individuals. Then, individual $i^*$ continues opting out after game state $(O, DD)$ and in the meanwhile, the other two individuals continue trying to opt out of the interaction. If one of them succeeds, the game state moves to $(O, DO)$ from the perspective of the individual who succeeds and to $(D, OO)$ from the perspective of the individual who fails. Subsequently, all individuals cooperate and the game state moves back to mutual cooperation.

We next consider the strategies ranked 51 to 100. Let $\mathcal{P}_{51-100}$ denote the strategies ranked 51 to 100. Suppose that all individuals adopt the same strategy in set $\mathcal{P}_{50-100}$ and cooperate in the current round. If an individual defects, all individuals defect in the next round and try to opt out after game state $(D,DD)$. Once someone opts out successfully, individuals move to game state  $(O, DD)$ from the perspective of the individual who succeeds in opting out and to $(D, DO)$ from the perspective of other individuals. While in these states, individuals cooperation in the next round and restore cooperation directly. When individuals adopt strategies in sets $\mathcal{P}_{50}$ and $\mathcal{P}_{50-100}$, it is unnecessary for all individuals to opt out in the same round, thus restoring cooperation quickly.

\subsection{Repeated PGGs with two kinds of actions for opting out}\label{sec:two_additional_options}
In section \ref{sec:necooperation}, we find that $OD$-$TFT$ and $DO$-$OFT$, which play a crucial role in promoting the evolution of cooperation in repeated optional PGGs, possess similar characteristics as $AoN_2$. The additional option, opt-out, together with defection, can help record the number of rounds that individuals have not cooperated since an individual deviates from full cooperation. This leads to an interesting question: can repeated PGGs with more options to opt out yield a higher level of cooperation?

To answer this question, we consider repeated PGGs with two kinds of actions for opting out. In each round, individuals have four available actions, $C$ (cooperation), $D$ (defection), $O_1$ (opt-out), and $O_2$ (opt-out). The individual who decides to cooperate or defect will participate in the PGG and obtains the payoff as described in section \ref{sec:gamesetup}. Once an individual decides to take action $O_1$ or $O_2$, it opts out of the interaction and gets a fixed payoff $\sigma$. Although $O_1$ and $O_2$ bring an equal payoff, they are treated as different actions. This can be viewed as the case that individuals have two different ways to opt out and obtain $\sigma$.

The repeated games last for infinitely many rounds. In such repeated games, individuals determine their actions based on action profiles in the previous round. Here, we assume that individuals take into account the number of each action to make decisions. Let $x$, $y$ and $z$ denote the number of individuals that take action $C$, $D$, and $O_1$, respectively. The game state in each round can be written as $(x,y,z)$ and the strategy can be written as a $(n+1)(n+2)(n+3)/2$-dimensional vector
\begin{equation*}
    \mathbf{p}=(p_{x,y,z}^C,p_{x,y,z}^D,p_{x,y,z}^{O_1}),
\end{equation*}
where $p_{x,y,z}^A$ with $0\leq p_{x,y,z}^C+p_{x,y,z}^D+p_{x,y,z}^{O_1}\leq 1$ represents the probability that an individual with $\mathbf{p}$ takes action $A\in\{C,D,O_1\}$ after the game state $(x,y,z)$. Correspondingly, the individual takes action $O_2$ with probability $1-p_{x,y,z}^C-p_{x,y,z}^D-p_{x,y,z}^{O_1}$ after $(x,y,z)$.

We can calculate the long-term payoffs of each individual by a Markov chain. The states of the Markov chain are all possible action profiles, $(A_1,\cdots,A_n)$, where $A_i\in\{C,D,O_1,O_2\}$ is the action taken by individual $i$. Suppose individual $i$ adopts a strategy $\mathbf{p}^{(i)}$. The probability that the game state moves from $(A_1,\cdots,A_n)$ to $(A'_1,\cdots,A'_n)$ is
\begin{equation*}
    l_{(A_1,\cdots,A_n)\rightarrow(A'_1,\cdots,A'_n)}=\prod_{i=1}^n (p_{x,y,z}^{A'_i})^{(i)},
\end{equation*}
where $x$, $y$, and $z$ are, respectively, the number of individuals who take action $C$, $D$, and $O_1$ in the action profile $(A_1,\cdots,A_n)$. Let $\mathbf{L}$ be the transition matrix with entries as defined in the above equation. Under the ``trembling hands'' assumption, $\mathbf{L}$ is primitive and thus has a unique left eigenvector $\mathbf{w}$ corresponding to eigenvalue one. The entry $w_{A_1,\cdots,A_n}$ of $\mathbf{w}$ represents the fraction of time that individuals are found in $(A_1,\cdots,A_n)$. Let $\Pi^{(i)}=(\pi^{(i)}_{A_1,\cdots,A_n})$ denote the payoff vector of individual $i$, where
\begin{equation*}
    \pi^{(i)}_{A_i,\cdots,A_n}=\begin{cases}
        a_{x,x+y},  & A_i=C \text{ and } x+y>1 \\
        b_{x,x+y},    & A_i=D \text{ and } x+y>1 \\
        \sigma,             & otherwise
    \end{cases}.
\end{equation*}
The long-term payoff of individual $i$ is
\begin{equation*}
    \pi^{(i)}=\sum_{A_1,\cdots,A_n}w_{A_1,\cdots,A_n}\pi^{(i)}_{A_1,\cdots,A_n}.
\end{equation*}

Given the long-term payoffs, we can calculate the expected payoff that each individual obtains in randomly formed groups according to the strategy configuration of the population. Furthermore, we simulate the evolution of strategies by stochastic evolutionary dynamics. We find that repeated PGGs with two kinds of actions for opting out yield a higher level of cooperation than repeated PGGs with one action (i.e., repeated optional PGGs described in section \ref{sec:gamesetup}), especially when the multiplication factor is small (Supplementary Fig.~\ref{Sfig:twoadditionaloptions}). This result confirms our conjecture: repeated PGGs with more options to opt out give rise to a higher level of cooperation.
Repeated PGGs with two options to opt out significantly lower the level of defection if the multiplication factor is $r=1.2$ and $r=1.3$. Correspondingly, the level of cooperation is improved markedly for $r=1.2$ and $r=1.3$.

When the multiplication factor is greater than $1.8$, all repeated PGGs support full cooperation. The level of cooperation supported by repeated PGGs with more options to opt out is slightly lower. The reason might be that the strategies that promote the evolution of cooperation in repeated PGGs with more options to opt out are more sensitive to errors. For example, $OD$-$TFT$ and $DO$-$OFT$ need two rounds to rebuild full cooperation if someone defects in State $0$ while $AoN$ only needs one round. Therefore, there is an optimal multiplication factor for repeated PGGs with two kinds of actions for opting out to promote the evolution of cooperation.

\subsection{Feasibility of cooperation in one-shot optional PGGs}
\label{sec:connection_to_oneshot}
In the main text, we show that one-shot optional PGGs hardly support the evolution of cooperation for one specific set of parameters. To demonstrate that one-shot optional PGGs cannot support a high level of cooperation ($>90\%$) and also highlight the importance of repeated interactions in promoting the evolution of cooperation under optional games, here we analyze the feasibility of cooperation in one-shot optional PGGs in details. Specifically, we focus on finitely well-mixed populations, which is consistent with our setting for repeated optional PGGs.

In a one-shot optional PGG, individuals make decisions once and they do so independently. An individual is a cooperator if it plays $C$, a defector if it plays $D$, and a loner if it plays $O$. On the other hand, if we restrict the set of strategies in repeated optional games as $\{ALLC, ALLD, ALLO\}$, it is easy to see that from the perspective of payoffs, playing $C$ ($D$, $O$) in a one-shot PGG is equivalent to adopting $ALLC$ ($ALLD$, $ALLO$) in a repeated optional PGG. This equivalence connects our framework for repeated optional games with that for one-shot optional games. To be compatible with our framework, in the following, we will focus on the evolutionary dynamics among $ALLC$, $ALLD$, and $ALLO$ in repeated optional PGGs, and use them to analyze the feasibility of cooperation in one-shot optional PGGs.

We begin with calculating the average payoffs in the population where there are only two types of strategies. In a population with $j$ individuals of type $A$ and $N-j$ of type $B$, the probability to select $k$ individuals of type $A$ and $n-k$ of type $B$ is
\begin{equation}
    H(k,n,j,N)=\frac{\binom{j}{k}\binom{N-j}{n-k}}{\binom{N}{n}}.
\end{equation}
Thus, in a population with $N_C$ individuals adopting $ALLC$ and $N-N_C$ adopting $ALLD$, the expected payoff of an individual with $ALLC$ is
\begin{equation}
    \pi_{ALLC,ALLD}=\sum_{k=0}^{n-1}H(k,n-1,N_C-1,N-1)\left(\frac{(k+1)rc}{n}-c\right)=\frac{rc}{n}\left(\frac{(N_C-1)(n-1)}{N-1}+1\right)-c,
    \label{equ:payoffallcalld}
\end{equation}
and
the expected payoff of an individual with $ALLD$ is
\begin{equation}
    \pi_{ALLD,ALLC}=\sum_{k=0}^{n-1}H(k,n-1,N_C,N-1)\left(\frac{krc}{n}\right)=\frac{rcN_C(n-1)}{n(N-1)}.
\end{equation}
Similarly, in a population where $N_O$ individuals adopt $ALLO$ and $N-N_O$ adopt $ALLC$, the expected payoff of the individual adopting $ALLO$ is
\begin{equation}
    \pi_{ALLO,ALLC}=\sigma,
\end{equation}
and the expected payoff of an individual with $ALLC$ is
\begin{equation}
    \begin{aligned}
        \pi_{ALLC,ALLO}=&\sum_{k=0}^{n-2}H(k,n-1,N_O,N-1)(r-1)c+\sigma H(n-1,n-1,N_O,N-1)\\
        =&(r-1)c+\frac{\binom{N_O}{n-1}}{\binom{N-1}{n-1}}\left((r-1)c-\sigma\right).
    \end{aligned}
\end{equation}
In a population where $N_O$ individuals adopt $ALLO$ and $N-N_O$ adopt $ALLD$, the expected payoff of the individual adopting $ALLO$ is
\begin{equation}
    \pi_{ALLO,ALLD}=\sigma,
\end{equation}
and the expected payoff of an individual with $ALLD$ is
\begin{equation}
    \pi_{ALLD,ALLO}=\sum_{k=0}^{n-2}H(k,n-1,N_O,N-1)\cdot 0+\sigma H(n-1,n-1,N_O,N-1)=\frac{\binom{N_O}{n-1}}{\binom{N-1}{n-1}}\sigma.
    \label{equ:payoffalldallo}
\end{equation}

We consider the case where mutations are rare ($\mu\rightarrow 0$) such that the population is homogeneous most of the time. There are three different homogeneous states: a population full of individuals using $ALLC$ ($ALLC$ state), using $ALLD$ ($ALLD$ state) and using $ALLO$ ($ALLO$ state). The population transits between these homogeneous states due to occasional mutations. The probability that the population transits from one homogeneous state $\mathbf{p}$ to another state $\mathbf{q}$ depends on the fixation probability that a mutant strategy $\mathbf{p}$ takes over a population full of strategy $\mathbf{q}$, as described in Section \ref{sec:evolutionarydyanmics}. Substituting Eqs.~\eqref{equ:payoffallcalld}-\eqref{equ:payoffalldallo} to Eq.~\eqref{equ:fixation_probability}, it yields
\begin{equation}
    \begin{aligned}
        \rho_{ALLC\rightarrow ALLD}&=\frac{1-\exp\left[-s\left(\frac{rc(n-1)}{n(N-1)}-\frac{rc}{n}+c\right)\right]}{1-\exp\left[-sn\left(\frac{rc(n-1)}{n(N-1)}-\frac{rc}{n}+c\right)\right]},\\
        \rho_{ALLD\rightarrow ALLC}&=\frac{1-\exp\left[-s\left(\frac{rc}{n}-\frac{rc(n-1)}{n(N-1)}-c\right)\right]}{1-\exp\left[-sn\left(\frac{rc}{n}-\frac{rc(n-1)}{n(N-1)}-c\right)\right]},\\
        \rho_{ALLC\rightarrow ALLO}&=\left[1+\sum_{i=1}^{N-1}\exp\left(\frac{s[(r-1)c-\sigma]}{\binom{N-1}{n-1}}\sum_{j=1}^i\binom{j}{n-1}-si[\sigma-(r-1)c]\right)\right]^{-1},\\
        \rho_{ALLO\rightarrow ALLC}&=\left[1+\sum_{i=1}^{N-1}\exp\left(si[\sigma-(r-1)c]-\frac{s[(r-1)c-\sigma]}{\binom{N-1}{n-1}}\sum_{j=1}^i\binom{j}{n-1}\right)\right]^{-1},\\
        \rho_{ALLD\rightarrow ALLO}&=\left[1+\sum_{i=1}^{N-1}\exp\left(s\sigma i-\frac{s\sigma}{\binom{N-1}{n-1}}\sum_{j=1}^i\binom{j}{n-1}\right)\right]^{-1},\\
        \rho_{ALLO\rightarrow ALLD}&=\left[1+\sum_{i=1}^{N-1}\exp\left(\frac{s\sigma}{\binom{N-1}{n-1}}\sum_{j=1}^i\binom{j}{n-1}-s\sigma i\right)\right]^{-1}.\\
    \end{aligned}
\end{equation}
Using $\sum_{j=n-1}^i\binom{j}{n-1}=\sum_{j=n-1}^i\left[\binom{j-1}{n-1}+\binom{j-1}{n-2}\right]$, the last four of the above equations can be simplified to
\begin{equation}
    \begin{aligned}
        \rho_{ALLC\rightarrow ALLO}=&\left[1+\frac{\exp \left(s[(r-1)c-\sigma]\right)-\exp \left(s(n-1)[(r-1)c-\sigma]\right)}{1-\exp \left(s[(r-1)c-\sigma]\right)}\right.\\
        &\left.+\sum_{i=n-1}^{N-1}\exp\left(s[(r-1)c-\sigma]\left(\frac{i+1}{n} \prod_{k=1}^{n-1} \frac{i+1-k}{N-k}+i\right)\right)\right]^{-1},\\
        \rho_{ALLO\rightarrow ALLC}=&\left[1+\frac{\exp \left(-s[(r-1)c-\sigma]\right)-\exp \left(-s(n-1)[(r-1)c-\sigma]\right)}{1-\exp \left(-s[(r-1)c-\sigma]\right)}\right.\\
        &\left.+\sum_{i=n-1}^{N-1}\exp\left(-s[(r-1)c-\sigma]\left(\frac{i+1}{n} \prod_{k=1}^{n-1} \frac{i+1-k}{N-k}+i\right)\right)\right]^{-1},\\
        \rho_{ALLD\rightarrow ALLO}=&\left[1+\frac{\exp \left(s\sigma\right)-\exp \left(s(n-1)\sigma\right)}{1-\exp \left(s\sigma\right)}\right.\\
        &\left.+\sum_{i=n-1}^{N-1}\exp\left(-s\sigma\left(\frac{i+1}{n} \prod_{k=1}^{n-1} \frac{i+1-k}{N-k}+i\right)\right)\right]^{-1},\\
        \rho_{ALLO\rightarrow ALLD}=&\left[1+\frac{\exp \left(-s\sigma\right)-\exp \left(-s(n-1)\sigma\right)}{1-\exp \left(-s\sigma\right)}\right.\\
        &\left.+\sum_{i=n-1}^{N-1}\exp\left(s\sigma\left(\frac{i+1}{n} \prod_{k=1}^{n-1} \frac{i+1-k}{N-k}+i\right)\right)\right]^{-1}.
    \end{aligned}
\end{equation}

Collecting all the fixation probabilities, we have the following transition matrix
\begin{equation}
    \bordermatrix{
	  & ALLC & ALLD & ALLO \cr
	ALLC & 1-\displaystyle\sum_{\mathbf{p}\in\{ALLD,ALLO\}}\frac{\mu}{2}\rho_{ALLC\rightarrow \mathbf{p}} & \frac{\mu}{2}\rho_{ALLC\rightarrow ALLD} &\frac{\mu}{2}\rho_{ALLC\rightarrow ALLO} \cr
	ALLD & \frac{\mu}{2}\rho_{ALLD\rightarrow ALLC} & 1-\displaystyle\sum_{\mathbf{p}\in\{ALLC,ALLO\}}\frac{\mu}{2}\rho_{ALLD\rightarrow \mathbf{p}} & \frac{\mu}{2}\rho_{ALLD\rightarrow ALLO} \cr
    ALLO & \frac{\mu}{2}\rho_{ALLO\rightarrow ALLC} & \frac{\mu}{2}\rho_{ALLO\rightarrow ALLD} & 1-\displaystyle\sum_{\mathbf{p}\in\{ALLC,ALLD\}}\frac{\mu}{2}\rho_{ALLO\rightarrow \mathbf{p}} \cr
	}.
    \label{equ:oneshotoptionaltransition}
\end{equation}
The normalized left eigenvector of the above matrix corresponding to eigenvalue one indicates the probability that the population is in one of the homogeneous states. It is given by
\begin{equation}
    \mathbf{v}=\frac{(\gamma_1,\gamma_2,\gamma_3)}{\gamma_1+\gamma_2+\gamma_3},
    \label{equ:oneshotStationaryDistribution}
\end{equation}
where
\begin{equation*}
    \begin{aligned}
        & \gamma_1=\rho_{ALLD\rightarrow ALLC} \rho_{ALLO\rightarrow ALLC}+\rho_{ALLD\rightarrow ALLC} \rho_{ALLO\rightarrow ALLD}+\rho_{ALLO\rightarrow ALLC} \rho_{ALLD\rightarrow ALLO}, \\
        & \gamma_2=\rho_{ALLO\rightarrow ALLC} \rho_{ALLC\rightarrow ALLD}+\rho_{ALLC\rightarrow ALLD} \rho_{ALLO\rightarrow ALLD}+\rho_{ALLO\rightarrow ALLD} \rho_{ALLC\rightarrow ALLO}, \\
        & \gamma_3=\rho_{ALLD\rightarrow ALLC} \rho_{ALLC\rightarrow ALLO}+\rho_{ALLC\rightarrow ALLD} \rho_{ALLD\rightarrow ALLO}+\rho_{ALLC\rightarrow ALLO} \rho_{ALLD\rightarrow ALLO} .
    \end{aligned}
\end{equation*}

Using the parameters in Fig. 2 of the main text, we calculate the transition matrix as defined in \eqref{equ:oneshotoptionaltransition} and its normalized left eigenvector (Supplementary Fig.~\ref{Sfig:norepeat}\textbf{a}). It can be seen that one-shot optional PGGs hardly support the evolution of cooperation. The population full of $ALLC$ is taken over by a mutant $ALLD$ with a probability of almost one. Subsequently, the population full of $ALLD$ is readily invaded by a mutant $ALLO$. The population full of $ALLO$ is invaded by a mutant $ALLC$ with a probability greater than $1/N$. The evolutionary dynamics exhibit oscillation among full cooperation, full defection and full opt-out.

To test the feasibility of cooperation in one-shot optional PGGs for other model settings, we change the multiplication factor, $r$, and the payoff for opt-out, $\sigma$ (Supplementary Fig.~\ref{Sfig:norepeat}\textbf{b}). It can be seen that one-shot optional PGGs do not support a high level of cooperation ($>90\%$), even if the multiplication factor $r$ reaches its upper limit $n=3$.

\section{Appendix}\label{sec:appendix}
\paragraph*{Proof of Theorem \ref{thm:cooperation}}
($\Leftarrow$) To prove that a strategy $\mathbf{p}$ is a subgame perfect equilibrium (SPE), we make use of the one-shot deviation principle~\cite{Blackwell1965}. We only need to check that there is no profitable one-shot deviation for any individual when all group members use strategy $\mathbf{p}$. In other words, when all group members use strategy $\mathbf{p}$, any individual who deviates in one round and uses $\mathbf{p}$ for all subsequent rounds must obtain a payoff no greater than that it would gain by using $\mathbf{p}$ all the time. By Lemma \ref{lemma:finite_complexity}, we need to check all one-shot deviations in State $0$ (that is, the game states in the previous round are in $\mathcal{H}_C$), State $1$ (the game states in the previous round are in $\mathcal{H}_D$) and State $2$ (the game states in the previous round are in $\mathcal{H}_O$). Here, we take the strategies in the sixth row of Supplementary Fig.~\ref{Sfig:feasiblerangecooperation}\textbf{a} (namely, $OD$-$TFT$) as an example to show how to prove this. For other strategies in Supplementary Fig.~\ref{Sfig:feasiblerangecooperation}\textbf{a}, we provide the long-term payoff obtained by not deviating and by deviating for one round in Supplementary Table \ref{tab:proofthm1}.

Suppose that all individuals adopt $OD$-$TFT$ and individual $i$ considers to deviate. When all individuals are in State $0$,
\begin{itemize}
    \item [(1)] if individual $i$ sticks to $OD$-$TFT$ (i.e., it does not deviate and cooperates in the current round), its long-term payoff is
    \begin{equation}
        \pi_{0C}=(r-1)c;
        \label{equ:pi0c}
    \end{equation}
    \item [(2)] if individual $i$ deviates by defecting in the current round and adheres to $OD$-$TFT$ in all subsequent rounds, its long-term payoff is
    \begin{equation}
        \pi^{(\text{Dev})}_{0D}=(1-\delta)\left[\frac{(n-1)rc}{n}+\delta^2\sigma\right]+\delta^3(r-1)c.
    \end{equation}
    \item [(3)] if individual $i$ deviates by opting out in the current round and using $OD$-$TFT$ for all subsequent rounds, its long-term payoff is at most
    \begin{equation}
        \pi^{(\text{Dev})}_{0O}=(1-\delta)\sigma+\delta(r-1)c.
    \end{equation}
\end{itemize}

When all individuals are in State $1$ and individual $i$ does not deviate, we have
    \begin{equation}
        \pi_{1D}=(1-\delta)\delta\sigma+\delta^2(r-1)c.
    \end{equation}
If individual $i$ deviates by cooperating, its long-term payoff is at most
    \begin{equation}
        \pi^{(\text{Dev})}_{1C}=(1-\delta)\left[(\frac{rc}{n}-c)+\delta\sigma\right]+\delta^2(r-1)c.
    \end{equation}
If individual $i$ deviates by opting out, its long-term payoff becomes
    \begin{equation}
        \pi^{(\text{Dev})}_{1O}=(1-\delta)(1+\delta^2)\sigma+\delta^3(r-1)c.
    \end{equation}

Similarly, when all individuals are in State $2$, if no deviation happens, we have
    \begin{equation}
        \pi_{2O}=(1-\delta)\sigma+\delta(r-1)c.
    \end{equation}
    If individual $i$ deviates, its long-term payoff becomes
    \begin{equation}
        \pi^{(\text{Dev})}_{2C}=(1-\delta)\sigma+\delta(r-1)c,
    \end{equation}
    or
    \begin{equation}
        \pi^{(\text{Dev})}_{2D}=(1-\delta)\sigma+\delta(r-1)c.
        \label{equ:pi2D}
    \end{equation}

Note that $OD$-$TFT$ is a subgame perfect equilibrium if and only if $\pi_{0C}\geq\pi^{(\text{Dev})}_{0D}$, $\pi_{0C}\geq\pi^{(\text{Dev})}_{0O}$, $\pi_{1D}\geq\pi^{(\text{Dev})}_{1C}$, $\pi_{1D}\geq\pi^{(\text{Dev})}_{1O}$, $\pi_{2O}\geq\pi^{(\text{Dev})}_{2C}$ and $\pi_{2O}\geq\pi^{(\text{Dev})}_{2D}$. Substituting the payoffs \eqref{equ:pi0c}-\eqref{equ:pi2D} into these inequalities and taking the limit $\delta\rightarrow1$, we obtain
\begin{equation}
    r\geq\frac{(\sigma+3c)n}{(2n+1)c},
\end{equation}
which is the condition for $OD$-$TFT$ to become a subgame perfect equilibrium in the sixth row of Supplementary Fig.~\ref{Sfig:feasiblerangecooperation}\textbf{a}.

($\Rightarrow$) Because strategy $\mathbf{p}$ is a subgame perfect equilibrium, there exists no profitable one-shot deviation. In particular, we consider three scenarios where all individuals are in State 0, State 1, and State 2.

Let us first consider the case that all individuals are in State $0$ in round $t$. In this case, individual $i$ cooperates and obtains a payoff $(r-1)c$ if it sticks to strategy $\mathbf{p}$. If individual $i$ instead defects in the current round, it gets a payoff $(n-1)rc/n$, which is higher than that under full cooperation. Since individual $i$ has no incentive to deviate in State $0$, the long-term payoff obtained from round $t+1$ on should be less than that yielded by sticking to strategy $\mathbf{p}$. Note that individuals with the same reactive strategy move to the same state. When all individuals are in the same state, full cooperation yields a higher payoff than full opt-out, and full opt-out yields a higher payoff than full defection. Thus, the requirement that there exists no profitable one-shot deviation in State $0$ leads to
\begin{equation}
    p_{n-1,1}^C=0,
    \label{equ:thm1first}
\end{equation}
which means that if one individual defects while everyone else cooperates, strategy $\mathbf{p}$ should not prescribe to cooperate in the next round.

Let us then consider the case that all individuals are in State $1$ in round $t$. In this case, individual $i$ defects and gets a payoff of zero if it adheres to strategy $\mathbf{p}$. If individual $i$ instead opts out, it gets a payoff $\sigma$, which is greater than zero. Similarly, the requirement that individual $i$ has no incentive to deviate in State $1$ means that the long-term payoff obtained from round $t+1$ on should be less than that yielded by sticking to strategy $\mathbf{p}$. Thus, we have
\begin{equation}
    p_{0,n-1}^C=0 \text{ if } p_{0,n}^C=1,
\end{equation}
and
\begin{equation}
    p_{0,n-1}^D=1 \text{ if } p_{0,n}^O=1.
    \label{equ:thm1p0n1}
\end{equation}
If $p_{0,n}^D=1$, the long-term payoff that the one-shot deviation yields from round $t+1$ on is at least the same as that yielded by sticking to strategy $\mathbf{p}$. Thus, the one-shot deviation yields a higher long-term payoff, which is contradictory to the fact that strategy $\mathbf{p}$ is a subgame perfect equilibrium. Thus
\begin{equation}
    p_{0,n}^D=0.
    \label{equ:thm1p0n}
\end{equation}

Let us now consider the case that all individuals are in State $2$ in round $t$. Individual $i$ gains the same payoff $\sigma$ in round $t$ for both sticking to strategy $\mathbf{p}$ and deviating. Whether the one-shot deviation is profitable depends on the states visited afterward. Note that individuals stay in State $0$ once all individuals move to State $0$ (guaranteed by $p^C_{n,0}=1$). Note also that full cooperation yields the highest possible payoff. In this case, the strategy that leads to an earlier transition into State $0$ yields a higher payoff. Therefore, we have
\begin{equation}
    p_{1,0}^C=p_{0,1}^C=0 \text{ if } p_{0,0}^D=p_{0,n}^C=1
\end{equation}
On the other hand, if $p_{0,n}^O=1$, individuals alternate between full defection and full opt-out. The requirement that individual $i$ has no incentive to deviate in State $1$ implies that individuals can not move to State $0$ by deviating in State $1$. Combining with Eq. \eqref{equ:thm1p0n1}, we have
\begin{equation}
    p_{1,n-1}^C=0 \text{ and } p_{0,n-1}^D=1 \text{ if } p_{0,0}^D=p_{0,n}^O=1.
\end{equation}
In addition, in the case where $p_{0,0}^D=p_{0,n}^O=p_{1,0}^O=1$, individual $i$ gets $\sigma/(1+\delta)$ if it sticks to strategy $\mathbf{p}$ and it obtains a higher payoff $(1-\delta)\sigma+\delta \sigma/(1+\delta)$  if it cooperates for one round and adheres to strategy $\mathbf{p}$ in all subsequent rounds. The case where $p_{0,0}^D=p_{0,n}^O=p_{0,1}^O=1$ leads to a similar result, which is contradictory to the fact that individual $i$ has no incentive to deviate in State $2$. Thus, we have
\begin{equation}
    p_{1,0}^D=p_{0,1}^D=1 \text{ if } p_{0,0}^D=p_{0,n}^O=1.
\end{equation}
If $p_{0,0}^O=1$, individuals stay in State $2$ for all rounds. We also need $p_{0,1}^C=p_{1,0}^C=0$, namely,
\begin{equation}
    p_{1,0}^C=p_{0,1}^C=0 \text{ if } p_{0,0}^O=1.
\end{equation}
Furthermore, if individuals move to State $1$ due to a one-shot deviation ($p_{0,1}^D=1$ or $p_{1,0}^D=1$), the requirement that individuals cannot move to State $0$ earlier implies that game states $(0,n)$, $(1,n-1)$ and $(0,n-1)$ do not lead to State $0$. Combining with Eq. \eqref{equ:thm1p0n1} and \eqref{equ:thm1p0n}, the conditions
\begin{equation}
    p_{1,n-1}^C=0 \text{ and } p_{0,n}^O=p_{0,n-1}^D=1 \text{ if } p_{0,0}^O=p_{1,0}^D=1
\end{equation}
and
\begin{equation}
    p_{1,n-1}^C=0 \text{ and } p_{0,n}^O=p_{0,n-1}^D=1 \text{ if } p_{0,0}^O=p_{0,1}^D=1
    \label{equ:thm1last}
\end{equation}
are necessary.

Combining Eqs.~\eqref{equ:thm1first}-\eqref{equ:thm1last}, we characterize all strategies that support persistent cooperation and are equilibria in Supplementary Fig.~\ref{Sfig:feasiblerangecooperation}\textbf{a}.

$\hfill\qedsymbol$

\paragraph*{Proof of Theorem \ref{thm:defection}}
To prove the theorem, suppose on the contrary that there is a subgame perfect equilibrium $\mathbf{p}$ that supports defection. A strategy can favor defection only if it defects after all individuals defect. That is, $p_{0,n}^D=1$ is necessary. If all individuals are in State $1$, every individual gets zero by sticking to $\mathbf{p}$. If an individual opts out for one round, it at least obtains $(1-\delta)\sigma$. Thus, there is an incentive to make a one-shot deviation from $\mathbf{p}$.

\paragraph*{Proof of Theorem \ref{thm:optout}}
($\Leftarrow$) Similar to the proof of Theorem \ref{thm:cooperation}, we make use of the one-shot deviation principle~\cite{Blackwell1965} to prove that a strategy $\mathbf{p}$ is a subgame perfect equilibrium. We only need to check that there is no profitable one-shot deviation when all group members use strategy $\mathbf{p}$. By Lemma \ref{lemma:finite_complexity}, we need to check the cases that individuals are in State $0$ (that is, the game state in the previous round belongs to $\mathcal{H}_C$), State $1$ (that is, the game state in the previous round belongs to $\mathcal{H}_D$), and State $2$ (that is, the game state in the previous round belongs to $\mathcal{H}_O$). For each strategy in Supplementary Fig.~\ref{Sfig:feasiblerangeoptout}\textbf{a}, we list the long-term payoffs that un-deviation and one-shot deviation yield in Supplementary Table \ref{tab:proofthmoptout}. Let $\pi_{0C}$, $\pi_{1D}$ and $\pi_{2O}$ denote individual $i$'s long-term payoff when all individuals are in State $0$, State $1$ and State $2$, and individual $i$ sticks to its strategy, respectively. Let $\pi^{(\text{Dev})}_{sA}$ denote individual $i$'s long-term payoff when all individuals are in State $s\in\{0,1,2\}$ and individual $i$ deviates by taking action $A\in\{C,D,O\}$ for one round and acts as others in all subsequent rounds. The strategy is a subgame equilibrium if and only if $\pi_{0C}\geq\pi^{(\text{Dev})}_{0D}$, $\pi_{0C}\geq\pi^{(\text{Dev})}_{0O}$, $\pi_{1D}\geq\pi^{(\text{Dev})}_{1C}$, $\pi_{1D}\geq\pi^{(\text{Dev})}_{1O}$, $\pi_{2O}\geq\pi^{(\text{Dev})}_{2C}$ and $\pi_{2O}\geq\pi^{(\text{Dev})}_{2D}$. Solving these inequalities, we get the conditions listed in Supplementary Fig.~\ref{Sfig:feasiblerangeoptout}\textbf{a}.

($\Rightarrow$) Since $\mathbf{p}$ is a subgame perfect equilibrium, individual $i$ with strategy $\mathbf{p}$ has no incentive to make one-shot deviations. Let us first consider the case that all individuals are in State $0$ in round $t$. Similar to the proof of Theorem \ref{thm:cooperation}, if individual $i$ makes a one-shot deviation to obtain a higher payoff in the current round, the requirement that individual $i$ has no incentive to deviate in State $0$ implies that the long-term payoff that the deviation yields from round $t+1$ on is lower. Note that individuals with the same pure reactive strategy move to the same state and full cooperation yields more benefit than full opt-out, full opt-out yields more benefit than full defection. Thus, we have
\begin{equation}
    p_{n-1,1}^C=0 \text{ if } p_{n,0}^C=1,
    \label{equ:thm3first}
\end{equation}
and
\begin{equation}
    p_{n-1,1}^D=1 \text{ if } p_{n,0}^O=1.
\end{equation}
If $p_{n,0}^D=1$, the one-shot deviation brings at least the same payoff as un-deviation after round $t$. Thus, the total long-term payoff that individual $i$ gets by defecting for one round is higher than that yielded by sticking to strategy $\mathbf{p}$, which is contradictory to the fact that strategy $\mathbf{p}$ is a subgame perfect equilibrium. Therefore,
\begin{equation}
    p_{n,0}^D=0.
\end{equation}

Similarly, when all individuals are in State $1$ in round $t$, individual $i$ can get a higher payoff in the current round if it opts out. The requirement that individual $i$ has no incentive to deviate in State $1$ implies that the long-term payoff that the one-shot deviation yields from round $t+1$ on must be lower. This results in
\begin{equation}
    p_{0,n-1}^C=0 \text{ if } p_{0,n}^C=1,
    \label{equ:thm30n11}
\end{equation}
and
\begin{equation}
    p_{0,n-1}^D=1 \text{ if } p_{0,n}^O=1.
    \label{equ:thm30n1}
\end{equation}
If $p_{0,n}^D=1$, the one-shot deviation is always profitable, which is contradictory to the fact that strategy $\mathbf{p}$ is a subgame perfect equilibrium. Therefore,
\begin{equation}
    p_{0,n}^D=0.
\end{equation}

Consider the case that all individuals are in State $2$ in round $t$. If individual $i$ sticks to strategy $\mathbf{p}$, it obtains $\sigma$ in the current round. If individual $i$ instead cooperates (defects) in the current round, it gets $\sigma$ in this round, which is not lower than that obtained by sticking to strategy $\mathbf{p}$. If the deviation makes individuals move to State $0$ (i.e. $p_{1,0}^C=1$ or $p_{0,1}^C=1$), individual $i$ gets $\delta(r-1)c$ in the subsequent rounds in the case where $p_{n,0}^C=1$ and gets $(1-\delta)\delta(r-1)c+\delta^2\sigma$ in the case where $p_{n,0}^O=1$. Both the payoffs are greater than the long-term payoff obtained by sticking to strategy $\mathbf{p}$ from round $t+1$ on, $\delta\sigma$. Thus, there is an incentive for individual $i$ to deviation in State $2$, which contradicts the fact that strategy $\mathbf{p}$ is a subgame perfect equilibrium. Thus, the condition
\begin{equation}
    p_{1,0}^C=p_{0,1}^C=0
\end{equation}
is necessary.

Let us now consider some more complex cases.
\begin{itemize}
    \item [(1)] The case where $p_{0,n}^O=p_{n,0}^C=p_{1,n-1}^C=1$. In this case, if all individuals are in State $1$, individual $i$ gets $\delta\sigma$ by sticking to strategy $\mathbf{p}$. Instead, individual $i$ can get $(1-\delta)(\frac{rc}{n}-c)+\delta(r-1)c$ by cooperating for one round and then sticking to strategy $\mathbf{p}$. Thus, such a deviation is profitable. To prevent this deviation, $p_{1,n-1}^C=0$ is required. This means
    \begin{equation}
        p_{1,n-1}^C=0 \text{ if } p_{0,n}^O=p_{n,0}^C=1.
    \end{equation}
    \item [(2)] The case where $p_{0,n}^C=p_{n,0}^C=1$ and $p_{0,1}^D=1$ ($p_{1,0}^D=1$). Similar to the above case, when all individuals are in State $2$, the long-term payoff of individual $i$ is $\sigma$ if individual $i$ sticks to strategy $\mathbf{p}$ while its long-term payoff is $(r-1)c$ if it deviates in State $2$. The one-shot deviation yields a higher long-term payoff, which is contradictory to the fact that individual $i$ has no incentive to deviate in State $2$. Thus, we have
    \begin{equation}
        p_{0,1}^O=p_{1,0}^O=1 \text{ if } p_{0,n}^C=p_{n,0}^C=1.
        \label{equ:thm3last}
    \end{equation}
    \item [(3)] The case where $p_{n,0}^O=p_{n-1,1}^D=p_{0,n}^C=p_{0,n-1}^O=1$. In this case, if all individuals are in State $0$, individual $i$ gets $\pi_{0C}=(1-\delta)(r-1)c+\delta\sigma$ by adhering to strategy $\mathbf{p}$ and gets $\pi^{(\text{Dev})}_{0D}=(1-\delta)(\frac{(n-1)rc}{n}+\delta^2(r-1)c)+\delta^3\sigma$ by defecting for one round and sticking to strategy $\mathbf{p}$ in all subsequent rounds. The requirement that individual $i$ has no incentive to deviate in State $0$ implies $\pi_{0C}\geq\pi^{(\text{Dev})}_{0D}$. Taking the limit $\delta\rightarrow1$, we get
    \begin{equation}
        \sigma\geq\frac{(n-1)rc}{2n}.
        \label{equ:thm3payoff1}
    \end{equation}
    If all individuals are in State $1$, individual $i$ obtains $\pi_{1D}=(1-\delta)\delta(r-1)c+\delta^2\sigma$ by adhering to strategy $\mathbf{p}$ while obtains $\pi^{(\text{Dev})}_{1O}=\sigma$ by opting out for one round and using strategy $\mathbf{p}$ in all subsequent rounds. The requirement that individual $i$ has no incentive to deviate in State $1$ implies $\pi_{1D}\geq\pi^{(\text{Dev})}_{1O}$. Taking the limit $\delta\rightarrow1$, we obtain
    \begin{equation}
        \sigma\leq\frac{(r-1)c}{2}.
        \label{equ:thm3payoff2}
    \end{equation}
    Eqs. \eqref{equ:thm3payoff1} and \eqref{equ:thm3payoff2} can not be satisfied simultaneously in repeated optional PGGs. Thus, strategy $\mathbf{p}$ with $p_{n,0}^O=p_{n-1,1}^D=p_{0,n}^C=p_{0,n-1}^O=1$ is not a subgame perfect equilibrium. To exclude this case, we get
    \begin{equation*}
        p_{0,n-1}^O=0 \text{ if }p_{n,0}^O=p_{n-1,1}^D=p_{0,n}^C=1.
    \end{equation*}
    Combining the above equation with Eq. \eqref{equ:thm30n11}, we get
    \begin{equation}
        p_{0,n-1}^D=1 \text{ if }p_{n,0}^O=p_{n-1,1}^D=p_{0,n}^C=1.
        \label{equ:thm3lllast}
    \end{equation}
\end{itemize}

Combining Eqs.~\eqref{equ:thm3first}-\eqref{equ:thm3last} and Eq.~\eqref{equ:thm3lllast}, we get Supplementary Fig.~\ref{Sfig:feasiblerangeoptout}\textbf{a}.
$\hfill\qedsymbol$

\setcounter{section}{0}
\renewcommand{\figurename}{Supplementary Fig.}
\renewcommand{\tablename}{Supplementary Table}

\clearpage
\textbf{\large Supplementary Figures}
\vspace*{\fill}
\begin{center}
    \begin{figure}[H]
        \centering
        \includegraphics[width=1\textwidth]{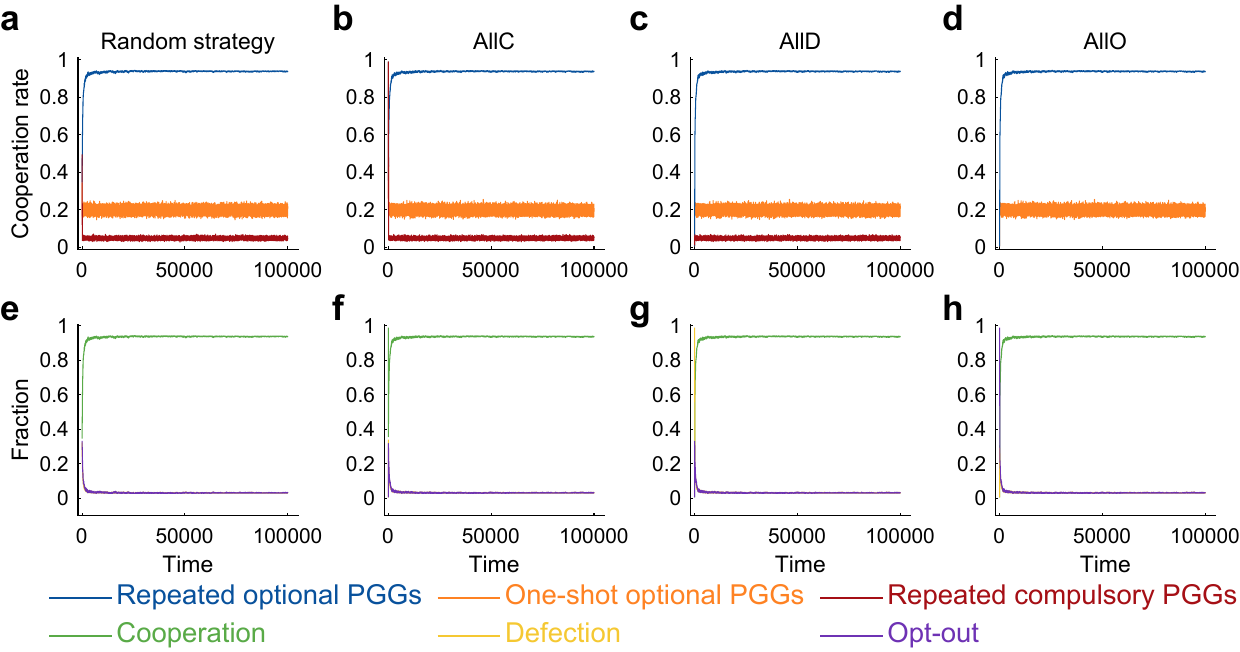}
        \caption[Robustness to initial strategy configurations]{\textbf{Our findings are robust to the initial strategy configurations of the population.} To investigate the robustness of our findings to the initial strategy configurations of the population, we explore the evolution of cooperation in the population with four of the most important configurations: random strategy (\textbf{a}, \textbf{e}), $ALLC$ (\textbf{b}, \textbf{f}), $ALLD$ (\textbf{c}, \textbf{g}), and $ALLO$ (\textbf{d}, \textbf{h}). Repeated optional PGGs support full cooperation while repeated compulsory PGGs and one-shot optional PGGs hardly promote the evolution of cooperation, regardless of the initial configurations (\textbf{a}-\textbf{d}). The additional option, opt-out, acts as a catalyst for the emergence of cooperation especially when the strategy configuration of the population is initialized as $ALLD$ (\textbf{g}). On the other hand, opt-out prevents the further expansion of defection when $ALLC$ occupies the population (\textbf{f}). The parameters are the same as in Fig. 2 of the main text.}
        \label{Sfig:2random}
    \end{figure}
\end{center}
\vspace{\fill}

\clearpage
\vspace*{\fill}
\begin{center}
    \begin{figure}[H]
        \centering
        \includegraphics[width=\textwidth]{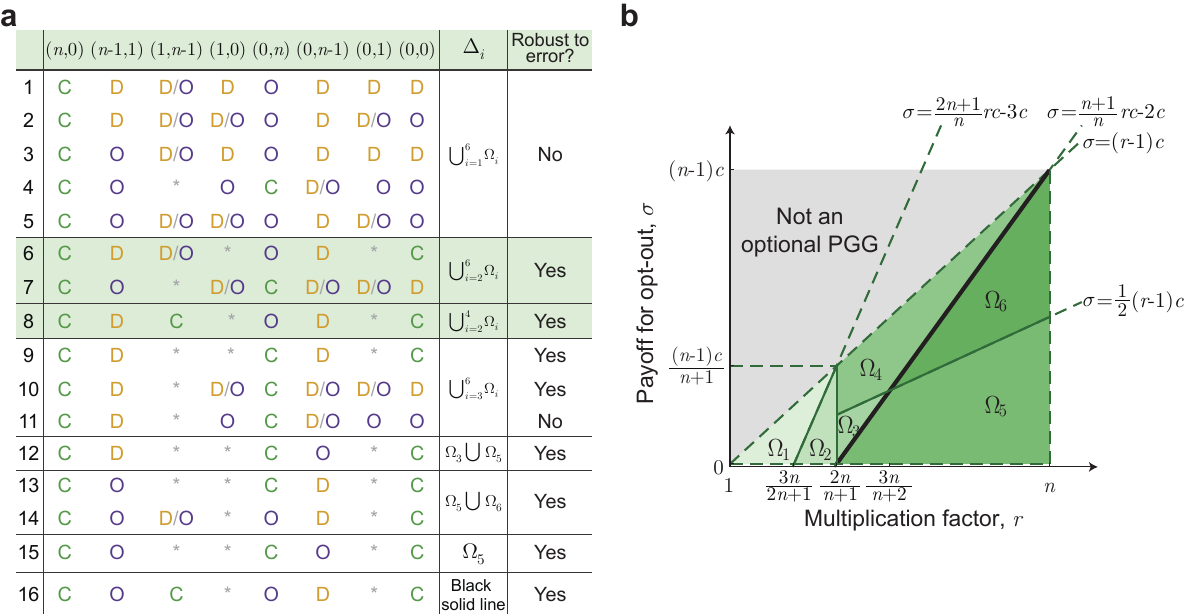}
        \caption[Pure reactive strategies that support persistent full cooperation and form an equilibrium]{\textbf{$OD$-$TFT$, $DO$-$OFT$, and $OD$-$TFT_C$ have the lowest threshold for the multiplication factor to surpass to become equilibria within strategies that robustly support cooperation.} \textbf{a}, We identify all subgame perfect equilibria (SPE) that support persistent cooperation in the space of pure reactive strategies. These strategies cooperate in the very first round (i.e. $p_{0}^C=1$) and stick to cooperation if all other individuals do so (i.e. $p_{n,0}^C=1$). We classify these strategies according to the actions taken after each game state $(x,y)$ (the second to ninth columns) and the feasible region $\Delta_i$ to become a subgame perfect equilibrium (SPE). In repeated optional PGGs, the strategies in the first to fifth rows are always SPE. Yet they are prone to errors: if all individuals adopt one of these strategies and are affected by implementation errors, cooperation breaks down. The strategies in the eleventh row are also not robust to errors.  \textbf{b}, We draw the feasible region that each strategy becomes an SPE. As shown, among the strategies that robustly support cooperation, $OD$-$TFT$ (strategies in the sixth row), $DO$-$OFT$ (strategies in the seventh row), and $OD$-$TFT_C$ (strategies in the eighth row) become SPE with the lowest threshold for the multiplication factor to surpass. Here, symbol $D$/$O$ means that strategies can defect or opt out after the corresponding game state. The symbol ``$*$'' represents that all available actions, $C$, $D$, and $O$, are allowed after the corresponding game state. Note that for the game states that are not listed, all actions are allowed.
        }
        \label{Sfig:feasiblerangecooperation}
    \end{figure}
\end{center}
\vspace{\fill}

\clearpage
\vspace*{\fill}
\begin{center}
    \begin{figure}[H]
        \centering
        \includegraphics[width=0.9\textwidth]{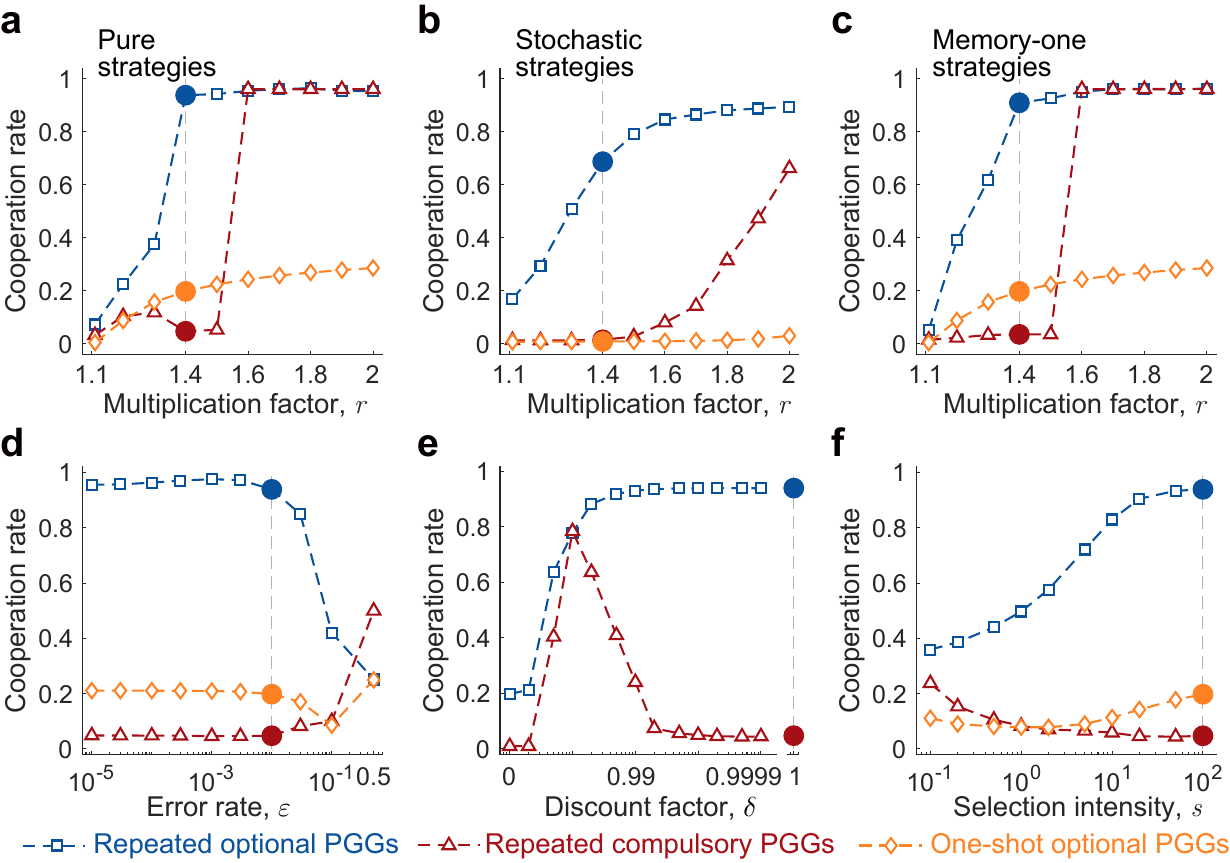}
        \caption[Robustness to parameter changes]{\textbf{Our findings are robust to parameter changes.} To investigate the robustness of our findings to the changes of different parameters, we independently vary the key parameters in Fig. 2 of the main text. We find that the additional option, opt-out, has a positive effect on the evolution of cooperation over a wide range of the multiplication factor $r$ (\textbf{a}, \textbf{b}, \textbf{c}), the error rate $\varepsilon$ (\textbf{d}), the discount factor $\delta$ (\textbf{e}), or the selection intensity $s$ (\textbf{f}). In all cases, repeated optional PGGs yield a cooperation premium until the error rate exceeds some threshold. Here, each data point is the average of $1000$ independent simulations. Filled circles mark the parameter settings in Fig. 2 of the main text. The default parameter values are $N=100$, $c=1$, $r=1.4$, $\sigma=0.1$, $s=100$, $\varepsilon=0.01$ and $\delta\rightarrow1$.}
        \label{Sfig:influence}
    \end{figure}
\end{center}
\vspace{\fill}

\clearpage
\vspace*{\fill}
\begin{center}
    \begin{figure}[H]
        \centering
        \includegraphics[width=0.8\textwidth]{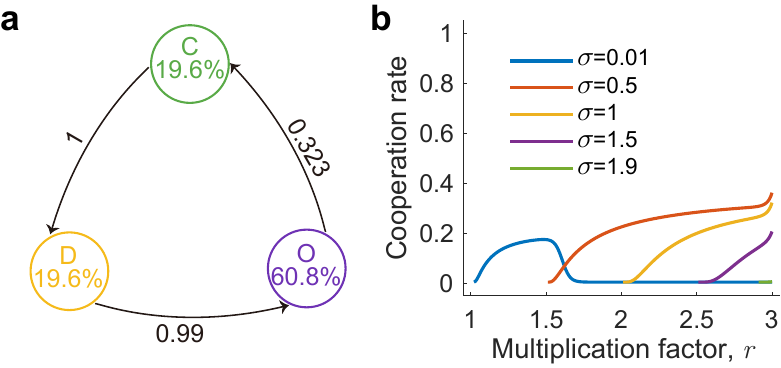}
        \caption[Rock-paper-scissors type of cycling in one-shot optional PGGs]{\textbf{One-shot optional PGGs lead to a rock-paper-scissors type of cycling among cooperation, defection and opt-out.} In a one-shot optional PGG, if an individual is a cooperator (defector, nonparticipant), this individual is assumed to play $C$ ($D$, $O$) independent of other individuals' previous decisions. \textbf{a}, We analyze the evolutionary dynamics of one-shot optional PGGs in finitely well-mixed populations. The numbers within the circles are the average abundance of cooperators, defectors and nonparticipants. The numbers near the arrows represent the fixation probability that the mutant at the end of the arrows takes over the population composed entirely of individuals who adopt the strategy of the starting point of the arrow. It can be seen that a mutant cooperator readily invades a population of nonparticipants; a mutant defector readily invades a population of cooperators; and a mutant nonparticipant readily invades a population of defectors. It exhibits a dynamic oscillation among cooperation, defection and nonparticipation. Thus, one-shot optional PGGs hardly support full cooperation. \textbf{b}, According to Eq.~\eqref{equ:oneshotStationaryDistribution}, we further calculate the cooperation rate for various values of multiplication factors, $r$, and payoffs for opt-out, $\sigma$. It can be seen that one-shot optional PGGs hardly support full cooperation (the maximum cooperation rate is less than $40\%$) although the multiplication factor approaches the group size. The parameter values in panel \textbf{a} and the parameter values except $r$ and $\sigma$ in panel \textbf{b} are the same as those in Fig. 2 of the main text.}
        \label{Sfig:norepeat}
    \end{figure}
\end{center}
\vspace{\fill}

\clearpage
\vspace*{\fill}
\begin{center}
    \begin{figure}[H]
        \centering
        \includegraphics[width=0.9\textwidth]{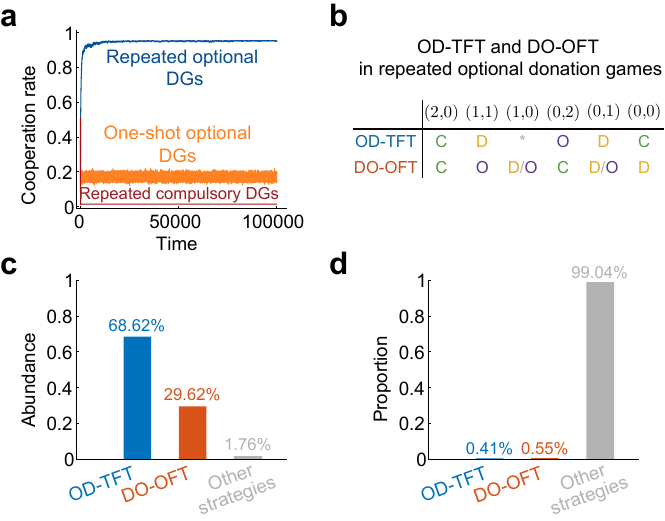}
        \caption[Evolution of cooperation in repeated optional donation games]{\textbf{Voluntary participation facilitates  cooperation in repeated donation games (DGs).} \textbf{a}, We explore the effect of opt-out on the evolution of cooperation in classical repeated pairwise games, i.e., the repeated DGs. In each round of repeated optional DGs, individuals can decide to cooperate by donating a benefit $b'$ to its opponent at a cost $c'$, to defect by denoting nothing and bearing no cost, or to opt out and obtain $\sigma$. Similar to repeated optional PGGs, the game is canceled if only one individual participates. In this case, each individual gets $\sigma$. As shown, repeated optional DGs support the evolution of cooperation even if the corresponding one-shot optional DGs and repeated compulsory DGs fail to. \textbf{b}, According to Theorem \ref{thm:cooperation}, we get the representations of $OD$-$TFT$ and $DO$-$OFT$ in repeated optional DGs. \textbf{c,d}, $OD$-$TFT$ and $DO$-$OFT$ are also the dominant strategies in repeated optional DGs even if they occupy a very low proportion in the space of pure reactive strategies. Parameters: $b=1.65$, $\sigma=0.1$. Other parameters are the same as in Fig. 2 of the main text.}
        \label{Sfig:donationGame}
    \end{figure}
\end{center}
\vspace{\fill}

\clearpage
\vspace*{\fill}
\begin{center}
    \begin{figure}[!ht]
        \centering
        \includegraphics[width=0.8\textwidth]{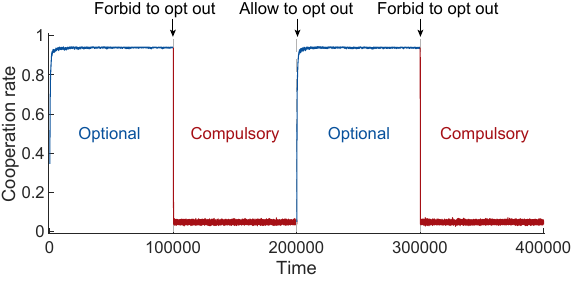}
        \caption[The switch-like role that opt-out plays in repeated games]{\textbf{The additional option, opt-out, acts as a switch for the evolution of cooperation.} To better illustrate the role that opt-out plays in supporting cooperation, we explore how the cooperation rate varies as the game transits between repeated optional and compulsory PGGs. Here, we assume that the transitions occur at Time 100000, 200000, and 300000. It turns out that if individuals are allowed to opt out of the interaction (from Time 0 to Time 100000), the population is in an almost fully cooperative state; once individuals are deprived of the option to opt out, the population rapidly falls into the deadlock of full defection (from Time 100001 to Time 200000); if the right to opt out is given again to all the individuals, the almost fully cooperative state is quickly recovered (from Time 200001 to Time 300000), and so on. The parameter values are the same as those in Fig. 2 of the main text.}
        \label{Sfig:transition}
    \end{figure}
\end{center}
\vspace{\fill}

\clearpage
\vspace*{\fill}
\begin{center}
    \begin{figure}[H]
        \centering
        \includegraphics[width=1\textwidth]{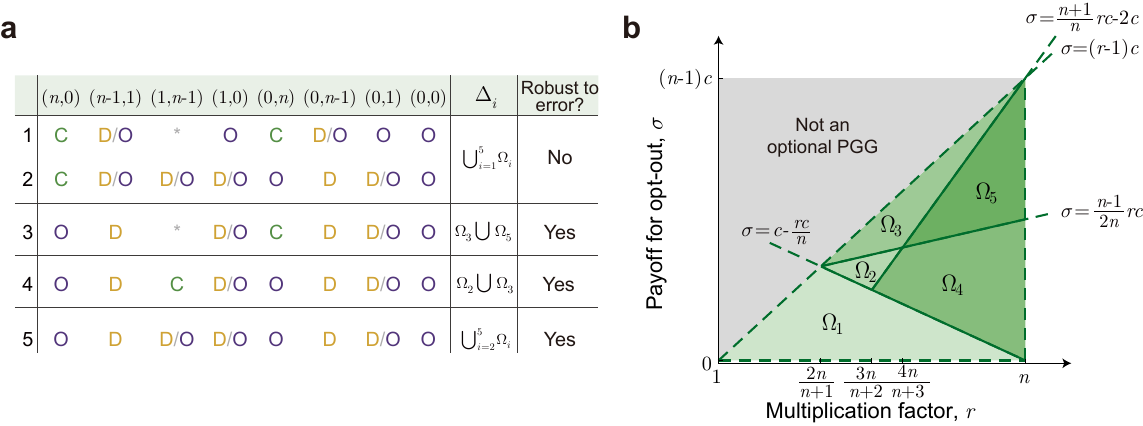}
        \caption[Pure reactive strategies that support persistent full opt-out and form an equilibrium]{\textbf{Five pure reactive strategies give rise to persistent full opt-out and form an equilibrium.} \textbf{a}, We identify all pure reactive strategies that support persistent full opt-out and form an equilibrium. Each strategy opts out in the very first round (i.e. $p_0^O=1$) and sticks to opt-out once all individuals do so (i.e. $p_{0,0}^O=1$). All strategies except for the strategy in the first and second rows are robust to errors. \textbf{b}, Feasible regions of strategies listed in panel \textbf{a} to become equilibria are illustrated. As shown, the strategy in the fifth row in panel \textbf{a} has the largest feasible region to robustly support persistent full opt-out. The meanings of symbols are the same as those in Supplementary Fig.~\ref{Sfig:feasiblerangecooperation}.
        }
        \label{Sfig:feasiblerangeoptout}
    \end{figure}
\end{center}
\vspace{\fill}

\clearpage
\vspace*{\fill}
\begin{center}
    \begin{figure}[H]
        \centering
        \includegraphics[width=0.9\textwidth]{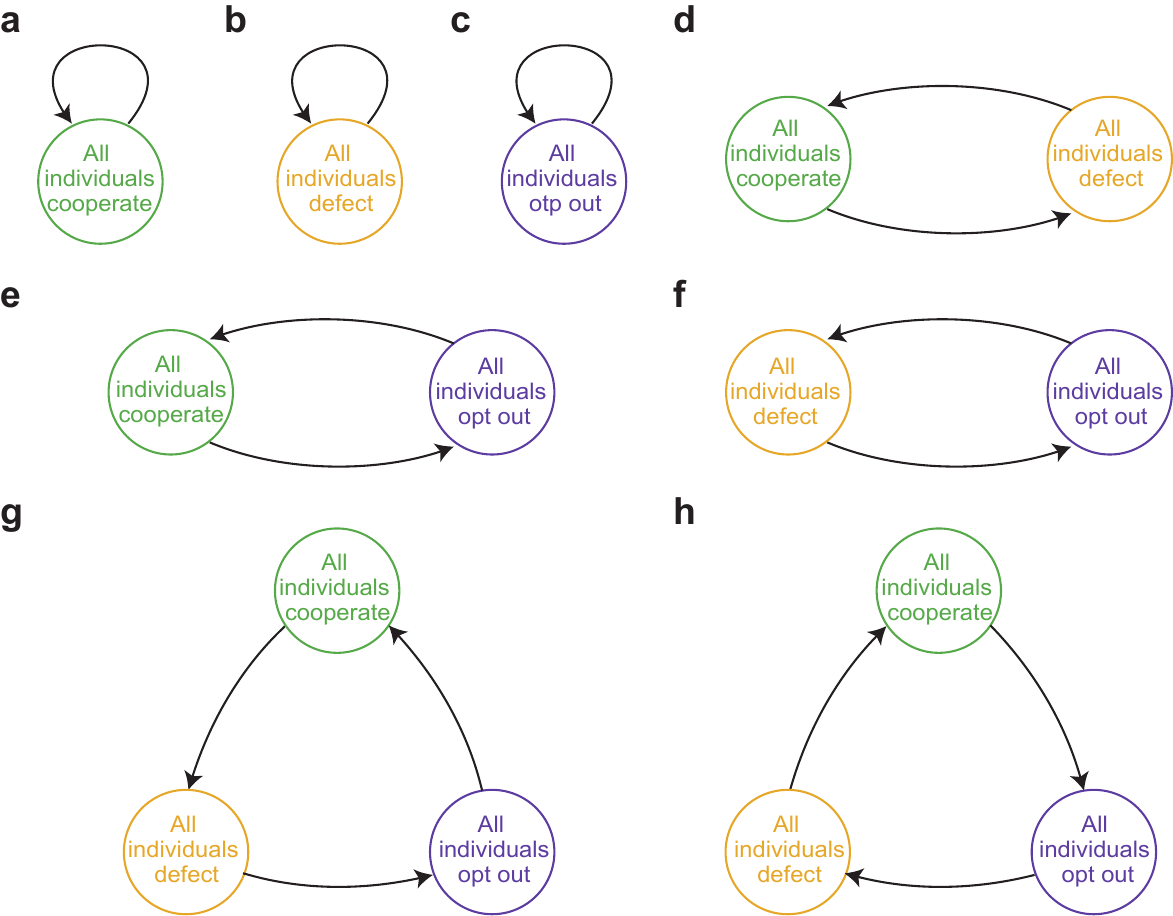}
        \caption[Eight possible endings of game dynamics]{\textbf{A group of individuals that adopt the same pure reactive strategy leads to at most eight possible endings of game dynamics in repeated optional PGGs.} The node corresponds to the game state of the group in a certain round. There are three possible game states: all individuals cooperate, all defect and all opt out. The group moves from one state in one round to another state in the next round along the direction of the arrow. The eight possible endings of game dynamics are all possible transitions between these states. Panels \textbf{a}-\textbf{h} correspond to endings \emph{\textbf{1}}-\emph{\textbf{8}} in Theorem \ref{lemma:dynamicsofgroup}.}
        \label{Sfig:8dynamics}
    \end{figure}
\end{center}
\vspace{\fill}

\clearpage
\vspace*{\fill}
\begin{center}
    \begin{figure}[H]
        \centering
        \includegraphics[width=0.9\textwidth]{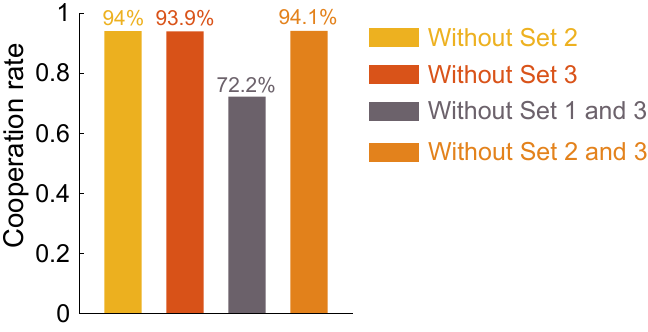}
        \caption[Supplementary information for the ``knock-out'' experiments]{\textbf{$OD$-$TFT$, $DO$-$OFT$ and $OD$-$TFT_C$ exert the preeminent impact on the evolution of cooperation.} To further investigate the effect of $OD$-$TFT$, $DO$-$OFT$, and their behaviorally close variants, we supplement the ``knock-out'' experiments by deleting other combinations of these strategies apart from the combinations shown in Fig.~2 in the main text. Again, we denote the set consisting of $OD$-$TFT$, $DO$-$OFT$, and $OD$-$TFT_C$ as Set 1, the one-bit variants of $OD$-$TFT$ and $DO$-$OFT$ as Set 2 and the two-bit variants as Set 3. The results of ``knock-out'' experiments indicate that repeated optional games support full cooperation if strategies in Set 1 are available, independently of the availability of strategies in Set 2 and 3. If strategies in Set 1 are unavailable but those in Set 2 except $OD$-$TFT_C$ are accessible, the cooperation rate is about $70\%$, regardless of whether strategies in Set 3 are allowed or not. Thus, $OD$-$TFT$, $DO$-$OFT$, and $OD$-$TFT_C$ exert the leading impact on the evolution of cooperation, succeeded by one-bit variants of $OD$-$TFT$ and $DO$-$OFT$, and ultimately by the two-bit variants. The parameters are the same as those in Fig.~2 of the main text.
        }
        \label{Sfig:KnockOutExperimentSupplement}
    \end{figure}
\end{center}
\vspace{\fill}

\clearpage
\vspace*{\fill}
\begin{center}
    \begin{figure}[H]
        \centering
        \includegraphics[width=0.7\textwidth]{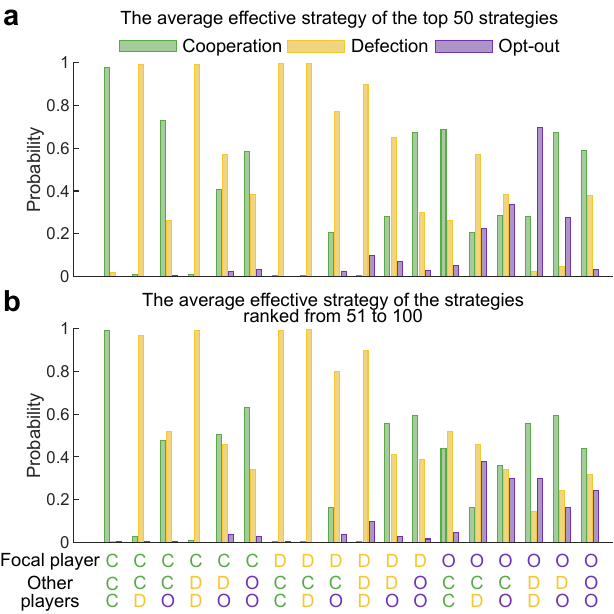}
        \caption[The average effective strategy of the top 100 strategies in repeated optional PGGs with failures of opting out]{\textbf{The dominant strategies ensure that individuals restore cooperation quickly when failures of opting out exist.} To better illustrate the underlying mechanism that repeated optional PGGs with failures promote cooperation, we calculate the average effective strategy of the top 50 strategies (\textbf{a}) and that of the strategies ranked from 51 to 100 (\textbf{b}) in the evolutionary dynamics. These effective strategies behave similarly to $OD$-$TFT$. Besides, they exhibit additional properties to ensure that individuals rebuild cooperation quickly. When one individual successfully opts out after game state $(D,DD)$, most strategies of the top 50 strategies prescribe opt-out after game states $(D,DO)$ and $(O,DD)$, and prescribe cooperation when at least two individuals opt out successfully. In this case, most strategies ranked between 51 and 100 prescribe cooperation after game states $(D,DO)$ and $(O,DD)$, directly restoring cooperation. To obtain the average effective strategies, we perform 1000 independent simulations. In each simulation, $10^5$ mutants are introduced to the population. The failure probability $\alpha$ is $0.9$ and other parameters are the same as those in Fig.~2 of the main text.}
        \label{Sfig:delayed}
    \end{figure}
\end{center}
\vspace{\fill}

\clearpage
\vspace*{\fill}
\begin{center}
    \begin{figure}[H]
        \centering
        \includegraphics[width=1\textwidth]{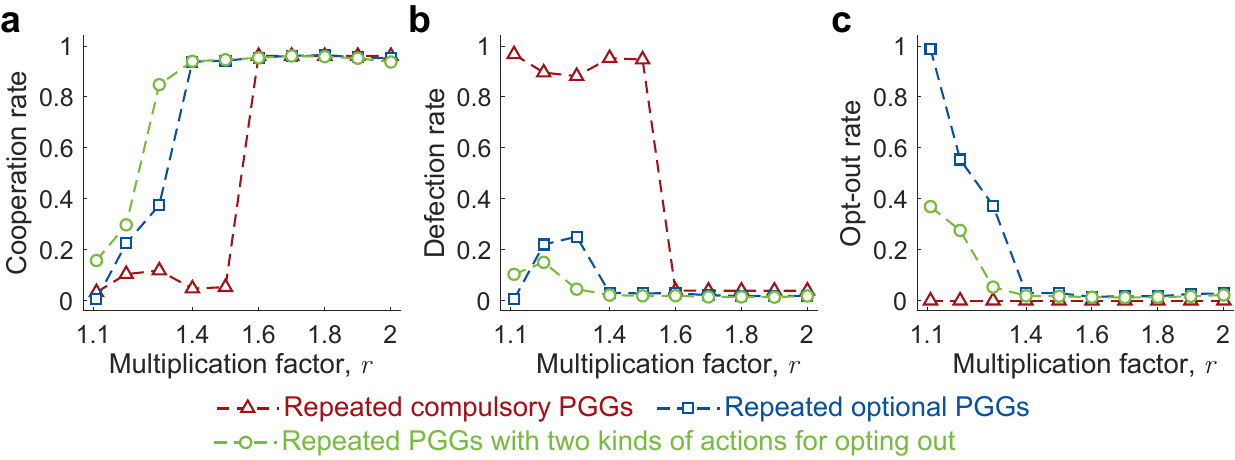}
        \caption[Higher level of cooperation in repeated PGGs with two actions for opting out]{\textbf{Repeated PGGs with two opt-out options further improve the level of cooperation.} Theoretical analysis indicates that reactive strategies $OD$-$TFT$ and $DO$-$OFT$ have similar properties as memory-two strategy $AoN_2$ has. This inspires us to explore whether repeated PGGs with more options can further improve the level of cooperation. To test this conjecture, we introduce repeated PGGs with two kinds of actions for opting out, $O_1$ and $O_2$. An individual who takes action $O_1$ or $O_2$ opts out of the interaction and gets a fixed payoff of $\sigma$. We run $4000$ simulations and compute the average cooperation rate (\textbf{a}), defection rate (\textbf{b}), and opt-out rate (\textbf{c}) over the $4000$ simulations under different multiplication factors. It can be seen that repeated PGGs with more opt-out options give rise to a higher level of cooperation, especially if the multiplication factor is small. In panel \textbf{c}, the opt-out rate of repeated PGGs with two kinds of actions for opting out is the sum of fractions of $O_1$ and $O_2$. The parameters are the same as in Fig. 2 of the main text.}
        \label{Sfig:twoadditionaloptions}
    \end{figure}
\end{center}
\vspace{\fill}

\setcounter{table}{0}
\definecolor{lightgreen}{RGB}{168,210,159}
\definecolor{lightorange}{RGB}{239,211,128}
\definecolor{lightpurple}{RGB}{173,148,193}

\clearpage
\renewcommand{\arraystretch}{2.2}
\setlength\tabcolsep{2pt}
\begin{landscape}
    \begin{table}
        \centering
        \tiny
        \begin{tabular}{c||c|c|c||c|c|c||c|c|c}
            & \multicolumn{3}{c||}{\scriptsize All individuals are in State $0$} & \multicolumn{3}{c||}{\scriptsize All individuals are in State $1$} & \multicolumn{3}{c}{\scriptsize All individuals are in State $2$} \\
            \hline
            & \makecell{Sticking to\\cooperation,\\$\pi_{0C}$} & \makecell{Defect for\\one round,\\$\pi^{(\text{Dev})}_{0D}$} & \makecell{Opt out for\\one round,\\$\pi^{(\text{Dev})}_{0O}$} & \makecell{Cooperate for\\one round,\\$\pi^{(\text{Dev})}_{1C}$} & \makecell{Stick to\\defection,\\$\pi_{1D}$} & \makecell{Opt out for\\one round,\\$\pi^{(\text{Dev})}_{1O}$} & \makecell{Cooperate for\\one round,\\$\pi^{(\text{Dev})}_{2C}$} & \makecell{Defect for\\one round,\\$\pi^{(\text{Dev})}_{2D}$} & \makecell{Stick to\\opt-out,\\$\pi_{2O}$} \\
            \hline
            1 & \multirow{5}{*}{$(r-1)c$} & $(1-\delta)\frac{(n-1)rc}{n}+\frac{\delta^2\sigma}{1+\delta}$ & \multirow{5}{*}{\makecell{$(1-\delta)\sigma$\\+$\delta(r-1)c$}}
            & $(1-\delta)(\frac{rc}{n}-c)+\frac{\delta\sigma}{1+\delta}$ & $\frac{\delta\sigma}{1+\delta}$ & $\frac{\sigma}{1+\delta}$
            & $\frac{\sigma}{1+\delta}$ & $\frac{\sigma}{1+\delta}$ & $\frac{\sigma}{1+\delta}$ \\
            2 & & $(1-\delta)\frac{(n-1)rc}{n}+\delta^2\sigma$ &
            & $(1-\delta)(\frac{rc}{c}-c)+\delta\sigma$ & $\delta\sigma$ & $(1-\delta+\delta^2)\sigma$
            & $\sigma$ & $\sigma$ & $\sigma$ \\
            3 & & $(1-\delta)\frac{(n-1)rc}{n}+\frac{\delta\sigma}{1+\delta}$ &
            & $(1-\delta)(\frac{rc}{n}-c)+\frac{\delta\sigma}{1+\delta}$ & $\frac{\delta\sigma}{1+\delta}$ & $\frac{\sigma}{1+\delta}$
            & $\frac{\sigma}{1+\delta}$ & $\frac{\sigma}{1+\delta}$ & $\frac{\sigma}{1+\delta}$ \\
            4 & & $(1-\delta)\frac{(n-1)rc}{n}+\delta\sigma$ &
            & \makecell{$(1-\delta)(\frac{rc}{n}-c)$\\+$\delta(r-1)c$} & $\delta(r-1)c$ & $(1-\delta)\sigma+\delta^2(r-1)c$
            & $\sigma$ & $\sigma$ & $\sigma$ \\
            5 & & $(1-\delta)\frac{(n-1)rc}{n}+\delta\sigma$ &
            & $(1-\delta)(\frac{rc}{n}-c)+\delta\sigma$ & $\delta\sigma$ & $(1-\delta+\delta^2)\sigma$
            & $\sigma$ & $\sigma$ & $\sigma$\\
            \hline
            6 & \multirow{2}{*}{$(r-1)c$}  & \makecell{$(1-\delta)\left[\frac{(n-1)rc}{n}+\delta^2\sigma\right]$\\+$\delta^3(r-1)c$} & \multirow{2}{*}{\makecell{$(1-\delta)\sigma$\\+$\delta(r-1)c$}}
            & \makecell{$(1-\delta)\left[(\frac{rc}{n}-c)+\delta\sigma\right]$\\+$\delta^2(r-1)c$} & $(1-\delta)\delta\sigma+\delta^2(r-1)c$ & \makecell{$(1-\delta)(1+\delta^2)\sigma$\\+$\delta^3(r-1)c$}
            & $(1-\delta)\sigma+\delta(r-1)c$ & $(1-\delta)\sigma+\delta(r-1)c$ &$(1-\delta)\sigma+\delta(r-1)c$ \\
            7 & & \makecell{$(1-\delta)\left[\frac{(n-1)rc}{n}+\delta\sigma\right]$\\+$\delta^3(r-1)c$} &
            & $(1-\delta)(\frac{rc}{n}-c)+\delta(r-1)c$ & $\delta(r-1)c$ & $(1-\delta)\sigma+\delta^2(r-1)c$
            & $(1-\delta)\sigma+\delta^2(r-1)c$ & $(1-\delta)\sigma+\delta^2(r-1)c$ & $(1-\delta)\sigma+\delta^2(r-1)c$ \\
            \hline
            8 &\multirow{1}{*}{$(r-1)c$} & \makecell{$(1-\delta)\left[\frac{(n-1)rc}{n}+\delta^2\sigma\right]$\\+$\delta^3(r-1)c$} & \makecell{$(1-\delta)\sigma$\\+$\delta(r-1)c$}
            & $(1-\delta)(\frac{rc}{n}-c)+\delta(r-1)c$ & $(1-\delta)\delta\sigma+\delta^2(r-1)c$ & \makecell{$(1-\delta)(1+\delta^2)\sigma$\\$+\delta^3(r-1)c$}
            & $(1-\delta)\sigma+\delta(r-1)c$ & $(1-\delta)\sigma+\delta(r-1)c$ &$(1-\delta)\sigma+\delta(r-1)c$ \\
            \hline
            9 &\multirow{3}{*}{$(r-1)c$}  & \multirow{3}{*}{\makecell{$(1-\delta)\frac{(n-1)rc}{n}$\\+$\delta^2(r-1)c$}} & \multirow{3}{*}{\makecell{$(1-\delta)\sigma$\\+$\delta(r-1)c$}}
            & \multirow{3}{*}{\makecell{$(1-\delta)(\frac{rc}{n}-c)$\\+$\delta(r-1)c$}} & \multirow{3}{*}{$\delta(r-1)c$} & \multirow{3}{*}{$(1-\delta)\sigma+\delta^2(r-1)c$}
            & $(1-\delta)\sigma+\delta(r-1)c$ & $(1-\delta)\sigma+\delta(r-1)c$ & $(1-\delta)\sigma+\delta(r-1)c$ \\
            10 & &  &
            &  & &
            & $(1-\delta)\sigma+\delta^2(r-1)c$ & $(1-\delta)\sigma+\delta^2(r-1)c$ & $(1-\delta)\sigma+\delta^2(r-1)c$ \\
            11 & &  &
            &  &  &
            & $\sigma$ & $\sigma$ & $\sigma$ \\
            \hline
            12 & $(r-1)c$ & \makecell{$(1-\delta)\frac{(n-1)rc}{n}$\\+$\delta^2(r-1)c$} & \makecell{$(1-\delta)\sigma$\\+$\delta(r-1)c$}
            & \makecell{$(1-\delta)(\frac{rc}{n}-c)$\\+$\delta(r-1)c$} & $\delta(r-1)c$ & $(1-\delta^2)\sigma+\delta^2(r-1)c$
            & $(1-\delta)\sigma+\delta(r-1)c$ & $(1-\delta)\sigma+\delta(r-1)c$ & $(1-\delta)\sigma+\delta(r-1)c$ \\
            \hline
            13 &\multirow{2}{*}{$(r-1)c$} & \multirow{2}{*}{\makecell{$(1-\delta)\left[\frac{(n-1)rc}{n}+\delta\sigma\right]$\\+$\delta^2(r-1)c$}} & \multirow{2}{*}{\makecell{$(1-\delta)\sigma$\\+$\delta(r-1)c$}}
            & $(1-\delta)(\frac{rc}{n}-c)+\delta(r-1)c$ & $\delta(r-1)c$ & $(1-\delta)\sigma+\delta^2(r-1)c$
            & \multirow{2}{*}{\makecell{$(1-\delta)\sigma$\\+$\delta(r-1)c$}} & \multirow{2}{*}{\makecell{$(1-\delta)\sigma$\\+$\delta(r-1)c$}} & \multirow{2}{*}{\makecell{$(1-\delta)\sigma$\\+$\delta(r-1)c$}} \\
            14 & & &
            & \makecell{$(1-\delta)(\frac{rc}{n}-c+\delta\sigma)$\\+$\delta^2(r-1)c$} & $(1-\delta)\delta\sigma+\delta^2(r-1)c$ & \makecell{$(1-\delta)(1+\delta^2)\sigma$\\$+\delta^3(r-1)c$}
            & & & \\
            \hline
            15 &$(r-1)c$ & \makecell{$(1-\delta)\left[\frac{(n-1)rc}{n}+\delta\sigma\right]$\\+$\delta^2(r-1)c$} &     \makecell{$(1-\delta)\sigma$\\+$\delta(r-1)c$}
            & \makecell{$(1-\delta)(\frac{rc}{n}-c)$\\+$\delta(r-1)c$} & $\delta(r-1)c$ & \makecell{$(1-\delta^2)\sigma$\\+$\delta^2(r-1)c$}
            & $(1-\delta)\sigma+\delta(r-1)c$ & $(1-\delta)\sigma+\delta(r-1)c$ & \makecell{$(1-\delta)\sigma$\\+$\delta(r-1)c$} \\
            \hline
            16 &$(r-1)c$ & \makecell{$(1-\delta)\left[\frac{(n-1)rc}{n}+\delta\sigma\right]$\\+$\delta^2(r-1)c$} &   \makecell{$(1-\delta)\sigma$\\+$\delta(r-1)c$}
            & \makecell{$(1-\delta)(\frac{rc}{n}-c)$\\+$\delta(r-1)c$} &$(1-\delta)\delta\sigma+\delta^2(r-1)c$ & \makecell{$(1-\delta)(1+\delta^2)\sigma$\\$+\delta^3(r-1)c$}
            &  $(1-\delta)\sigma+\delta(r-1)c$ & $(1-\delta)\sigma+\delta(r-1)c$ & \makecell{$(1-\delta)\sigma$\\+$\delta(r-1)c$} \\
        \end{tabular}
        \caption[Supplementary information for the proof of Theorem \ref{thm:cooperation}]{\textbf{The long-term payoffs obtained by sticking to each strategy in Supplementary Fig.~\ref{Sfig:feasiblerangecooperation}\textbf{a} and by deviating for one round are listed.} This table is the supplementary information for the proof of Theorem \ref{thm:cooperation}. For the strategy with $D/O$ and $*$, the table exhibits the maximal long-term payoff. Each strategy is an equilibrium if and only if $\pi_{0C}\geq\pi^{(\text{Dev})}_{0D}$, $\pi_{0C}\geq\pi^{(\text{Dev})}_{0O}$, $\pi_{1D}\geq\pi^{(\text{Dev})}_{1C}$, $\pi_{1D}\geq\pi^{(\text{Dev})}_{1O}$, $\pi_{2O}\geq\pi^{(\text{Dev})}_{2C}$ and $\pi_{2O}\geq\pi^{(\text{Dev})}_{2D}$. Solving these inequalities and taking limit $\delta\rightarrow1$, we get the feasible range of the strategies in Supplementary Fig.~\ref{Sfig:feasiblerangecooperation}\textbf{a} to become equilibria.}
        \label{tab:proofthm1}
    \end{table}
\end{landscape}

\renewcommand{\arraystretch}{3.5}
\setlength\tabcolsep{2pt}
\begin{landscape}
    \begin{table}
        \centering
        \tiny
        \begin{tabular}{c||c|c|c||c|c|c||c|c|c}
            & \multicolumn{3}{c||}{\scriptsize All individuals are in State $0$} & \multicolumn{3}{c||}{\scriptsize All individuals are in State $1$} & \multicolumn{3}{c}{\scriptsize All individuals are in State $2$} \\
            \hline
            & \makecell{Sticking to\\cooperation,\\$\pi_{0C}$} & \makecell{Defect for\\one round,\\$\pi^{(\text{Dev})}_{0D}$} & \makecell{Opt out for\\one round,\\$\pi^{(\text{Dev})}_{0O}$} & \makecell{Cooperate for\\one round,\\$\pi^{(\text{Dev})}_{1C}$} & \makecell{Stick to\\defection,\\$\pi_{1D}$} & \makecell{Opt out for\\one round,\\$\pi^{(\text{Dev})}_{1O}$} & \makecell{Cooperate for\\one round,\\$\pi^{(\text{Dev})}_{2C}$} & \makecell{Defect for\\one round,\\$\pi^{(\text{Dev})}_{2D}$} & \makecell{Stick to\\opt-out,\\$\pi_{2O}$} \\
            \hline
        1 & \multirow{2}{*}{$(r-1)c$} & $(1-\delta)\frac{(n-1)rc}{n}+\delta^2(r-1)c$ & \multirow{2}{*}{$(1-\delta)\sigma+\delta(r-1)c$}
          & $(1-\delta)(\frac{rc}{n}-c)+\delta(r-1)c$ & $\delta(r-1)c$ & $(1-\delta)\sigma+\delta^2(r-1)c$
          & \multirow{2}{*}{$\sigma$} & \multirow{2}{*}{$\sigma$} & \multirow{2}{*}{$\sigma$} \\
        2 & & $(1-\delta)\frac{(n-1)rc}{n}+\delta\sigma$ &
          & $(1-\delta)(\frac{rc}{n}-c)+\delta\sigma$ & $\delta\sigma$ & $(1-\delta+\delta^2)\sigma$
          & & & \\
        \hline
        3 & $(1-\delta)(r-1)c+\delta\sigma$ & \makecell{$(1-\delta)(\frac{(n-1)rc}{n}$\\$+\delta^2(r-1)c)+\delta^3\sigma$} & $(1-\delta)(\sigma+\delta(r-1)c)+\delta^2\sigma$
          & \makecell{$(1-\delta)((\frac{rc}{n}-c)$\\$+\delta(r-1)c)+\delta^2\sigma$} & $(1-\delta)\delta(r-1)c+\delta^2\sigma$ & \makecell{$(1-\delta)(\sigma$\\$+\delta^2(r-1)c)+\delta^3\sigma$}
          & $\sigma$ & $\sigma$ & $\sigma$ \\
          \hline
        4 & $(1-\delta)(r-1)c+\delta\sigma$ & $(1-\delta)\frac{(n-1)rc}{n}+\delta^2\sigma$ & $(1-\delta)(\sigma+\delta(r-1)c)+\delta^2\sigma$
          & \makecell{$(1-\delta)((\frac{rc}{n}-c)$\\$+\delta(r-1)c)+\delta^2\sigma$} & $\delta\sigma$ & $(1-\delta+\delta^2)\sigma$
          & $\sigma$ & $\sigma$ & $\sigma$ \\
        \hline
        5 & $(1-\delta)(r-1)c+\delta\sigma$ & $(1-\delta)\frac{(n-1)rc}{n}+\delta^2\sigma$ & $(1-\delta)(\sigma+\delta(r-1)c)+\delta^2\sigma$
           & $(1-\delta)(\frac{rc}{n}-c)+\delta\sigma$ & $\delta\sigma$ & $(1-\delta+\delta^2)\sigma$
           & $\sigma$ & $\sigma$ & $\sigma$ \\
        \end{tabular}
        \caption[Supplementary information for the proof of Theorem \ref{thm:optout}]{\textbf{The long-term payoffs obtained by sticking to each strategy in Supplementary Fig.~\ref{Sfig:feasiblerangeoptout}\textbf{a} and by deviating for one round are listed.} This table is the supplementary information for the proof of Theorem \ref{thm:optout}. For the strategy with $D/O$ and $*$, the table exhibits the maximal long-term payoff. Each strategy is an equilibrium if and only if $\pi_{0C}\geq\pi^{(\text{Dev})}_{0D}$, $\pi_{0C}\geq\pi^{(\text{Dev})}_{0O}$, $\pi_{1D}\geq\pi^{(\text{Dev})}_{1C}$, $\pi_{1D}\geq\pi^{(\text{Dev})}_{1O}$, $\pi_{2O}\geq\pi^{(\text{Dev})}_{2C}$ and $\pi_{2O}\geq\pi^{(\text{Dev})}_{2D}$. Solving these inequalities and taking limit $\delta\rightarrow1$, we get the feasible range for strategies in Supplementary Fig.~\ref{Sfig:feasiblerangeoptout}\textbf{a} to become equilibria.}
        \label{tab:proofthmoptout}
    \end{table}
\end{landscape}

\end{document}